\newtheorem{theorem}{Theorem}%[section]
\newtheorem{lemma}[theorem]{Lemma}
\newtheorem{claim}{Claim}[theorem]
\newtheorem{observation}[theorem]{Observation}
\newtheorem{corollary}[theorem]{Corollary}
\theoremstyle{definition}
\newtheorem*{definition}{Definition}
\newcommand\mcut{\textsc{Max Cut }}
\newcommand{\mypath}[1]{\langle#1\rangle}
\newcommand{\optdir}[2]{A_{#1}^{#2}}
\newenvironment{claimproof}[1]{\par\noindent\emph{Proof:}\space#1}{{\leavevmode\unskip\penalty9999 \hbox{}\nobreak\hfill\quad\hbox{$\lrcorner$}}\medskip}
\title{Computing Subset Feedback Vertex Set via Leafage\thanks{Research supported by the Hellenic Foundation for Research and Innovation (H.F.R.I.) under the ``First Call for H.F.R.I. Research Projects to support Faculty members and Researchers and the procurement of high-cost research equipment grant'', Project FANTA (eFficient Algorithms for NeTwork Analysis), number HFRI-FM17-431.}}
\author[1]{Charis Papadopoulos}
\author[2]{Spyridon Tzimas}
\affil[1]{Department of Mathematics, University of Ioannina, Greece\\
 \texttt{charis@uoi.gr}}
\affil[2]{Department of Mathematics, University of Ioannina, Greece\\
 \texttt{roytzimas@hotmail.com}}
\date{}
\begin{document}
\maketitle

\begin{abstract}
Chordal graphs are characterized as the intersection graphs of subtrees in a tree and such a representation is known as the tree model.
Restricting the characterization results in well-known subclasses of chordal graphs such as interval graphs or split graphs.
A typical example that behaves computationally different in subclasses of chordal graph is the \textsc{Subset Feedback Vertex Set} (SFVS) problem:
given a vertex-weighted graph $G=(V,E)$ and a set $S\subseteq V$, the \textsc{Subset Feedback Vertex Set} (SFVS) problem asks for a vertex set of minimum weight that intersects all cycles containing a vertex of $S$.
SFVS is known to be polynomial-time solvable on interval graphs, whereas SFVS remains \NP-complete on split graphs and, consequently, on chordal graphs.
Towards a better understanding of the complexity of SFVS on subclasses of chordal graphs,
we exploit structural properties of a tree model in order to cope with the hardness of SFVS.
Here we consider variants of the \emph{leafage} that measures the minimum number of leaves in a tree model.
We show that SFVS can be solved in polynomial time for every chordal graph with bounded leafage.
% By combining recent algorithms related to mim-width, there is an algorithm for SFVS that, given a chordal graph on $n$ vertices with leafage $\ell$, runs in $n^{O(\ell^2)}$ time.
In particular, given a chordal graph on $n$ vertices with leafage $\ell$, we provide an algorithm for SFVS with running time $n^{O(\ell)}$.
%% and thus improving upon the $n^{O(\ell^2)}$-time algorithm via mim-width.
%% showing that SFVS is in XP paramerized by $\ell$.
%%
% should we mention a $n^{O(\ell^2)}$-time algorithm is achieved through the mim-width approach?
We complement our result by showing that SFVS is \W[1]-hard parameterized by $\ell$.
Pushing further our positive result, it is natural to consider a slight generalization of leafage, the \emph{vertex leafage}, which measures the
smallest number among the maximum number of leaves of all subtrees in a tree model.
However, we show that it is unlikely to obtain a similar result, as we prove that SFVS remains \NP-complete on undirected path graphs, i.e., graphs having vertex leafage at most two.  %% showing that SFVS is para-\NP-complete parameterized by the vertex leafage.
Moreover, we strengthen previously-known polynomial-time algorithm for SFVS on rooted path graphs that form a proper subclass of undirected path graphs and graphs of mim-width one.
\end{abstract}

%\keywords{Subset feedback vertex set; leafage; intersection graphs}

\section{Introduction}
Several fundamental optimization problems are known to be intractable on chordal graphs, however they admit polynomial time algorithms when restricted to a proper subclass of chordal graphs such as interval graphs.
Typical examples of this type of problems are domination or induced path problems \cite{BB86,BH82,CorneilP84,HeggernesHLS15,IoannidouMN11,NatarajanS96}.
%;
%see also the books \cite{graph:classes:brandstadt:1999,graph:classes:Go04,McKeeMcMor99,Spinrad03} for an overview and a thorough exposition.
%%%cite: domination chordal \cite{BH82,CorneilP84} and longest induced path \cite{BB86,IoannidouMN11} and disjoint induced path \cite{NatarajanS96,HeggernesHLS15}
Towards a better understanding of why many intractable problems on chordal graphs admit polynomial time algorithms on interval graphs, we consider the algorithmic usage of the structural parameter named leafage.
%% among others
Leafage, introduced by Lin et al. \cite{LinMW98}, is a graph parameter that captures how close is a chordal graph of being an interval graph.
As it concerns chordal graphs, leafage essentially measures the smallest number of leaves in a clique tree, an intersection representation of the given graph \cite{G74}.
Here we are concerned with the \textsc{Subset Feedback Vertex Set} problem, SFVS for short:
given a vertex-weighted graph and a set $S$ of its vertices, compute a vertex set of minimum weighted size that intersects all cycles containing a vertex of $S$.
Although \textsc{Subset Feedback Vertex Set} does not fall to the themes of domination or induced path problems, it is known to be \NP-complete on chordal graphs \cite{FominHKPV14}, whereas it becomes polynomial-time solvable on interval graphs \cite{PapT19}.
Thus our research study concerns to what extent the structure of the underlying tree representation influences the computational complexity of \textsc{Subset Feedback Vertex Set}.

An interesting remark concerning \textsc{Subset Feedback Vertex Set}, is the fact that its unweighted and weighted variants behave computationally different on hereditary graph classes.
For example, \textsc{Subset Feedback Vertex Set} is \NP-complete on $H$-free graphs for some fixed graphs $H$, while its unweighted variant admits polynomial time algorithm on the same class of graphs \cite{abs-2007-14514,PapadopoulosT20}.
Thus the unweighted and weighted variants of \textsc{Subset Feedback Vertex Set} do not align.
This comes in contrast even to the original problem of \textsc{Feedback Vertex Set} which is obtained whenever $S=V(G)$.
\textsc{Subset Feedback Vertex Set} remains \NP-complete on bipartite graphs \cite{Yannakakis81a} and planar graphs~\cite{GJ}, as a generalization of \textsc{Feedback Vertex Set}.
Notable differences between the two latter problems regarding their complexity status is the class of split graphs and $4P_1$-free graphs
for which \textsc{Subset Feedback Vertex Set} is \NP-complete \cite{FominHKPV14,PapadopoulosT20}, as opposed to the \textsc{Feedback Vertex Set} problem \cite{fvs:chord:corneil:1988,Spinrad03,abs-2007-14514}.
%% lemma 16 in abs-2007-14514 shows that weighted FVS is poly on 4P_1-free graphs.
Inspired by the \NP-completeness on chordal graphs, \textsc{Subset Feedback Vertex Set} restricted on (subclasses of) chordal graphs has attracted several researchers to obtain faster, still exponential-time, algorithms \cite{GolovachHKS14,PhilipRST19}.

On the positive side, \textsc{Subset Feedback Vertex Set} can be solved in polynomial time on restricted graph classes \cite{abs-2007-14514,BJPPwg20,PapT19,PapadopoulosT20}.
%Among the stated classes, we stress that interval graphs is the only known subclass of chordal graphs for which \textsc{Subset Feedback Vertex Set} is solved in polynomial time.
Related to the structural parameter mim-width, Bergougnoux et al. \cite{BergPTwg20} recently proposed an $n^{O(w^2)}$-time algorithm that solves \textsc{Subset Feedback Vertex Set}
given a decomposition of the input graph of mim-width $w$.
As leaf power graphs admit a decomposition of mim-width one \cite{JaffkeKST19}, from the later algorithm
\textsc{Subset Feedback Vertex Set} can be solved in polynomial time on leaf power graphs if an intersection model is given as input.
%%that form a subclass of chordal graphs
However, to the best of our knowledge, it is not known whether the intersection model of a leaf power graph can be constructed in polynomial time.
Moreover, even for graphs of mim-width one that do admit an efficient construction of the corresponding decomposition,
the exponent of the running time given in \cite{BergPTwg20} is relatively high.

%A natural parameter to measure the structure of the tree representation is the leafage of the given chordal graph.
Habib and Stacho \cite{HabibS09} showed that the leafage of a connected chordal graph can be computed in polynomial time.
Their described algorithm also constructs a corresponding clique tree with the minimum number of leaves.
Regarding other problems that behave well with the leafage, we mention the \textsc{Minimum Dominating Set} problem for which Fomin et al. \cite{FominGR20} showed that the problem is \FPT{} parameterized by the leafage of the given graph.
Here we show that \textsc{Subset Feedback Vertex Set} is polynomial-time solvable for every chordal graph with bounded leafage.
In particular, given a chordal graph with a tree model having $\ell$ leaves, our algorithm runs in $O(\ell n^{2\ell+1})$ time.
Thus, by combining the algorithm of Habib and Stacho \cite{HabibS09}, we deduce that \textsc{Subset Feedback Vertex Set} is in \XP, parameterized by the leafage.

One advantage of leafage over mim-width is that we can compute the leafage of a chordal graph in polynomial time, whereas we do not know how to compute in polynomial time the mim-width of a chordal graph.
However we note that a graph of bounded leafage implies a graph of bounded mim-width and, further, a decomposition of bounded mim-width can be computed in polynomial time \cite{FominGR20}.
This can be seen through the notion of $H$-graphs which are exactly the intersection graphs of connected subgraphs of some subdivision of a fixed graph $H$.
The intersection model of subtrees of a tree $T$ having $\ell$ leaves is a $T'$-graph where $T'$ is obtained from $T$ by contracting nodes of degree two.
Thus the size of $T'$ is at most $2 \ell$, since $T$ has $\ell$ leaves.
Moreover, given an $H$-graph and its intersection model, a (linear) decomposition of mim-width at most $2 |E(H)|$ can be computed in polynomial time \cite{FominGR20}.
Therefore, given a graph of leafage $\ell$, there is a polynomial-time algorithm that computes a decomposition of mim-width $O(\ell)$.
Combined with the algorithm via mim-width \cite{BergPTwg20}, one can solve \textsc{Subset Feedback Vertex Set} in time $n^{O({\ell}^{2})}$ on graphs having leafage $\ell$.
Notably, our $n^{O(\ell)}$-time algorithm is a non-trivial improvement on the running time obtained from the mim-width approach.

We complement our algorithmic result by showing that \textsc{Subset Feedback Vertex Set} is W[1]-hard parameterized by the leafage of a chordal graph.
Thus we can hardly avoid the dependence of the exponent in the stated running time.
Our reduction is inspired by the W[1]-hardness of \textsc{Feedback Vertex Set} parameterized by the mim-width given in \cite{JaffkeKT20}.
However we note that our result holds on graphs with arbitrary vertex weights and we are not unaware if the unweighted variant of \textsc{Subset Feedback Vertex Set} admits the same complexity behavior.

%%% Here speak about W[]-hardness....
%\todo[inline]{Here speak about W[]-hardness....}

Our algorithm works on an expanded tree model that is obtained from the given tree model and maintains all intersecting information without increasing the number of leaves.
Then in a bottom-up dynamic programming fashion, we visit every node of the expanded tree model in order to compute partial solutions.
At each intermediate step, we store all necessary information of subsets of vertices that are of size $O(\ell)$.
As a byproduct of our dynamic programming scheme and the expanded tree model,
we show how our approach can be extended in order to handle rooted path graphs.
Rooted path graphs are the intersection graphs of rooted paths in a rooted tree.
They form a subclass of leaf powers and have unbounded leafage (through their underlying tree model).
Although rooted path graphs admit a decomposition of mim-width one \cite{JaffkeKST19} and such a decomposition can be constructed in polynomial time \cite{Dietz84,Gavril75}, %% here to be corrected!!
the running time obtained through the bounded mim-width approach is rather unpractical, as it requires to store a table of size $O(n^{13})$ even in this particular case \cite{BergPTwg20}.
%% p.21 (p.18 arxiv): |I| < n^{8+5}
By analyzing further subsets of vertices at each intermediate step, we manage to derive an algorithm for \textsc{Subset Feedback Vertex Set} on rooted path graphs that runs in $O(n^2m)$ time.
Observe that the stated running time is comparable to the $O(nm)$-time algorithm on interval graphs \cite{PapT19} and interval graphs form a proper subclass of rooted path graphs.

%\begin{figure}[t]
%\centering
%\includegraphics[scale= 0.74]{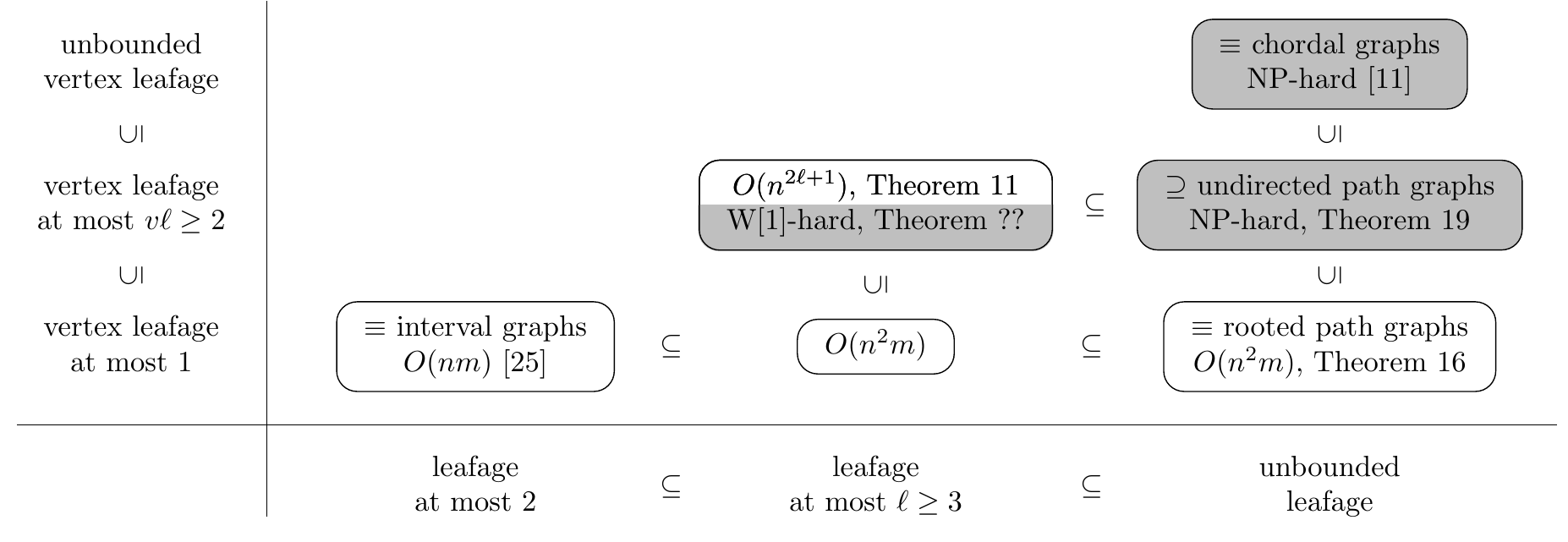}
%\caption{Complexity of SFVS parameterized by leafage and vertex leafage.}\label{fig:leafage}
%\end{figure}

Moreover, inspired by the algorithm on bounded leafage graphs we consider its natural generalization concerning the \emph{vertex leafage} of a graph.
Chaplick and Stacho \cite{ChaplickS14} introduced the vertex leafage of a graph $G$ as the smallest number $k$ such that there exists a tree model for $G$ in which every subtree corresponding to a vertex of $G$ has at most $k$ leaves.
As leafage measures the closeness to interval graphs (graphs with leafage at most two), vertex leafage measures the closeness to undirected path graphs which are the intersection graphs of paths in a tree (graphs with vertex leafage at most two).
We prove that the unweighted variant of \textsc{Subset Feedback Vertex Set} is \NP-complete on undirected path graphs and, thus, the problem is para-\NP-complete parameterized by the vertex leafage.
An interesting remark of our \NP-completeness proof is that our reduction comes from the \textsc{Max Cut} problem as opposed to known reductions for
\textsc{Subset Feedback Vertex Set} which are usually based on, more natural, covering problems \cite{FominHKPV14,PapadopoulosT20}.
Thus we obtain a complexity dichotomy of the problem restricted on the two comparable classes of rooted and undirected path graphs.
Our findings are summarized in Figure~\ref{fig:leafage}.
%\footnote{Our findings are summarized in Figure~\ref{fig:leafage} given in Appendix~\ref{sec:findings}.}.
%% We should mention that vertex leafage concerns even the rooted subtrees obtained from the given (undirected) tree.

%% remember: W[1]-hardness on weighted chordal graphs, and NP-hardness on unweighted undirected path graphs

%% then:
%% section vertex leafage: needs a paragraph with definitions...(and more general with tree model?)

\begin{figure}
%\begin{center}
\scalebox{0.84}{
\centering
\hspace*{-0.4in}
\begin{tikzpicture}
\tikzstyle{np}=[fill=lightgray,text=black]
\tikzstyle{p}=[fill=none,text=black]
\tikzstyle{e}=[draw=none]
\matrix (L) [matrix of math nodes, nodes={draw,rounded corners=7pt},row sep=0.6cm,column sep=1.0cm,ampersand replacement=\&]{
\node[e](vlinfty){\begin{tabular}{c}
unbounded\\
vertex leafage
\end{tabular}};\&\&\&\node[np](chordal){\begin{tabular}{c}
$\equiv$ chordal graphs\\
NP-hard~\cite{FominHKPV14}
\end{tabular}};\\
%line 2
\node[e](vl){\begin{tabular}{c}
vertex leafage\\
at most $v\ell \geq 2$
\end{tabular}};\&\&\node[p,text=white](leaf){\begin{tabular}{c}
W[1]-hard, {Theorem~\ref{theo:whard}}
\\[4pt]
\textrm{$O(n^{2\ell+1})$, {Theorem~\ref{theo:leafage}}}
\end{tabular}};
\filldraw[lightgray][] (leaf.west)--(leaf.east)[rounded corners=7pt]--(leaf.north east)[rounded corners=7pt]--(leaf.north west)[rounded corners=7pt]--(leaf.west);
\node[p](leaf){\begin{tabular}{c}
W[1]-hard, {Theorem~\ref{theo:whard}}
\\[4pt]
\textrm{$O(n^{2\ell+1})$, {Theorem~\ref{theo:leafage}}}
\end{tabular}};
\&\node[np](path){\begin{tabular}{c}
$\supseteq$ undirected path graphs\\
NP-hard, {Theorem~\ref{theo:npundirected}}
\end{tabular}};\\
%line 3
\node[e](vl1){\begin{tabular}{c}
vertex leafage \\
at most $1$
\end{tabular}};\&\node[p](interval){\begin{tabular}{c}
$\equiv$ interval graphs\\
$O(nm)$ \cite{PapadopoulosT20}
\end{tabular}};\&\node[p](lvl1){\begin{tabular}{c}
$O(n^{2}m)$
\end{tabular}};\&\node[p](dirpath){\begin{tabular}{c}
$\equiv$ rooted path graphs\\
$O(n^{2}m)$, {Theorem~\ref{theo:directedP}}
\end{tabular}};\\ %\hline
%line 4
\&\node[e](l1){\begin{tabular}{c}
leafage \\
at most $2$
\end{tabular}};\&\node[e](l){\begin{tabular}{c}
leafage\\
at most $\ell \geq 3$
\end{tabular}};\&\node[e](linfty){\begin{tabular}{c}
unbounded\\
leafage
\end{tabular}};\\
};
\draw (-9.0,-1.8) -- (9.5,-1.8);
\draw (-6,-2.9) -- (-6,3.3);
\tikzstyle{every node}=[sloped,fill=white]
\draw[e](interval)--++(0:2.35)node{$\subseteq$}(lvl1)--++(0:2.6)node{$\subseteq$}(dirpath)--node{$\subseteq$}(path)--node{$\subseteq$}(chordal)
(lvl1)--node{$\subseteq$}(leaf)--node{$\subseteq$}(path)
(l1)--++(0:2.35)node{$\subseteq$}(l)--++(0:2.6)node{$\subseteq$}(linfty)
(vl1)--node{$\subseteq$}(vl)--node{$\subseteq$}(vlinfty);
\end{tikzpicture}
}
%\end{center}
\caption{Computational complexity of the SFVS problem parameterized by leafage and vertex leafage.}\label{fig:leafage}
\end{figure}
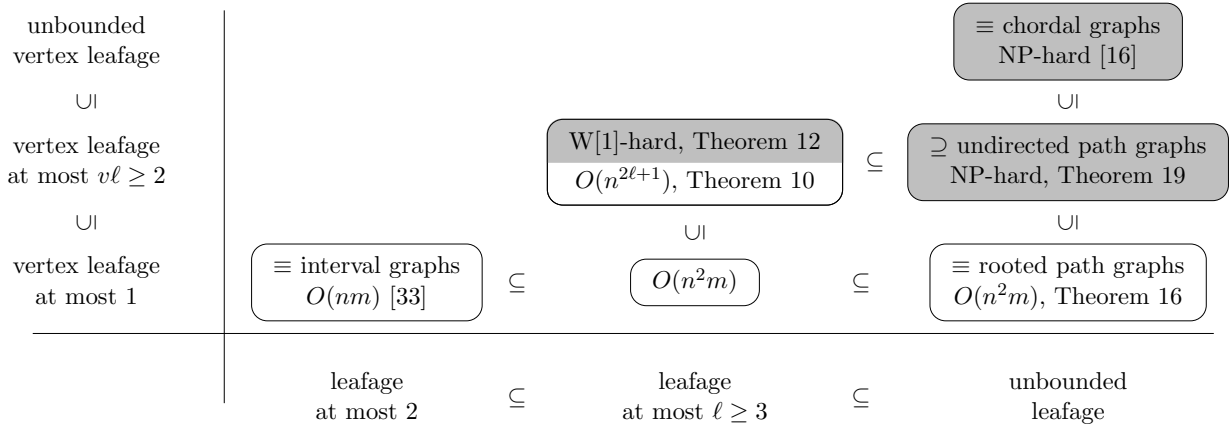

\section{Preliminaries}
All graphs considered here are finite undirected graphs without loops and multiple edges.
We refer to the textbook by Bondy and Murty~\cite{Bondy} for any undefined graph terminology and to the recent book of \cite{CyganFKLMPPS15} for the introduction to Parameterized Complexity.
For a positive integer $p$, we use $[p]$ to denote the set of integers $\{1, \ldots, p\}$.
For a graph $G=(V_G, E_G)$, we use $V_G$ and $E_G$ to denote the set of vertices and edges, respectively.
We use $n$ to denote the number of vertices of a graph and use $m$ for the number of edges.
Given $x\in V_G$, we denote by $N_G(x)$ the neighborhood of $x$.
The \emph{degree} of $x$ is the number of edges incident to $x$.
%%For a set $X\subset V_G$, $N_G(X)$ denotes the set of vertices in $V_G\setminus X$ that have at least one neighbor in $X$.
Given $X\subseteq V_G$, we denote by $G-X$ the graph obtained from $G$ by the removal of the vertices of $X$.
If $X=\{u\}$, we also write $G-u$. The \emph{subgraph induced by $X$} is denoted by $G[X]$, and has $X$ as its vertex set and $\{uv~|~u,v\in X\mbox{ and }uv\in E_G\}$ as its edge set.
A \emph{clique} is a set $K\subseteq V_G$ such that $G[K]$ is a complete graph.

Given a collection $\mathcal{C}$ of sets, the graph $G=(\mathcal{C},\{\{X,Y\}:X,Y \in \mathcal{C} \text{ and }X\cap Y\neq\emptyset\})$ is called \emph{the intersection graph of} $\mathcal{C}$.
Structural properties and recognition algorithms are known for intersection graphs of (directed) paths in (rooted) trees \cite{Chaplick19,Monma86,Panda01}.
%For a graph $G$, any collection $\mathcal{C}$ of sets such that $G$ is its intersection graph is called an \emph{intersection model of} $G$.
Depending on the collection $\mathcal{C}$, we say that a graph is
%A graph is
\begin{itemize}
\item \emph{chordal} if $\mathcal{C}$ is a collection of subtrees of a tree,
\item \emph{undirected path} if $\mathcal{C}$ is a collection of paths of a tree,
\item \emph{rooted path} if $\mathcal{C}$ is a collection of rooted paths of a rooted tree, and
\item \emph{interval} if $\mathcal{C}$ is a collection of subpaths of a path.
\end{itemize}
%% ptolemaic?
%Structural properties and recognition algorithms are known for notions that are closely related to intersection graphs of (directed) paths in (rooted) trees \cite{Chaplick19,Monma86,Panda01}.
For any undirected tree $T$, we use $L(T)$ to denote the set of its leaves, i.e., the set of nodes of $T$ having degree at most one.
If $T$ contains only one node then we let $L(T)=\emptyset$.
Let $T$ be a rooted tree.
We assume that the edges of $T$ are directed towards the root. %, so that every node besides the root has out-degree one.
%For each directed edge $(u, v)$ in $T$, we say that $u$ is a {\em child} of $v$ and that $v$ is a {\em parent} of $u$ (and we denote it by $t(u)$).
%Every vertex except the root has a unique parent.
If there is a (directed) path from node $v$ to node $w$ in $T$, we say that $v$ is a {\em descendant} of $w$ and that $w$ is an {\em ancestor} of $v$. %, and we denote this path by $T[v,w]$.
The leaves of a rooted tree $T$ are exactly the nodes of $T$ having out-degree one and in-degree zero.
Observe that for an undirected tree $T$ with at least one edge we have $|L(T)|\geq 2$, whereas in a rooted tree $T$ with at least one edge $|L(T)|\geq 1$ holds.

A binary relation, denoted by $\leq$, on a set $V$ is called \emph{partial order} if it is transitive and anti-symmetric.
For a partial order $\leq$ on a set $V$, we say that two elements $x$ and $y$ of $V$ are \emph{comparable} if $x \leq y$ or $y \leq x$; otherwise, $x$ and $y$ are called \emph{incomparable}.
If $x \leq y$ and $x \neq y$ then we simply write $x < y$.
Given $X,Y \subseteq V$, we write $X \leq Y$ if for any $x \in X$ and $y \in Y$, we have $x \leq y$;
if $X$ and $Y$ are disjoint then $X \leq Y$ is denoted by $X < Y$.
Given a rooted tree $T$, we define a partial order on the nodes of $T$ as follows: %$u \leq_T v$ if $u$ is an ancestor of $v$.
$x\leq_T y\Leftrightarrow x$ is a descendant of $y$.
It is not difficult to see that if $x \leq_T y$ and $x \leq_T z$ then $y$ and $z$ are comparable, as $T$ is a rooted tree.

\paragraph*{Leafage and vertex leafage} A \emph{tree model} of a graph $G=(V_G,E_G)$ is a pair $(T, \{T_v\}_{v\in V_G})$ where $T$ is a tree, called a \emph{host tree}\footnote{The host tree is also known as a \emph{clique tree}, usually when we are concerned with the maximal cliques of a chordal graph \cite{G74}.},
each $T_v$ is a subtree of $T$, and $uv \in E_G$ if and only if $V(T_u) \cap V(T_v) \neq \emptyset$.
We say that a tree model $(T, \{T_v\}_{v\in V_G})$ \emph{realizes} a graph $H$ if its corresponding graph $G$ is isomorphic to $H$.
It is known that a graph is chordal if and only if it admits a tree model \cite{B74,G74}.
The tree model of a chordal graph is not necessarily unique.
The \emph{leafage} of a chordal graph $G$, denoted by $\ell(G)$, is the minimum number of leaves of the host tree among all tree models that realize $G$, that is,
$\ell(G)$ is the smallest integer $\ell$ such that there exists a tree model $(T, \{T_v\}_{v\in V_G})$ of $G$ with $\ell = |L(T)|$ \cite{LinMW98}.
Moreover, every chordal graph $G$ admits a tree model for which its host tree $T$ has the minimum $|L(T)|$ and $|V(T)| \leq n$ \cite{ChaplickS14,HabibS09};
such a tree model can be constructed in $O(n^3)$ time \cite{HabibS09}.
Thus the leafage $\ell(G)$ of a chordal graph $G$ is computable in polynomial time.

A generalization of leafage is the \emph{vertex leafage} introduced by Chaplick and Stacho \cite{ChaplickS14}.
The vertex leafage of a chordal graph $G$, denoted by $v\ell(G)$, is the smallest integer $k$ such that
there exists a tree mode $(T, \{T_v\}_{v\in V_G})$ of $G$ where $|L(T_v)| \leq k$ for all $v \in V_G$.
Clearly, we have $v\ell(G) \leq \ell(G)$. %, but the converse is not necessarily true.

Although leafage was originally introduced for connected chordal graphs, as opposed to the vertex leafage,
hereafter we relax the connectedness restriction on leafage to avoid confusion between the two notions and
we assume that the considered tree model realizes any chordal graph.
Moreover, we will impose that the host tree $T$ is a rooted tree without affecting structural and algorithmic consequences.
Under these terms, observe that
$\ell(G) = 0$ iff $G$ is a disjoint union of cliques,
$\ell(G) \leq 2$ iff $G$ is an interval graph,
$v\ell(G) \leq 1$ iff $G$ is a rooted path graph, and
$v\ell(G) \leq 2$ iff $G$ is an undirected path graph.

\paragraph*{$S$-forests and $S$-triangles}
By an induced cycle of $G$ we mean a chordless cycle. %.
A \emph{triangle} is a cycle on $3$ vertices.
%A cycle on three vertices is referred to as a \emph{triangle}.
Hereafter, we consider subclasses of chordal graphs, that is graphs that do not contain induced cycles on more than $3$ vertices.

Given a graph $G$ and $S\subseteq V(G)$, we say that a cycle of $G$ is an \emph{$S$-cycle} if it contains a vertex in $S$.
Moreover, we say that an induced subgraph $F$ of $G$ is an \emph{$S$-forest} if $F$ does not contain an $S$-cycle.
%As we consider chordal graphs, it is not difficult to see that
Thus an induced subgraph $F$ of a chordal graph is an $S$-forest if and only if $F$ does not contain any $S$-triangle.
%\todo{maybe state it as an observation?}
%\todo[color=blue!40]{Too obvious. Induced $S$-triangle less so, but still\ldots}
%\todo{Agree, but let's try to stick with $S$-triangles instead of $S$-cycles, all over the text. }
Typically, the \textsc{Subset Feedback Vertex Set} problem asks for a vertex set of minimum (weight) size such that its removal results in an $S$-forest.
The set of vertices that do not belong to an $S$-forest is referred to as \emph{subset feedback vertex set}.
In our dynamic programming algorithms, we focus on the equivalent formulation of computing a maximum weighted $S$-forest.

%Our algorithms compute a maximum $S$-forest of $G$ via a dynamic programming scheme.
For a collection $\mathcal{C}$ of sets, we write $\displaystyle\max_{\text{weight}}\{C\in\mathcal{C}\}$ to denote $\arg\max_{C\in\mathcal{C}}\{weight(C)\}$,
where $weight(C)$ is the sum of weights of the vertices in $C$.
The collection of $S$-forests of a graph $G$, is denoted by $\mathcal{F}_{S}$.
%
%\begin{definition}
Let $X,Y\subseteq V_{G}$ such that $X\cap Y=\emptyset$ and $G[Y]\in\mathcal{F}_{S}$. Then,
$\displaystyle\optdir{X}{Y}=\max_{\text{weight}}\{U\subseteq X:G[U\cup Y]\in\mathcal{F}_{S}\}$.
%\begin{displaymath}
%\optdir{X}{Y}=\max_{w}\{U\subseteq X:G[U\cup Y]\in\mathcal{F}_{S}\}.
%\end{displaymath}
%\end{definition}
%
%\vspace*{-0.05in}
\begin{itemize}
\item
Our desired optimal solution is $\displaystyle A_{V_{G}}^{\emptyset} = \max_{\text{weight}}\{U\subseteq V_{G}:G[U]\in\mathcal{F}_{S}\}$.
We will subsequently show that in order to compute $A_{V_{G}}^{\emptyset}$ it is sufficient to compute $\optdir{X}{Y}$ for
a polynomial number of sets $X$ and $Y$.
\end{itemize}
%Observe that $A_{\emptyset}^{Y}=\emptyset$ for any $Y$.

Let $G=(V_G, E_G)$ be a chordal graph and let $X,Y\subseteq V_{G}$ such that $X\cap Y=\emptyset$ and $G[Y]\in\mathcal{F}_{S}$.
A partition $\mathcal{P}$ of $X$ is called \emph{nice} if for any $S$-triangle $S_t$ of $G[X \cup Y]$,
there is a partition class $P_i \in \mathcal{P}$ such that $V(S_t) \cap X \subseteq P_i$.
In other words, any $S$-triangle of $G[X \cup Y]$ is involved with at most one partition class of a nice partition $\mathcal{P}$ of $X$.
%\todo[inline,color=red!40]{As written, this says `there is a class that for every $S$-cycle it contains it', whereas it should say `for every $S$-cycle there is a class that contains it'\ldots}
%
With respect to the optimal defined solutions $\optdir{X}{Y}$, we observe the following:
%\footnote{In this extended abstract, proofs of statements marked with an asterisk (*) were removed to an appendix.}
%\footnote{Due to space constraints, all proofs can be found in the appendix.}:
%\begin{restatable}[*]{observation}{obsAXY}\label{obs:AXY}
%\todo[inline,color=red!40]{In the following Obs, you hid the first obs in the proof as if it is only a stepping stone for the second one. However, the two obs's are independent and they are both necessary for formulas like the one in Lemma \ref{lem:together} as the first one reduces $Y$ and the second one partitions $X$.}
\begin{observation}\label{obs:AXY}
Let $G=(V_G, E_G)$ be a chordal graph and let $X,Y\subseteq V_{G}$ such that $X\cap Y=\emptyset$ and $G[Y]\in\mathcal{F}_{S}$. 
Then, the following hold:
%For any nice partition $\mathcal{P}$ of $X$, we have $\optdir{X}{Y}=\bigcup_{X'\in\mathcal{P}}{\optdir{X'}{Y}}$.
\renewcommand\labelenumi{(\theenumi)}\begin{enumerate}
\item\label{obs:XY:between} $\optdir{X}{Y}=\optdir{X}{Y'}$ for any $Y\supseteq Y'\supseteq Y\cap N(X')$ where $X'=X\setminus\{u\in X\setminus S:Y\cap N(u)\subseteq Y\setminus S\}$.
\item\label{obs:XY:nice} $\optdir{X}{Y}=\bigcup_{X'\in\mathcal{P}}{\optdir{X'}{Y}}$ for any nice partition $\mathcal{P}$ of $X$.
\end{enumerate}
\end{observation}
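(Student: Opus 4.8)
The plan is to verify the two identities by showing that the feasibility constraint defining $\optdir{X}{Y}$ is unchanged under the claimed modifications, and hence the maximizer is the same. Recall that $\optdir{X}{Y}$ selects the maximum-weight $U \subseteq X$ such that $G[U \cup Y] \in \mathcal{F}_S$, i.e.\ such that $G[U \cup Y]$ contains no $S$-triangle. Both parts hinge on understanding precisely which $S$-triangles can be created by adding vertices of $X$ to the fixed $S$-forest $G[Y]$.

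For part~\eqref{obs:XY:between}, the key observation is that an $S$-triangle of $G[U \cup Y]$ for $U \subseteq X$ involves at most three vertices, at least one of which lies in $S$. First I would argue that the only vertices of $Y$ that can participate in such an $S$-triangle are those in $Y \cap N(X')$: a vertex $y \in Y$ can be in an $S$-triangle with some $u \in U$ only if $y \in N(u) \subseteq N(X)$, and the pruning of $X$ to $X'$ discards exactly the vertices $u \in X \setminus S$ whose $Y$-neighborhood is entirely outside $S$, so that $u$ together with such $y$'s can never form an $S$-triangle whose $S$-vertex lies in $Y$. The remaining case is an $S$-triangle whose $S$-vertex lies in $U$ itself; here one checks that replacing $Y$ by any intermediate $Y'$ with $Y \supseteq Y' \supseteq Y \cap N(X')$ neither destroys nor creates such triangles, since all the relevant $Y$-vertices are retained. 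The careful bookkeeping is to confirm that every $S$-triangle of $G[U \cup Y]$ survives in $G[U \cup Y']$ and conversely, so that the set of feasible $U$ is identical for $Y$ and $Y'$, giving equality of the maximizers.

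For part~\eqref{obs:XY:nice}, I would use the defining property of a nice partition directly. Given a nice partition $\mathcal{P}$ of $X$, every $S$-triangle $S_t$ of $G[X \cup Y]$ satisfies $V(S_t) \cap X \subseteq P_i$ for a single class $P_i$. The $\supseteq$-type inclusion ``$\bigcup_{X' \in \mathcal{P}} \optdir{X'}{Y} \subseteq \optdir{X}{Y}$'' follows because the union of per-class feasible selections cannot create a new $S$-triangle: any such triangle would have its $X$-part confined to one class, contradicting that each $\optdir{X'}{Y}$ is already $S$-triangle-free on that class. For the reverse, the optimal $U$ for $\optdir{X}{Y}$ restricts to a feasible selection $U \cap X'$ on each class, and by the niceness property these restrictions are independent, so summing the per-class optima recovers $U$; maximizing weight class-by-class therefore matches the global optimum.

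The main obstacle I anticipate is not the high-level set-equality argument but the precise case analysis in part~\eqref{obs:XY:between} distinguishing $S$-triangles by the location of their $S$-vertex (in $U$ versus in $Y$) and verifying that the definition of $X'$ discards exactly the harmless vertices and no others. In particular, one must be careful that removing $u$ from $X$ to form $X'$ does not accidentally change the feasible set of $U$, which requires checking that such $u$ is never forced into or out of an optimal solution by triangle constraints involving the trimmed $Y$-vertices. The chordality of $G$ should streamline this, since by the earlier remark an $S$-forest is characterized purely by the absence of $S$-triangles, so no longer induced $S$-cycles need to be considered.
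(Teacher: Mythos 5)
Your proposal is correct and takes essentially the same approach as the paper: both parts reduce to showing that the relevant $S$-triangles, and hence the feasible sets $U$, coincide --- for (i) by verifying that every $S$-triangle of $G[X\cup Y]$ has all of its $Y$-vertices in $Y\cap N(X')\subseteq Y'$, and for (ii) by using the nice-partition property to decompose feasibility class by class. The one detail to make explicit in (i) is that an $S$-vertex of a triangle lying in $U$ automatically lies in $X'$ (since $X\cap S\subseteq X'$), which is precisely why its $Y$-neighbors are retained in $Y'$; this is the same observation the paper makes when it deduces $b\in X\cap S$ and hence $a\in Y'$.
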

\begin{proof}
For the first statement, observe that $G[Y'] \in \mathcal{F}_{S}$, as an induced subgraph of an $S$-forest.
Also, notice that any $S$-triangle in $G[X \cup Y']$ remains an $S$-triangle in $G[X \cup Y]$. 
Consider an $S$-triangle $\{a,b,x\}$ in $G[X \cup Y]$ with $x \in X$ and $a \in Y$.
We show that $a \in Y'$ and $b \in X \cup Y'$.
%If $x \in S$ then $x \in X'$ which means that $a \in Y'$ and $b \notin Y \setminus Y'$ by the fact that $N(X') \subseteq Y'$.
%Similarly, if $x \in X'$ then we conclude the claim.
%<S> All together now! </S>
If $x \in X'$, then $a \in Y'$ and $b \in X \cup Y'$ by the fact that $Y\cap N(X')\subseteq Y'$.
Suppose that $x \in X \setminus S$ such that $Y\cap N(x)\subseteq Y\setminus S$.
This means that the only vertex of $S$ in the $S$-triangle is $b$.
In particular, we have $b \in X \cap S$ and, since $a \in N(b)$, we conclude that $a \in Y'$.
Thus, any $S$-triangle in $G[X \cup Y]$ remains an $S$-triangle in $G[X \cup Y']$, which shows the claim.
% Notice than if a vertex $u\in X\setminus S$ such that $Y\cap N(u)\subseteq Y\setminus S$ and a vertex $w\in Y$ participate in
% the formation of an $S$-triangle of $G[X\cup Y]$, then the remaining participating vertex must be some vertex $v\in X\cap S$, so $w$ is in $N(X')$ as a neighbour of that vertex $v$.

For the second statement, assume that there is an $S$-triangle $S_t$ in $G[X \cup Y]$.
Then it must contain a vertex $v$ of some partition class $P_i$, as $G[Y]$ is an $S$-forest.
By the definition of a nice partition $\mathcal{P}$, we have $V(S_t)\cap X\subseteq P_i$.
%In particular, for any partition class $P_i \in \mathcal{P}$ and any $S$-triangle $S_t$, we have either $V(S_t)\textcolor{green}{\cap X}\subseteq P_i$ or $V(S_t) \cap P_i = \emptyset$.
Therefore, we deduce $\optdir{X}{Y}\cap P_{i}=\optdir{P_{i}}{Y}$, which concludes the proof.
\end{proof}

By Observation~\ref{obs:AXY}, we search for nice partitions of the vertex set $X$ in order to consider smaller instances of $\optdir{X}{Y}$.
More precisely, 
Observation~\ref{obs:AXY}~(ii) suggests how to consider the natural sets $X'$ of a nice partition of $X$, 
whereas Observation~\ref{obs:AXY}~(i) indicates which vertices of $Y$ are relative to each set $X'$.   

\section{Expanded tree model and related vertex subsets}\label{sec:expanded}
Given a tree model of a chordal graph, we are interested in defining a partial order on the vertices of the graph that takes advantage the underlying tree structure.
For this reason, it is more convenient to consider the tree model as a natural rooted tree and each of its subtrees to correspond to at most one maximal vertex.
Here we show how a tree model can be altered in order to capture the appropriate properties in a formal way.
We assume that $G$ is a chordal graph that admits a tree model $(T, \{T_v\}_{v\in V_G})$ such that $|L(T)| = \ell(G)$.
We will concentrate on the case in which $|L(T)| \geq 2$ and $T$ contains a non-leaf node.
The rest of the cases (i.e., $|V(T)| \leq 2$) are handled by the algorithm on interval graphs~\cite{PapT19} in a separate way.
For this purpose we say that a chordal graph $G$ is \emph{non-trivial} if $|V(T)| > 2$.

\begin{definition}
A tree model $(T, \{T_v\}_{v\in V_G})$ of $G$ is called \emph{expanded tree model} if
\begin{itemize}
\item the host tree $T$ is rooted (and, consequently, all of its subtrees are rooted),
\item for every $v\in V_G$, $L(T_v) \neq \emptyset$ holds, and
\item every node of $T$ is either the root or a leaf of at most one subtree $T_v$ that corresponds to a vertex $v$ of $G$.
\end{itemize}
\end{definition}
%\todo[inline]{I am not sure if the above definition, allows a node of $T$ to be at the same time a root of a vertex and a leaf of another one. In my understanding, I don't see a problem to allow such a case, as we mainly visit the vertices according to the partial order of their roots. And to be more concrete: why do we care if a node of the tree is a leaf in several subtrees?}

%\todo[inline,color=blue!40]{I do not like the term `extended' here, we do not add any new information to it, maybe `expanded' (as in blown-up) is more appropriate\ldots}

We show that any non-trivial chordal graph admits an expanded tree model that is \emph{close} to its tree model.
In fact, we provide an algorithm that, given a tree model of a non-trivial chordal graph $G$, constructs an expanded tree model that realizes $G$.

%\begin{restatable}{lemma}{lemexpanded}\label{lem:expanded}
\begin{lemma}\label{lem:expanded}
For any tree model $(T, \{T_v\}_{v\in V_G})$ of $G$ with $|L(T)|=\ell\geq2$ and $|L(T_v)|\leq v\ell \leq \ell$ for all $v\in V_G$, %$\ell = |L(T)|$ and $3 \leq |V(T)|\leq n$,
there is an expanded tree model $(T', \{T'_v\}_{v\in V_G})$ of $G$ such that:
\begin{itemize}
\item $|L(T')| = \ell$,
\item $|L(T_v)|-1\leq|L(T'_v)|\leq|L(T_v)|$ for every $v \in V_G$, and
\item $|V(T')|\leq|V(T)|+(1+v\ell)(n-1)$.
\end{itemize}
Moreover, given $(T, \{T_v\}_{v\in V_G})$, the expanded tree model can be constructed in time $O(n^2)$.
\end{lemma}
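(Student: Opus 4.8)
The plan is to build $(T',\{T'_v\})$ from $(T,\{T_v\})$ by first rooting the host tree and then repairing the three defining properties of an expanded tree model one at a time, using only edge subdivisions and internal path‑insertions so that the number of host leaves is never disturbed. First I would root $T$ at a non‑leaf node; such a node exists because $G$ is non‑trivial, and since the chosen root is internal it is not counted as a leaf of the rooted tree, so all $\ell$ former leaves survive as rooted leaves and $|L(T')|=\ell$ can be maintained throughout. Rooting orients every $T_v$ and endows it with a unique topmost node $r_v$ (its root under $\leq_T$); note $r_v$ is never a leaf of the rooted $T_v$, and that if $r_v$ was an endpoint of the originally undirected $T_v$ then rooting demotes it from a leaf — this is exactly the source of the admissible loss of one leaf in the bound $|L(T_v)|-1\le|L(T'_v)|$.

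Next I would separate the top incidences. For a node $x$ that is the top of subtrees $T_{v_1},\dots,T_{v_k}$, I splice a fresh directed path of new nodes onto the edge immediately above $x$ (extending the tree above the root if $x$ is the root) and raise each $T_{v_i}$ to a private new top on this path, each new subtree still descending through $x$. Because all of $T_{v_1},\dots,T_{v_k}$ continue to contain $x$, every pairwise intersection among them is preserved, and since the new nodes lie only inside these subtrees no spurious intersection is created; the operation subdivides a single edge, so $|L(T')|$ is unchanged, and each raised top is internal (it keeps a child), so no subtree gains a leaf. I keep one designated subtree's top unraised, which is what accounts for the $(n-1)$ rather than $n$ in the size bound. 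I then perform the symmetric operation for leaves: for each node $x$ that is a bottom node of several subtrees I subdivide one of the edges from $x$ to a child (or, if $x$ is a host leaf, hang a single path below it) and relocate each such bottom to a private new node on that path; as before this preserves all intersections, leaves $|L(T')|$ untouched because a single child‑edge is subdivided instead of a new branch being created, and preserves the number of leaves of each subtree.

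After both passes, every original node is internal to all subtrees containing it, each fresh top‑node is the top of exactly one subtree and the leaf of none, and each fresh bottom‑node is the leaf of exactly one subtree and the top of none; hence every node is the top‑or‑leaf of at most one subtree, and each subtree retains at least one (fresh) leaf, so all three expanded‑model conditions hold. The leaf‑count bounds follow since the only change to $|L(T_v)|$ is the possible single‑leaf loss at rooting; the size bound follows by charging to each vertex its at most one new top‑node and its at most $v\ell$ new leaf‑nodes, sparing one vertex a top‑node, which after a careful count gives $|V(T')|-|V(T)|\le(1+v\ell)(n-1)$; and the $O(n^2)$ running time follows by representing each subtree through its top and its set of at most $v\ell\le\ell$ leaves, performing the $O(\ell)$ splices per vertex and updating the membership of the $O(\ell n)$ new nodes in $O(n)$ total work per vertex.

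The delicate point, and the step I expect to be the main obstacle, is carrying out both separation passes with subdivisions and internal path‑insertions rather than pendant stars: a pendant would give each conflicting subtree its own private node just as well, but would create a new host leaf at every branch and thereby destroy the equality $|L(T')|=\ell$. Making the separation respect $|L(T')|=\ell$ while simultaneously (i) preserving the intersection graph exactly — neither losing an edge among subtrees that shared $x$ nor creating one with subtrees that merely passed near it — and (ii) staying inside the window $|L(T_v)|-1\le|L(T'_v)|\le|L(T_v)|$ for every subtree is where the bookkeeping must be done with care; the lone corner case of a single‑node subtree, which has no leaf to begin with, should be normalised first so that this window and the requirement $L(T'_v)\neq\emptyset$ are mutually consistent.
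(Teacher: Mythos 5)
Your construction is essentially the paper's: root the host tree at an internal node (so $|L(T')|=\ell$ is preserved), root each $T_v$ at its topmost node (the possible demotion of that node from leaf to root is exactly the paper's source of the lower bound $|L(T_v)|-1\leq|L(T'_v)|$), and then give every root-incidence and leaf-incidence a private new node by splicing directed paths into edges, which changes neither the host's leaf count nor the intersection graph. The difference is organizational: the paper performs both separations simultaneously with a single per-node gadget, replacing each node $x$ by a directed path $\langle x_{-k_l},\ldots,x_0,\ldots,x_{k_r}\rangle$ in which all leaf-slots sit below the center $x_0$ and all root-slots above it, whereas you run two independent passes that insert root-slots above $x$ and leaf-slots below $x$ on a child edge. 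This organizational difference is precisely what creates the ``main obstacle'' you flag and leave open: when the leaf pass at a node $y$ and the root pass at its child $x$ touch the same edge $xy$, a subtree whose top was $x$ and a subtree whose bottom was $y$ (and which avoids the descendants of $x$) were disjoint before the transformation, and they remain disjoint only if every raised top is placed strictly below every lowered bottom on that edge. Your text never fixes this ordering; the paper's gadget enforces it by design, since on the edge between $x$ and $y$ the slots appear in the order $x_0, x_1,\ldots,x_{k_r}, y_{-k_l},\ldots,y_{-1}, y_0$, and the whole correctness check then collapses to the one-line observation that $x\in V(T_u)\cap V(T_w)$ implies $x_0\in V(T'_u)\cap V(T'_w)$, while $x\notin V(T_u)\cap V(T_w)$ implies none of the new nodes is shared. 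Two smaller points: your summary sentence ``every original node is internal to all subtrees containing it'' contradicts your earlier decision to keep one designated top unraised (either variant works, but you must pick one consistently); and your charging argument (``sparing one vertex a top-node'') yields at most $(1+v\ell)n-1$ new nodes rather than the claimed $(1+v\ell)(n-1)$ --- though since the paper's own worst-case count of this constant is equally informal, this discrepancy is cosmetic rather than substantive.
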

%\end{restatable}
\begin{proof}
We root $T$ at a non-leaf node of $T$, resulting in a rooted tree $T'$ with $|L(T')| = \ell$.
Moreover, we root every $T_v$ at the node of $T_{v}$ which is closer to $r(T')$, resulting in a rooted subtree $T_{v}'$.
Notice that $|L(T_{v})|-1\leq|L(T_{v}')|\leq|L(T_{v})|$, as $r(T_{v}')$ may be a leaf of $T_{v}$.
In what follows, we assume that $T$ and all of its subtrees $\{T_v\}_{v\in V_G}$ are rooted trees.

\begin{figure}[th]
\scalebox{1.0}{
\centering
\noindent
\begin{tikzpicture}
\tikzstyle{b}=[circle,inner sep=2pt,draw,fill]
\tikzstyle{w}=[circle,inner sep=2pt,draw,fill=white]
\tikzstyle{t}=[circle,inner sep=2pt,draw,fill=white,dotted]
\tikzstyle{p}=[decorate,decoration={snake,amplitude=.4mm,segment length=2mm,post length=1mm}]
%before
\draw(0,0)node[b](x){}node[anchor=south west]{$x$}+(120:1)node[w](y1){}+(150:1)node[w](y2){}+(180:1)node[w](y3){}+(210:1)node[t](y4){}+(240:1)node[w](y5){}+(0:1)node[w](y0){};
\draw[->](y1)--(x);
\draw[->](y2)--(x);
\draw[->](y3)node[anchor=east]{$N^{-}(x)$}--(x);
\draw[->,dotted](y4)--(x);
\draw[->](y5)--(x);
\draw[->](x)--(y0)node[anchor=west]{$N^{+}(x)$};
%to
\node at(4,0){$\longmapsto$};
%after
\draw(8,0)node[b](xminus){}node[anchor=south west]{$x_{-k_{l}}$}+(120:1)node[w](y1){}+(150:1)node[w](y2){}+(180:1)node[w](y3){}+(210:1)node[t](y4){}+(240:1)node[w](y5){}++(0:1.5)node[b](x0){}node[anchor=south west]{$x_{0}$}++(0:1.5)node[b](xplus){}node[anchor=south west]{$x_{k_{r}}$}+(0:1)node[w](y0){};
\draw[->](y1)--(xminus);
\draw[->](y2)--(xminus);
\draw[->](y3)node[anchor=east]{$N^{-}(x)$}--(xminus);
\draw[->,dotted](y4)--(xminus);
\draw[->](y5)--(xminus);
\draw[->,p](xminus)--(x0);
\draw[->,p](x0)--(xplus);
\draw[->](xplus)--(y0)node[anchor=west]{$N^{+}(x)$};
\end{tikzpicture}
}
\caption{We replace node $x$ of $T$ by the directed path $\langle x_{-k_{l}},\ldots,x_{0},\ldots,x_{k_{r}}\rangle$ such that the nodes $N^{-}(x)$ now point to $x_{-k_{l}}$ and node $N^{+}(x)$ is now pointed by $x_{k_{r}}$.}\label{fig:expandtree}
%the nodes that point to $x$ now point to $x_{-k_{l}}$ and the node that is pointed by $x$ is now pointed by $x_{k_{r}}$.}\label{fig:expandtree}
\end{figure}
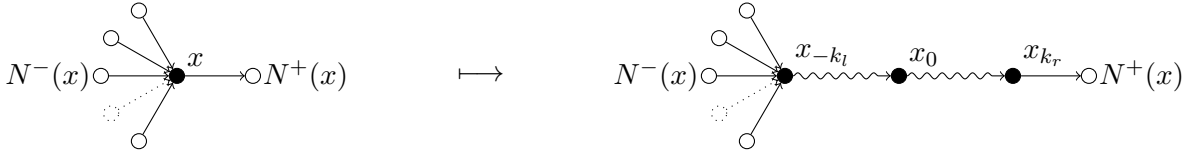

Consider a node $x$ of $T$. Assume that $x$ is the root of $k_{r}$ subtrees $T_{v_{1}},\ldots,T_{v_{k_{r}}}$ and a leaf of $k_{l}$ subtrees $T_{v_{-1}},\ldots,T_{v_{-k_{l}}}$ of $\{T_v\}_{v\in V_G}$ where $k_{r}+k_{l}\geq2$. In this context, for every $v\in V_{G}$ corresponding to $T_v = \{x\}$, we consider $x$ as being both the root and a leaf of $T_{v}$. We replace the node $x$ in $T$ by the gadget shown in Figure~\ref{fig:expandtree}. We also modify every subtree $T_{v}$ of $\{T_v\}_{v\in V_G}$ as follows:
\begin{itemize}
\item If $T_{v}=T_{v_{i}}$ for some $i\in\{1,\ldots,k_{r}\}$ and $T_{v}\neq T_{v_{j}}$ for all $j\in\{-1,\ldots,-k_{l}\}$, then we replace $x$ in $T_{v}$ by the part of the gadget involving the vertices $x_{-k_{l}},\ldots,x_{0},\ldots,x_{i}$.
\item If $T_{v}\neq T_{v_{i}}$ for all $i\in\{1,\ldots,k_{r}\}$ and $T_{v}=T_{v_{j}}$ for some $j\in\{-1,\ldots,-k_{l}\}$, then we replace $x$ in $T_{v}$ by the part of the gadget involving the vertices $x_{j},\ldots,x_{0},\ldots,x_{k_{r}}$.
\item If $T_{v}=T_{v_{i}}$ for some $i\in\{1,\ldots,k_{r}\}$ and $T_{v}=T_{v_{j}}$ for some $j\in\{-1,\ldots,-k_{l}\}$, then we replace $x$ in $T_{v}$ by the part of the gadget involving the vertices $x_{j},\ldots,x_{0},\ldots,x_{i}$.
\item If $T_{v}\neq T_{v_{i}}$ for all $i\in\{1,\ldots,k_{r}\}$ and $T_{v}\neq T_{v_{j}}$ for all $j\in\{-1,\ldots,-k_{l}\}$, then $T_{v}'=T_{v}$.
\end{itemize}
To see that new model indeed realizes $G$, observe that for every $T_{u},T_{w}\in \{T_v\}_{v\in V_G}$:
\begin{itemize}
\item if $x\in V(T_{u})\cap V(T_{w})$, then $x_{0}\in V(T_{u}')\cap V(T_{w}')$, and
\item if $x\notin V(T_{u})\cap V(T_{w})$, then $x_{-k_{l}},\ldots,x_{k_{r}}\notin V(T_{u}')\cap V(T_{w}')$.
\end{itemize}
Thus the intersection graph of $(T', \{T'_v\}_{v\in V_G})$ is isomorphic to $G$.
Notice that any node $x_{i}$, $i\in\{-k_{l},\ldots,k_{r}\}$ is either the root or a leaf of at most one subtree of $\mathcal{T}'$ and,
in particular, $|L(T_{v}')|=1$ for any $T_{v}=\{x\}$.
Iteratively applying the above modifications to $T$ and $\{T_v\}_{v\in V_G}$, results in an expanded tree model $\mathcal{T}'$ of $G$ that satisfies the claimed properties. %the first three properties stated in this Lemma hold.

To bound $|V(T')|$, observe that the first step adds at most $n$ new nodes in $T$, so that $|V(T')| \leq |V(T)| + n$.
Further notice that every subtree $T_v$ has at most $v\ell$ leaves.
%%%This means than among the subtrees $\{T_v\}_{v\in V_G}$ the number of common roots is at most $n$ and the number of common leaves is at most $\ell n$.
%%%Since we add exactly $k$ new nodes in $T$ for $k$ common roots or leaves and $|V(T)|\leq n$, we conclude $|V(T')|\leq (\ell+3)n$.
In the worst case, all subtrees of $\{T_v\}_{v\in V_G}$ are rooted in the same node and all their leaves are contained in a set of $\max_{v\in V_{G}}|L(T_{v})|=v\ell$ nodes, so our preprocessing algorithm will add $(1+v\ell)(n-1)$ nodes to $T$.
Moreover, as we need to update $n+1$ trees by adding at most a total of $(1+v\ell)(n-1)$ new nodes and $|V(T)| \leq n$, the total running time is $O(n^2)$.
\end{proof}

%\paragraph{Vertex subsets on expanded tree model}
%\subsection{Vertex subsets on expanded tree model}
Hereafter we assume that $(T, \{T_v\}_{v\in V_G})$ is an expanded tree model of a non-trivial chordal graph $G$.
For any vertex $u$ of $G$, we denote the root of its corresponding rooted tree $T_u$ in $T$ by $r(u)$.
We define the following partial order on the vertices of $G$: for all $u,v\in V_{G}$, $u\leq v\Leftrightarrow r(u)\leq_{T} r(v)$.
%\begin{itemize}
%\item for all $u,v\in V_{G}$, $u\leq v\Leftrightarrow r(u)\leq_{T} r(v)$.
%\end{itemize}
%
\noindent In other words, two vertices of $G$ are comparable (with respect to $\leq$) if and only if there is a directed path between their corresponding roots in $T$.
For all $u\in V_{G}$, we define $V_{u}=\{u'\in V_{G}:u'\leq u\}$. %% and $V_{u\to u}\equiv V_{u}$.
%% explain what it means...
%In terms of the host tree, this set corresponds to all vertex-descendants of $r(u)$ that are entirely contained within the subtree rooted at $r(u)$.

%\begin{restatable}[*]{observation}{obsordering}\label{obs:ordering}
\begin{observation}\label{obs:ordering}
Let $u,v,w,z \in V_{G}$. Then, the following hold:
\renewcommand\labelenumi{(\theenumi)}\begin{enumerate}
%\item The vertices of $V_{u}\cap N(u)$ are pairwise comparable.
%\todo[inline]{I don't see that for rooted subtrees...; but perhaps we don't need it for subtrees, rather for paths. }
%\todo[color=red!40,inline]{Indeed, this does not hold for subtrees in general. A vertex corresponding to a path induces a total order to its neighbourhood solely because all its neighbours' roots are contained in that single path. However, I cannot recall or find this being explicitly used anywhere\ldots}
\item \label{obs:ord:comp} If $uv \in E_{G}$, then $u$ and $v$ are comparable.
\item \label{obs:ord:twocomp} If $u \leq v$, $w\leq z$, and $u$ and $w$ are comparable, then $v$ and $z$ are comparable.
%For all $u_{1},u_{2},v_{1},v_{2}\in V_{G}$ such that $u_{1}\leq v_{1}$ and $u_{2}\leq v_{2}$, if $u_{1}$ and $u_{2}$ are comparable, then $v_{1}$ and $v_{2}$ are comparable.
\item \label{obs:ord:umbrella} If $u<v<w$ and $uw\in E_{G}$, then $vw\in E_{G}$.
\end{enumerate}
\end{observation}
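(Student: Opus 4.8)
The plan is to reduce all three statements to one structural fact about the expanded tree model: for every vertex $v\in V_G$, every node of the rooted subtree $T_v$ is a descendant of its root $r(v)$ in $T$, that is, $x\leq_T r(v)$ for all $x\in V(T_v)$. This holds because $r(v)$ is, by the construction in Lemma~\ref{lem:expanded}, the node of $T_v$ closest to $r(T)$: if some $x\in V(T_v)$ failed to be a descendant of $r(v)$, then the unique $T$-path from $x$ to $r(v)$ would pass through their lowest common ancestor, which would be a node of $T_v$ strictly closer to $r(T)$ than $r(v)$ (since this path lies inside the connected subtree $T_v$), contradicting the choice of $r(v)$. Alongside this I would freely use the property recorded in the preliminaries that in a rooted tree $x\leq_T y$ and $x\leq_T z$ force $y$ and $z$ to be comparable.

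For statement (i), I would note that $uv\in E_G$ gives, by the defining property of a tree model, a node $x\in V(T_u)\cap V(T_v)$. The structural fact yields $x\leq_T r(u)$ and $x\leq_T r(v)$, so $r(u)$ and $r(v)$ are comparable, whence $u$ and $v$ are comparable by definition of $\leq$.

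Statement (ii) is purely order-theoretic and uses no geometry beyond $\leq_T$. Since $u$ and $w$ are comparable, I may assume $u\leq w$ (the other case being symmetric). Then $r(u)\leq_T r(w)\leq_T r(z)$ gives $r(u)\leq_T r(z)$ by transitivity, while $u\leq v$ gives $r(u)\leq_T r(v)$; applying the common-descendant property to $r(u)$ makes $r(v)$ and $r(z)$ comparable, i.e. $v$ and $z$ comparable.

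The umbrella property (iii) is the step I expect to carry the real weight. From $uw\in E_G$ I pick $x\in V(T_u)\cap V(T_w)$, and the structural fact gives $x\leq_T r(u)$ and $x\leq_T r(w)$. Combined with $u<v<w$, i.e. $r(u)<_T r(v)<_T r(w)$, this produces the chain $x\leq_T r(u)\leq_T r(v)\leq_T r(w)$. In particular $r(v)$ lies on the unique $T$-path from the descendant $x$ to its ancestor $r(w)$, which is exactly the ancestor chain $\{y : x\leq_T y\leq_T r(w)\}$. Since $T_w$ is connected and contains both $x$ and $r(w)$, this whole path lies in $T_w$, so $r(v)\in V(T_w)$; as also $r(v)\in V(T_v)$, we get $V(T_v)\cap V(T_w)\neq\emptyset$ and hence $vw\in E_G$. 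The only point needing care is the identification of the $x$-to-$r(w)$ path with that ancestor chain and the verification that $r(v)$ sits on it, which is precisely what the displayed chain of inequalities supplies.
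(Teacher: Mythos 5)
Your proposal is correct and follows essentially the same route as the paper's proof: part (i) rests on a common node of $T_u$ and $T_v$ being a common descendant of $r(u)$ and $r(v)$, part (ii) is the same case analysis via the ancestor-chain property, and part (iii) uses connectivity of $T_w$ to place $r(v)$ on the path from $x$ up to $r(w)$ inside $T_w$. The only differences are cosmetic: you isolate and prove the structural fact (every node of $T_v$ is a descendant of $r(v)$) that the paper uses implicitly, and you compress the paper's two-step path argument in (iii) into a single chain $x\leq_T r(u)\leq_T r(v)\leq_T r(w)$.
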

\begin{proof}
%Assume that $u$ and $v$ are incomparable. Since there is no directed path between $r(u)$ and $r(v)$, the subtrees of $T$ rooted at $r(u)$ and $r(v)$ are disjoint. Thus $T_u$ and $T_v$ are disjoint in $T$, as they are contained in the subtrees rooted at $r(u)$ and $r(v)$, respectively. This means that $uv \notin E_G$, so that $u$ and $v$ are comparable whenever $uv \in E_G$.
%\todo[color=red!40,inline]{The proof above simply claims that the hypothesis (there is a directed path between the roots) implies the result (the corresponding subtrees are disjoint) without ever showing why.}
%\todo[color=green!40,inline]{Replaced.}
Assume that $x\in V(T_{u})\cap V(T_{v})$, which exists as $uv \in E_{G}$. Then there are paths $x\to r(u)$ and $x\to r(v)$. Since $T$ is a rooted tree, any node besides its root has a unique parent. This implies that the shortest of the aforementioned paths is a subpath of the longest. Assume, without loss of generality, that $x\to r(u)$ is the shortest path. Then $x\to r(v)=x\to r(u)\to r(v)$, so that $u$ and $v$ are indeed comparable.

For the second statement, observe that all ancestors of a node of $T$ are pairwise comparable with respect to $\leq_T$.
Assume that $u\leq w$. Then $r(u)\leq r(w)\leq r(z)$ because $w\leq z$, so $u\leq z$ in addition to $u\leq v$.
Now assume that $w\leq u$. Then $r(w)\leq r(u)\leq r(v)$ because $u\leq v$, so $w\leq v$ in addition to $w\leq v$.
In both cases we conclude that $v$ and $z$ are comparable.

For the third statement, observe that $u<v<w$ implies that $r(u)<r(v)<r(w)$ which in turn implies that $r(u)\to r(v)\to r(w)$.
We show that $r(u)\in V(T_{w})$. Since $u$ and $w$ are adjacent, there exists a node $x\in V(T_{u})\cap V(T_{w})$. As shown in the proof of the first statement, there exists a path $x\to r(u)\to r(w)$.
Then all the nodes of this path are in $V(T_{w})$ because its endpoints are in $V(T_{w})$ and $T_{w}$ is connected.
Thus we have $r(u)\in V(T_{w})$.
With the same argumentation, we conclude that all nodes of the path $r(u)\to r(v)\to r(w)$ are in $V(T_{w})$, so that $r(v)\in V(T_{w})$ holds.
Therefore, $v$ and $w$ are adjacent.
\end{proof}

%\end{restatable}

%\begin{restatable}{lemma}{propneighb}\label{prop:neighb}
\begin{lemma}\label{prop:neighb}
For every $u\in V_{G}$, we have $N(V_{u})\subseteq N(u)$.
\end{lemma}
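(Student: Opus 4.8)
The plan is to unfold the definition and reduce everything to the three parts of Observation~\ref{obs:ordering}. I read $N(V_u)$ as the open neighborhood of the set, i.e. $N(V_u)=\bigl(\bigcup_{v\in V_u}N(v)\bigr)\setminus V_u$; this is the right reading, since an edge lying entirely inside $V_u$ would otherwise produce a neighbor of $V_u$ that need not be adjacent to $u$, making the statement false. So I would fix an arbitrary $w\in N(V_u)$, which by this definition means $w\notin V_u$ and there is some $v\in V_u$ (hence $v\le u$) with $vw\in E_G$, and aim to show $uw\in E_G$, i.e. $w\in N(u)$. The entire argument is a chain of comparability deductions that locate $v$, $u$, $w$ on a single root-to-leaf branch of the host tree and then push the edge $vw$ upward to $uw$.

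First I would apply Observation~\ref{obs:ordering}(\ref{obs:ord:comp}) to the edge $vw$ to get that $v$ and $w$ are comparable. The alternative $w\le v$ can be discarded immediately: together with $v\le u$ and transitivity it would give $w\le u$, i.e. $w\in V_u$, contradicting $w\notin V_u$. Hence $v<w$. Next, from the two relations $v\le u$ and $v\le w$ sharing the common lower bound $v$, Observation~\ref{obs:ordering}(\ref{obs:ord:twocomp}) (taking both comparable elements to be $v$ itself) yields that $u$ and $w$ are comparable. Again $w\le u$ is impossible, as it forces $w\in V_u$; therefore $u<w$. At this stage I have arranged the three vertices as $v\le u<w$ on one chain, with the long edge $vw\in E_G$ still in hand.

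The final and most delicate step is transferring the edge from $v$ up to $u$. If $v=u$ there is nothing to prove, since $vw=uw\in E_G$. Otherwise $v<u<w$, and this is exactly the hypothesis shape of the ``umbrella'' property: applying Observation~\ref{obs:ordering}(\ref{obs:ord:umbrella}) with the triple $v<u<w$ and the edge $vw\in E_G$ gives $uw\in E_G$, so $w\in N(u)$. I expect this umbrella invocation to be the crux of the proof: all the preceding comparability work exists solely to guarantee that $u$ truly lies strictly between $v$ and $w$ along the branch, which is precisely the configuration in which umbrella-closedness of the neighborhoods is licensed. Once that betweenness is established the conclusion is immediate, so the real content is the careful bookkeeping that rules out $w\le v$ and $w\le u$ using $w\notin V_u$.
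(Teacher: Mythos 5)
Your proof is correct and takes essentially the same route as the paper's: fix $w\in N(V_u)$ with a witness $v\in V_u$, use Observation~\ref{obs:ordering}~(\ref{obs:ord:comp}) together with $w\notin V_u$ to pin down the configuration $v<u<w$, and then apply the umbrella property, Observation~\ref{obs:ordering}~(\ref{obs:ord:umbrella}), to push the edge $vw$ up to $uw$. If anything, you are slightly more explicit than the paper, which asserts the comparability of $u$ and $w$ directly, whereas you justify it via Observation~\ref{obs:ordering}~(\ref{obs:ord:twocomp}).
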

%\end{restatable}
\begin{proof}
Let $w\in N(V_{u})$. Then there is a vertex $v\in V_{u}$ such that $vw\in E_{G}$.
It suffices to show that $w \in N(u)$. Assume that $u\neq v$, as otherwise the claim trivially holds.
Then $v < u$, because $v\in V_{u}$. Moreover, Observation~\ref{obs:ordering}~({\ref{obs:ord:comp}}) implies that $w<v$ or $v<w$.
Since $w\notin V_{u}$, we conclude $u < v < w$.
Therefore Observation \ref{obs:ordering}~(\ref{obs:ord:umbrella}) shows that $uw\in E_{G}$.
%by Observation \ref{obs:ordering}~(\ref{obs:ord:umbrella}).
\end{proof}

For all $u\in V_{G}$, we denote the set of all maximal proper predecessors of $u$ by $\vartriangleleft u$.
Notice that such vertices correspond to the maximal descendants of $r(u)$.
%With respect to $T$, this set contains the highest vertex-descendants for which their vertex-subtree is entirely contained within the subtree rooted at $r(u)$.
%\todo[color=red!40,noline]{I see no purpose for this remark.}
For all $U\subseteq V_{G}$, we define $\mathcal{V}_{U}=\{V_{u}:u\in U\}$.
We extend the previous case of a single vertex, on subsets of vertices with respect to an edge.
For all $u,v\in V_{G}$ such that $uv\in E_{G}$, we denote by $\vartriangleleft uv$ the set of all maximal vertices of $V_{G}$ that are proper predecessors of both $u$ and $v$ but are not adjacent to both, so $\vartriangleleft uv=\max_{G}((V_{u}\cap V_{v})\setminus(N[u]\cap N[v]))$.
Recall that for any edge $uv \in E_{G}$, either $u < v$ or $v < u$ by Observation~\ref{obs:ordering}~(\ref{obs:ord:comp}).
If $u<v$ holds, then $\vartriangleleft uv=\max_{G}(V_{u}\setminus(N[u]\cap N(v)))$.
The following two lemmas are crucial for our algorithms, as they provide natural partitions into smaller instances.

%\begin{restatable}[*]{lemma}{propdisc}\label{prop:disc}
\begin{lemma}\label{prop:disc}
For every $u\in V_{G}$, the collection $\mathcal{V}_{\vartriangleleft u}$ is a partition of $V_{u}\setminus\{u\}$ into pairwise disconnected sets.
For every $u,v\in V_{G}$ such that $u<v$ and $uv\in E_{G}$, $\mathcal{V}_{\vartriangleleft uv}$ is a partition of $V_{u}\setminus(N[u]\cap N(v))$ into pairwise disconnected sets.
\end{lemma}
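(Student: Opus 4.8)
The plan is to isolate a single abstract fact and then apply it twice. The abstract fact I would prove is: \emph{if $W\subseteq V_{G}$ is downward closed with respect to $\leq$ (meaning $z\leq z'$ and $z'\in W$ imply $z\in W$), then the family $\{V_{w}:w\in\max_{G}(W)\}$ is a partition of $W$ into pairwise disconnected sets.} Once this is established, the two statements follow by checking downward-closure of the relevant sets, since by definition $\vartriangleleft u=\max_{G}(V_{u}\setminus\{u\})$ and $\vartriangleleft uv=\max_{G}(V_{u}\setminus(N[u]\cap N(v)))$. I would also record at the outset that in an expanded tree model distinct vertices have distinct roots, so $\leq$ is genuinely antisymmetric and $V_{u}\setminus\{u\}=\{w:w<u\}$.

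To prove the abstract fact I would argue three points. \emph{Covering:} every $z\in W$ lies below some maximal element $w\in\max_{G}(W)$, hence $z\in V_{w}$; conversely $V_{w}\subseteq W$ because $w\in W$ and $W$ is downward closed. \emph{Disjointness:} if distinct $w_{1},w_{2}\in\max_{G}(W)$ shared an element $z\in V_{w_{1}}\cap V_{w_{2}}$, then $r(z)\leq_{T}r(w_{1})$ and $r(z)\leq_{T}r(w_{2})$, and since all ancestors of a node of the rooted tree $T$ are pairwise comparable, $r(w_{1})$ and $r(w_{2})$ would be comparable, making $w_{1},w_{2}$ comparable and contradicting maximality. \emph{Disconnectedness:} if an edge $z_{1}z_{2}$ had $z_{1}\in V_{w_{1}}$ and $z_{2}\in V_{w_{2}}$ with $w_{1}\neq w_{2}$, then disjointness gives $z_{2}\notin V_{w_{1}}$, so $z_{2}\in N(V_{w_{1}})\subseteq N(w_{1})$ by Lemma~\ref{prop:neighb}; thus $z_{2}$ and $w_{1}$ are comparable by Observation~\ref{obs:ordering}(\ref{obs:ord:comp}), and combined with $z_{2}\leq w_{2}$ this forces either $z_{2}\in V_{w_{1}}\cap V_{w_{2}}$ or $w_{1}\leq w_{2}$, both impossible.

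The first statement is then immediate, as $V_{u}\setminus\{u\}=\{w:w<u\}$ is trivially downward closed. The hard part will be the second statement, namely showing that $W=V_{u}\setminus(N[u]\cap N(v))$ is downward closed; this is where the structure $u<v$, $uv\in E_{G}$ must be used. I would take $z\leq z'$ with $z'\in W$ and assume for contradiction that $z\in N[u]\cap N(v)$. Since $z'\in W$ forces $z'\neq u$, we get $z\neq u$, so in fact $z\in N(u)\cap N(v)$ with $z<u<v$. Applying the umbrella property Observation~\ref{obs:ordering}(\ref{obs:ord:umbrella}) along the chain $z<z'\leq u<v$ propagates both adjacencies upward: from $zv\in E_{G}$ we obtain $z'v\in E_{G}$, and from $zu\in E_{G}$ we obtain $z'u\in E_{G}$ (or directly $z'=u\in N[u]$), whence $z'\in N[u]\cap N(v)$, contradicting $z'\in W$. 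This umbrella-driven downward-closure argument is the crux; once it is in place, the second statement follows from the abstract fact exactly as the first.
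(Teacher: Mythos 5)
Your proof is correct, and its core mechanism for disjointness and disconnectedness is the same as the paper's: distinct maximal elements are incomparable, and comparability propagates upward from descendants to their maximal ancestors (the paper invokes Observation~\ref{obs:ordering}~(\ref{obs:ord:comp})--(\ref{obs:ord:twocomp}) directly, where you re-derive part of this via tree ancestors and route disconnectedness through Lemma~\ref{prop:neighb}; these are minor variations). Where you genuinely diverge is in structure and completeness. The paper proves only the ``pairwise disconnected'' part for the first statement and dismisses the second with ``completely analogous,'' never verifying that the classes actually cover the claimed ground set. You isolate exactly the hypothesis that makes covering work --- downward closure of $W$, which guarantees both that every element of $W$ lies in some class $V_{w}$ and, more importantly, that no class $V_{w}$ leaks outside $W$ --- and you observe that for the second statement this is \emph{not} trivially analogous: one must show that no descendant of a vertex of $\vartriangleleft uv$ lands in $N[u]\cap N(v)$, which requires the umbrella property Observation~\ref{obs:ordering}~(\ref{obs:ord:umbrella}) applied along the chain $z<z'\leq u<v$ (and this is the only place the hypotheses $u<v$, $uv\in E_{G}$ are used). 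So your abstraction buys a genuinely complete proof: it fills the containment step that the paper's terse argument leaves implicit, at the modest cost of introducing the auxiliary notion of downward-closed sets and the antisymmetry remark about roots in an expanded tree model; the paper's version, by contrast, is shorter but silently assumes exactly the step you prove.
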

\begin{proof}
We prove the first statement. The proof of the second statement is completely analogous.
Firstly notice that, by definition, the vertices of $\vartriangleleft u$ are pairwise incomparable. Consider two vertices $u_{1}'$ and $u_{2}'$ such that $u_{1}'\leq u_{1}$ and $u_{2}'\leq u_{2}$ where $u_{1}$ and $u_{2}$ are two vertices of $\vartriangleleft u$.
Clearly, $u_{1}' \in V_{u_{1}}$ and $u_{2}' \in V_{u_{2}}$.
By Observation~\ref{obs:ordering} (\ref{obs:ord:comp}--\ref{obs:ord:twocomp}), it follows that $u_{1}'$ and $u_{2}'$ are distinct and non-adjacent.
\end{proof}

%\end{restatable}

%\begin{restatable}[*]{lemma}{lemniceofundirected}\label{lem:niceofundirected}
\begin{lemma}\label{lem:niceofundirected}
For every $u\in V_{G}$, the collection $\mathcal{V}_{\vartriangleleft u}$ is a nice partition of $V_{u}\setminus\{u\}$.
For every $u,v\in V_{G}$ such that $u<v$ and $uv\in E_{G}$, the collection $\mathcal{V}_{\vartriangleleft uv}$ is a nice partition of $V_{u}\setminus(N[u]\cap N(v))$.
\end{lemma}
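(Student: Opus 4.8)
The plan is to derive niceness directly from the pairwise-disconnectedness already established in Lemma~\ref{prop:disc}, so that almost no new work is required. Recall that a partition $\mathcal{P}$ of a set $X$ is nice with respect to an admissible $Y$, that is, one satisfying $X\cap Y=\emptyset$ and $G[Y]\in\mathcal{F}_{S}$, precisely when the $X$-part of every $S$-triangle of $G[X\cup Y]$ is contained in a single class. I would first fix $X:=V_{u}\setminus\{u\}$ and $\mathcal{P}:=\mathcal{V}_{\vartriangleleft u}$ for the first statement, and $X:=V_{u}\setminus(N[u]\cap N(v))$ and $\mathcal{P}:=\mathcal{V}_{\vartriangleleft uv}$ for the second. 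By Lemma~\ref{prop:disc}, in either case $\mathcal{P}$ is indeed a partition of $X$ and, crucially, its classes are \emph{pairwise disconnected}, meaning that no edge of $G$ joins vertices lying in two distinct classes.

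The core step is the observation that the intersection of any triangle with $X$ induces a clique. Let $S_{t}$ be any $S$-triangle of $G[X\cup Y]$. Since $S_{t}$ is a triangle, its three vertices are pairwise adjacent, and therefore so are the vertices of $V(S_{t})\cap X$. If $|V(S_{t})\cap X|\leq 1$, the containment in one class is immediate; if $|V(S_{t})\cap X|\geq 2$, then any two of these vertices are adjacent, so by the pairwise-disconnectedness of $\mathcal{P}$ they cannot lie in distinct classes. Consequently all of $V(S_{t})\cap X$ lies in a single class $P_{i}\in\mathcal{P}$, which is exactly the condition required for $\mathcal{P}$ to be nice.

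Since this argument uses no property of $Y$ beyond $X\cap Y=\emptyset$, the same reasoning shows that $\mathcal{P}$ is nice with respect to every admissible $Y$, and the two statements follow uniformly: the second is obtained simply by substituting the corresponding $X$ and $\mathcal{P}$ and invoking the second part of Lemma~\ref{prop:disc}. I do not anticipate a genuine obstacle, as the entire content is already packaged into the pairwise-disconnectedness of Lemma~\ref{prop:disc}, leaving only the elementary clique-versus-disconnected-classes dichotomy. The one point that deserves a little care is the bookkeeping that an $S$-triangle may have one or two of its vertices in $Y$ rather than in $X$; this is precisely why the niceness condition quantifies over $V(S_{t})\cap X$ instead of over all of $V(S_{t})$, and treating the degenerate case $|V(S_{t})\cap X|\leq 1$ separately keeps the argument clean.
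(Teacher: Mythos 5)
Your proposal is correct and follows essentially the same route as the paper: both derive niceness directly from the pairwise disconnectedness of the classes established in Lemma~\ref{prop:disc}, together with the observation that any two vertices of an $S$-triangle are adjacent, so the triangle's intersection with $X$ cannot meet two distinct classes. The only cosmetic difference is that the paper phrases this as a contradiction while you argue it directly and spell out the degenerate case $|V(S_t)\cap X|\leq 1$.
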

\begin{proof}
We prove the first statement. The proof of the second statement is completely analogous. Let $X=V_{u}\setminus\{u\}$ and $Y\subseteq V_{G}$ such that $X\cap Y=\emptyset$. Suppose that $S_{t}$ is an $S$-triangle of $G[X\cup Y]$ for which the intersection of $V(S_{t})$ and a class of $\mathcal{V}_{\vartriangleleft u}$ is non-empty for at least two such classes. Assume that $P_{1}$ and $P_{2}$ are two of those classes and let $u_{1}\in V(S_{t})\cap P_{1}$ and $u_{2}\in V(S_{t})\cap P_{2}$. Then $u_{1}$ and $u_{2}$ must be adjacent, which is in contradiction to Lemma~\ref{prop:disc}.
\end{proof}

%\end{restatable}

Having defined the necessary predecessors (maximal descendants) of $u$, we next analyze specific solutions described in $\optdir{V_u}{Y}$ with respect to the vertices of $\vartriangleleft u$.
Both statements follow by carefully applying Lemma~\ref{prop:neighb} and Lemma~\ref{lem:niceofundirected}.
%More precisely, we deal first with the case in which $u \notin \optdir{V_{u}}{Y}$.

%\begin{restatable}[*]{lemma}{lemtogether}\label{lem:together}
\begin{lemma}\label{lem:together}
Let $Y \subseteq V_{G} \setminus V_u$. \textup{(i)}\ If $u \notin \optdir{V_{u}}{Y}$ then $\displaystyle\optdir{V_{u}}{Y}=\bigcup_{u'\in\vartriangleleft u}{\optdir{V_{u'}}{Y\cap N(u')}}$.\\
\textup{(ii)}\ Moreover,
$\displaystyle\optdir{V_{u}}{\emptyset}=\max_{\text{weight}}\left\{\bigcup_{u'\in\vartriangleleft u}{\optdir{V_{u'}}{\emptyset}},\{u\}\cup\bigcup_{u'\in\vartriangleleft u}{\optdir{V_{u'}}{\{u\}\cap N(u')}}\right\}$.
\end{lemma}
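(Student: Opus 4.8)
The plan is to prove both parts by combining the two decomposition tools of Observation~\ref{obs:AXY}: the nice-partition identity~(ii) applied to the partition $\mathcal{V}_{\vartriangleleft u}$ of $V_{u}\setminus\{u\}$, which is nice by Lemma~\ref{lem:niceofundirected}, together with the neighbourhood-localization identity~(i), whose hypotheses I will discharge using $N(V_{u'})\subseteq N(u')$ from Lemma~\ref{prop:neighb}. The single recurring technical step, which I would isolate first and then reuse in both parts, is the claim that $\optdir{V_{u'}}{Y}=\optdir{V_{u'}}{Y\cap N(u')}$ for every $u'\in\vartriangleleft u$ and every $Y\subseteq V_{G}\setminus V_{u}$.

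For this localization step I would apply Observation~\ref{obs:AXY}~(i) with $X=V_{u'}$ and target set $Y'=Y\cap N(u')$, checking the required chain $Y\supseteq Y'\supseteq Y\cap N(X')$, where $X'$ is the set from that statement. The inclusion $Y'\supseteq Y\cap N(X')$ is the crux: since $X'\subseteq V_{u'}$ we have $N(X')\subseteq N(V_{u'})$, and Lemma~\ref{prop:neighb} gives $N(V_{u'})\subseteq N(u')$, so $Y\cap N(X')\subseteq Y\cap N(u')=Y'$; the inclusion $Y\supseteq Y'$ is immediate. The disjointness $X\cap Y=\emptyset$ holds because $u'<u$ forces $u\notin V_{u'}$ while $Y\cap V_{u}=\emptyset$. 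This yields the equality.

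With this in hand, part~(i) is short. Since $u\notin\optdir{V_{u}}{Y}$, the optimal set is contained in $V_{u}\setminus\{u\}$ and, being feasible and optimal there as well, gives $\optdir{V_{u}}{Y}=\optdir{V_{u}\setminus\{u\}}{Y}$. Applying Observation~\ref{obs:AXY}~(ii) to the nice partition $\mathcal{V}_{\vartriangleleft u}$ yields $\optdir{V_{u}\setminus\{u\}}{Y}=\bigcup_{u'\in\vartriangleleft u}\optdir{V_{u'}}{Y}$, and replacing each $Y$ by $Y\cap N(u')$ through the localization step finishes the proof of~(i).

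For part~(ii) I would split on whether $u\in\optdir{V_{u}}{\emptyset}$. If $u\notin\optdir{V_{u}}{\emptyset}$, then part~(i) with $Y=\emptyset$ produces exactly the first term of the maximum. If $u\in\optdir{V_{u}}{\emptyset}$, then deleting $u$ shows $\optdir{V_{u}}{\emptyset}=\{u\}\cup\optdir{V_{u}\setminus\{u\}}{\{u\}}$, since any heavier feasible set over $V_{u}\setminus\{u\}$ together with $u$ would contradict optimality over $V_{u}$; decomposing $\optdir{V_{u}\setminus\{u\}}{\{u\}}$ by Observation~\ref{obs:AXY}~(ii) and then applying the localization step with $Y=\{u\}$ produces the second term. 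As both expressions inside the maximum are feasible $S$-forests contained in $V_{u}$ and the true optimum coincides with one of them, $\optdir{V_{u}}{\emptyset}$ is the heavier of the two, which is the claimed identity. The main obstacle is exactly the localization step, namely certifying $Y\cap N(X')\subseteq Y\cap N(u')$ so that Observation~\ref{obs:AXY}~(i) applies; the remainder is a direct assembly of the two observations with the case split on membership of $u$.
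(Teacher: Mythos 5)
Your proof is correct and follows essentially the same route as the paper's: a case split on whether $u$ belongs to the optimum, Observation~\ref{obs:AXY}~(ii) applied to the nice partition $\mathcal{V}_{\vartriangleleft u}$ of $V_{u}\setminus\{u\}$ from Lemma~\ref{lem:niceofundirected}, and localization of $Y$ to $Y\cap N(u')$ via Observation~\ref{obs:AXY}~(i) together with Lemma~\ref{prop:neighb}. One cosmetic fix: your isolated localization claim is stated for $Y\subseteq V_{G}\setminus V_{u}$, yet in part~(ii) you invoke it with $Y=\{u\}\subseteq V_{u}$; since your argument only uses $Y\cap V_{u'}=\emptyset$ (which holds for $Y=\{u\}$ because $u'<u$ forces $u\notin V_{u'}$), simply weaken the stated hypothesis to $Y\cap(V_{u}\setminus\{u\})=\emptyset$.
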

\begin{proof}
We first show claim~(i). Since $u\notin\optdir{V_{u}}{Y}$, we have $\optdir{V_{u}}{Y}=\optdir{V_{u}\setminus\{u\}}{Y}$.
According to Lemma~\ref{lem:niceofundirected}, the collection $\mathcal{V}_{\vartriangleleft u}$ is a nice partition of $V_{u}\setminus\{u\}$.
%, as there is no edge among the vertices of pairwise elements of $\mathcal{V}_{\vartriangleleft u}$.
Thus, by Observation~\ref{obs:AXY} and Lemma~\ref{prop:neighb}, we have
\begin{displaymath}
\optdir{V_{u}\setminus\{u\}}{Y}=\bigcup_{X\in \mathcal{V}_{\vartriangleleft u}}{\optdir{X}{Y}}=\bigcup_{u'\in\vartriangleleft u}{\optdir{V_{u'}}{Y\cap N(u')}}.
\end{displaymath}

For the second claim, we distinguish two cases depending on whether $u$ is in $\optdir{V_{u}}{\emptyset}$ or not.
If $u\notin\optdir{V_{u}}{\emptyset}$ then claim~(i) shows the described formula.

Assume that $u\in\optdir{V_{u}}{\emptyset}$. Then $\optdir{V_{u}}{\emptyset}=\{u\}\cup\optdir{V_{u}\setminus\{u\}}{\{u\}}$.
Recall that the collection $\mathcal{V}_{\vartriangleleft u}$ is a nice partition of $V_{u}\setminus\{u\}$.
Moreover, if $u$ has no neighbor in $V_{u'}$ then $u \notin N(u')$ by Lemma~\ref{prop:neighb}. Thus, we get the desired formula:
%According to Proposition ? 1--2 the collection $\mathcal{V}_{\vartriangleleft u}$ again has the desired property so that
\begin{displaymath}
\optdir{V_{u}\setminus\{u\}}{\{u\}}=\bigcup_{X\in \mathcal{V}_{\vartriangleleft u}}{\optdir{X}{\{u\}}}=\bigcup_{u'\in\vartriangleleft u}{\optdir{V_{u'}}{\{u\}\cap N(u')}}.
\end{displaymath}
\end{proof}

\section{SFVS on graphs with bounded leafage}\label{sec:leaf}
%\subsection{SFVS on graphs with bounded leafage}
% definition of bounded leafage graph.
In this section we concern ourselves with chordal graphs that have an intersection model tree with at most $\ell$ leaves.
Our goal is to show that SFVS can be solved in polynomial time on chordal graphs with bounded leafage.
In particular, we show that that SFVS is in XP parameterized by $\ell$.
%% by providing an algorithm with running time $O((\ell+1)n^{2\ell+1})$.
%In the case of $\ell=1$, observe that the input graph is a complete graph, so SFVS can be solved in $O(n^{2})$ by comparing all maximal subset feedback vertex sets because a clique of an $S$-forest containing a vertex of $S$ must contain at most two vertices.
In the case of $\ell \leq 2$, the input graph is an interval graph, so SFVS can be solved in $O(nm)$ time \cite{PapT19}.
We subsequently assume that we are given a chordal graph $G$ that admits an expanded tree model $(T, \{T_v\}_{v\in V_G})$ with $\ell=L(T)\geq2$, due to Lemma~\ref{lem:expanded}.
%The extreme case of $G$ being a trivial chordal graph (i.e., $|V(T)| \leq 2$) will be handled separately.

Given a subset of vertices of $G$, we collect the leaves of their corresponding subtrees: for every $U\subseteq V_{G}$, we define $L(U)=\cup_{u\in U}{L(T_{u})}$.
Notice that for any non-empty $U\subseteq V_{G}$, we have $L(U) \neq \emptyset$, since $(T, \{T_v\}_{v\in V_G})$ is an expanded tree model.
Moreover, we associate the nodes of $T$ with the vertices of $G$ for which the nodes appear as leaves in their corresponding subtrees:
for every $V\subseteq V_{T}$, we define $L^{-1}(V)$ to be the set $\{u\in V_{G}:L(T_{u})\cap V\neq\emptyset\}$.
For $V\subseteq V_{T}$, we denote by $\min_T V$ the subset of minimal nodes of $V$ with respect to $\leq_T$.
Observe that $\min_T V$ is a set of pairwise incomparable nodes, so $|\min_{T}V|\leq |\min_T V_{T}| \leq \ell$.%, as $T$ has at most $\ell$ leaves.
%% perhaps, we could go for = intead of \leq but that would require $L(T)=\ell$.

%\todo[color=green!40,inline]{Major changes in the following lemma and subsequent proof.}
%\begin{restatable}{lemma}{lemLL}\label{lem:LL-1}
\begin{lemma}\label{lem:LL-1}
Let $U\subseteq V_{G}$ and $V\subseteq L(U)$. Then $L^{-1}(V)\subseteq U$.
\end{lemma}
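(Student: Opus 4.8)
The plan is to unfold the two definitions involved and then invoke the defining property of an expanded tree model, namely that each node of the host tree $T$ is a leaf of at most one subtree among $\{T_v\}_{v\in V_G}$. Since the claim is an inclusion of vertex sets, I would argue by picking an arbitrary element of the left-hand side and showing it lies in $U$.

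First I would take an arbitrary vertex $w\in L^{-1}(V)$, with the goal of proving $w\in U$. By the definition of $L^{-1}$, the set $L(T_w)\cap V$ is non-empty, so I can fix a host-tree node $x\in L(T_w)\cap V$. Because $V\subseteq L(U)$ by hypothesis, this node satisfies $x\in L(U)=\bigcup_{u\in U}L(T_u)$, and hence there exists some $u\in U$ with $x\in L(T_u)$. At this point the single node $x$ is simultaneously a leaf of the subtree $T_w$ and of the subtree $T_u$.

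The decisive step is then the third condition in the definition of an expanded tree model: every node of $T$ is a leaf of at most one subtree $T_v$. Applied to $x$, this says the vertex $v$ with $x\in L(T_v)$ is unique, which forces $w=u$; since $u\in U$, we get $w\in U$, and as $w$ was arbitrary this yields $L^{-1}(V)\subseteq U$. Honestly, this statement is a short definition chase rather than a deep argument, so I do not expect a real obstacle; the only point requiring care is to recognize that the expanded model's guarantee is \emph{exactly} what prohibits a host-tree node from being a common leaf of two distinct subtrees, and that it is precisely this uniqueness that collapses $w$ and $u$ into the same vertex of $G$.
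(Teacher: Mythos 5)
Your proof is correct and follows essentially the same route as the paper's: both arguments reduce to the third property of an expanded tree model (no node of $T$ serves as a leaf of two distinct subtrees $T_w$, $T_u$ with $w\neq u$), the only cosmetic difference being that you chase an arbitrary element of $L^{-1}(V)$ directly while the paper argues the inclusion $L^{-1}(L(U))\subseteq U$ by contrapositive.
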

%\end{restatable}
\begin{proof}
The fact that $V\subseteq L(U)$ yields $L^{-1}(V)\subseteq L^{-1}(L(U))$.
We will show that $L^{-1}(L(U))\subseteq U$.
Let $u$ be a vertex of $G$ such that $u\notin U$. Then $L(T_{u})\cap L(U)=\emptyset$, because $(T, \{T_v\}_{v\in V_G})$ is an expanded tree model. Thus $u\notin L^{-1}(L(U))$.
\end{proof}

Instead of manipulating with the actual vertices of $U$, our algorithm deals with the \emph{representatives} of $U$ which contain the vertices of $L^{-1}(\min_T L(U))$.
In particular, we are interested in the set of vertices $F_{\leq2}(U)=F_{1}(U)\cup F_{2}(U)$, where $F_{1}(U)=L^{-1}(\min_T\{L(U)\})$ and $F_{2}(U)=L^{-1}(\min_T\{L(U\setminus F_{1}(U))\})$.
%\begin{itemize}
%\item $F_{1}(U)=L^{-1}(\min_T\{L(U)\})$ \quad and \quad $F_{2}(U)=L^{-1}(\min_T\{L(U\setminus F_{1}(U))\})$.
%\item $F_{2}(U)=L^{-1}(\min_T\{L(U\setminus F_{1}(U))\})$.
%\end{itemize}
%which are exactly the vertices of $U_1 \cup U_2$, where $U_1 = L^{-1}(\min_T L(U))$ and $U_2 = L^{-1}(\min_T L(U \setminus U_1))$.
We show that the representatives hold all the necessary information needed from their actual vertices.
%% Note that in each case, the left-sided formula is obtained from Lemma...

%\begin{restatable}[*]{lemma}{lemAleaf}\label{lem:Aleaf}
\begin{lemma}\label{lem:Aleaf}
Let $u\in V_{G}$ and $W\subseteq V_{G}\setminus V_{u}$ such that $W\neq\emptyset$, $G[\{u\}\cup W]$ is a clique, and $G[W]\in\mathcal{F}_{S}$,
and let $u\in\optdir{V_{u}}{W}$.
\begin{itemize}
\item If $(\{u\}\cup W)\cap S \neq\emptyset$ then $W=\{w\}$ and no vertex of $V_{u}\cap N(u)\cap N(w)$ belongs to $\optdir{V_{u}}{\{w\}}$.
\item If $(\{u\}\cup W)\cap S=\emptyset$ then $\optdir{V_{u'}}{W\cap N(u')}=\optdir{V_{u'}}{F_{\leq2}((\{u\}\cup W)\cap N(u'))}$, for any vertex $u'\in\vartriangleleft u$.
\end{itemize}
%% $W_S = (\{u\}\cup W)\cap S$.
%%\begin{itemize}
%%\item If $W_S\neq\emptyset$ and $|W|\geq2$ then $\optdir{V_{u}}{W}=\displaystyle\bigcup_{u'\in\vartriangleleft u}{\optdir{V_{u'}}{W\cap N(u')}}$.
%%\item If $W_S\neq\emptyset$ and $W=\{w\}$ then $\optdir{V_{u}}{W}=\displaystyle\max_{\text{weight}}\left\{\bigcup_{u'\in\vartriangleleft u}{\optdir{V_{u'}}{\{w\}\cap N(u')}},\{u\}\cup\bigcup_{u'\in\vartriangleleft uw}{\optdir{V_{u'}}{\{u,w\}\cap N(u')}}\right\}$.\\
%%\item If $W_S=\emptyset$ then
%%$\optdir{V_{u}}{W}=\displaystyle\max_{\text{weight}}\left\{\bigcup_{u'\in\vartriangleleft u}{\optdir{V_{u'}}{W\cap N(u')}},\quad\{u\}\cup\bigcup_{u'\in\vartriangleleft u}{\optdir{V_{u'}}{F_{\leq2}((\{u\}\cup W)\cap N(u'))}}\right\}$.
%%\end{itemize}
\end{lemma}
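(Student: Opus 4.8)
The plan is to dispatch the case $(\{u\}\cup W)\cap S\neq\emptyset$ by a clique argument. Since $u\in\optdir{V_{u}}{W}$, the set $\{u\}\cup W$ is contained in $\optdir{V_{u}}{W}\cup W$, so $G[\{u\}\cup W]$ is an induced subgraph of the $S$-forest $G[\optdir{V_{u}}{W}\cup W]$ and hence contains no $S$-triangle. As $G[\{u\}\cup W]$ is a clique, any three of its vertices form a triangle; were $|W|\geq2$, such a triangle together with the vertex of $S$ guaranteed by the hypothesis would be an $S$-triangle, a contradiction. Hence $W=\{w\}$. For the second part I would suppose some $z\in V_{u}\cap N(u)\cap N(w)$ lies in $\optdir{V_{u}}{\{w\}}$; then $\{z,u,w\}$ is a triangle (using $zu,zw\in E_{G}$ and $uw\in E_{G}$ from the clique) sitting inside $G[\optdir{V_{u}}{\{w\}}\cup\{w\}]$, and since $u\in S$ or $w\in S$ it is an $S$-triangle, contradicting that $G[\optdir{V_{u}}{\{w\}}\cup\{w\}]$ is an $S$-forest.

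\textbf{Setup for the second item.} Fix $u'\in\vartriangleleft u$ and write $Z=(\{u\}\cup W)\cap N(u')$ and $Y_{0}=W\cap N(u')$; the goal is $\optdir{V_{u'}}{Y_{0}}=\optdir{V_{u'}}{F_{\leq2}(Z)}$. I would first record the structural facts that drive the argument. Since $Z\subseteq N(u')$ and $\{u\}\cup W$ is a clique, $Z$ is a clique, and by Observation~\ref{obs:ordering}~(\ref{obs:ord:comp}) each $z\in Z$ is comparable with $u'$; as $z\notin V_{u'}$ this gives $u'<z$. Next I would establish a leaf characterization of adjacency toward $u'$: arguing exactly as in the proof of Observation~\ref{obs:ordering}~(\ref{obs:ord:umbrella}), for $x\leq u'<z$ we have $xz\in E_{G}$ if and only if $r(x)\in V(T_{z})$, which (since $r(x)\leq_{T}r(u')<_{T}r(z)$) holds if and only if some leaf of $T_{z}$ is a descendant of $r(x)$. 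Finally, $Z\cap S=\emptyset$ by hypothesis.

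\textbf{The reduction engine.} I would apply Observation~\ref{obs:AXY}~(\ref{obs:XY:between}) with $X=V_{u'}$ and $Y=Z$. Because $Z$ is $S$-free we have $Z\setminus S=Z$, so the defining condition $Z\cap N(x)\subseteq Z\setminus S$ holds for every $x$, forcing $X'=V_{u'}\cap S$. Consequently $\optdir{V_{u'}}{Z}=\optdir{V_{u'}}{Y'}$ for every $Y'$ with $Z\cap N(V_{u'}\cap S)\subseteq Y'\subseteq Z$. Thus it suffices to show that both $Y_{0}$ and $F_{\leq2}(Z)$ are sandwiched between $Z\cap N(V_{u'}\cap S)$ and $Z$. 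The upper bounds are immediate: $Y_{0}\subseteq Z$ by definition, and $F_{\leq2}(Z)\subseteq Z$ by Lemma~\ref{lem:LL-1}, as $F_{1}(Z)$ and $F_{2}(Z)$ are $L^{-1}$ of subsets of $L(Z)$.

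\textbf{The crux.} The hard part, which I expect to be the main obstacle, is the lower bound $Z\cap N(V_{u'}\cap S)\subseteq F_{\leq2}(Z)$: every $z\in Z$ adjacent to some $s\in V_{u'}\cap S$ must already be captured by one of the two representative layers. Fixing such an $s$, the leaf characterization shows that each neighbor of $s$ in $Z$ owns a leaf of $T_{z}$ that is a descendant of $r(s)\leq_{T}r(u')$; passing from any such leaf to a $\leq_{T}$-minimal leaf of $L(Z)$ below it certifies adjacency of $s$ to a vertex of $F_{1}(Z)$, while the role of the second layer $F_{2}(Z)$ is precisely to certify the type-$(c)$ $S$-triangles, in which the vertex $s\in S$ must see two \emph{distinct} elements of $Z$; the deepest leaf not owned by an $F_{1}(Z)$-vertex supplies the required second neighbor. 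Casing on the two possible shapes of an $S$-triangle meeting $Z$ (two vertices in $V_{u'}$ and one in $Z$, or one vertex of $V_{u'}\cap S$ and two in $Z$) then shows that every constraint $Z$ imposes on candidate sets $U\subseteq V_{u'}$ is already imposed by $F_{\leq2}(Z)$. The remaining containment $Z\cap N(V_{u'}\cap S)\subseteq Y_{0}$ only concerns the single vertex $u$: if $u\in Z\setminus Y_{0}$, I would rule out $u\in N(V_{u'}\cap S)$, or otherwise show $u$ redundant, using $u\in\optdir{V_{u}}{W}$ together with the $S$-freeness of $\{u\}\cup W$. Combining the two sandwiches with the identity for $\optdir{V_{u'}}{Z}$ yields $\optdir{V_{u'}}{Y_{0}}=\optdir{V_{u'}}{Z}=\optdir{V_{u'}}{F_{\leq2}(Z)}$, completing the second item.
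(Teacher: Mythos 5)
Your first bullet is correct and is essentially the paper's own argument: a clique containing an $S$-vertex and at least three vertices yields an $S$-triangle inside the purported $S$-forest, so $|W|=1$, and a common neighbour $u'$ of $u$ and $w$ would close an $S$-triangle $\{u',u,w\}$.

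The second bullet has a genuine gap, located exactly where you place the ``crux'': the containment $Z\cap N(V_{u'}\cap S)\subseteq F_{\leq2}(Z)$, with $Z=(\{u\}\cup W)\cap N(u')$, is false, so the sandwich via Observation~\ref{obs:AXY}~(i) cannot be completed. Counterexample: take $W=\{w_1,w_2\}$ with $u<w_2<w_1$, let the leaves of $T_u$, $T_{w_2}$, $T_{w_1}$ lie on one root-to-leaf path of $T$, nested with the leaf of $T_u$ deepest and that of $T_{w_1}$ shallowest, and let all three subtrees reach above $r(u')$, so that $Z=\{u,w_1,w_2\}$; the hypotheses are easy to arrange (since $\{u\}\cup W$ is an $S$-free clique, a large weight on $u$ forces $u\in\optdir{V_{u}}{W}$). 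Then $F_{1}(Z)=\{u\}$ and $F_{2}(Z)=\{w_2\}$, so $F_{\leq2}(Z)=\{u,w_2\}$, yet any $s\in V_{u'}\cap S$ whose root lies strictly between $r(u')$ and the leaf of $T_{w_1}$ is adjacent to $w_1$, i.e.\ $w_1\in Z\cap N(V_{u'}\cap S)\setminus F_{\leq2}(Z)$. The representatives are simply not a neighbourhood cover, which is what Observation~\ref{obs:AXY}~(i) demands; what they preserve is the \emph{family} of $S$-triangle constraints. That is precisely how the paper argues: it shows that any $S$-triangle of $G[V_{u'}\cup Z]$ with one (resp.\ two) vertices in $Z$ can have that vertex (resp.\ those two vertices) replaced by one (resp.\ two distinct) vertices of $F_{\leq2}(Z)$ whose subtrees have leaves at least as deep, keeping the $V_{u'}$-vertices fixed; combined with $F_{\leq2}(Z)\subseteq Z$ this yields $\optdir{V_{u'}}{Z}=\optdir{V_{u'}}{F_{\leq2}(Z)}$ directly, with no appeal to Observation~\ref{obs:AXY}~(i). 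You mention this casing in passing, but in your write-up it is subordinated to proving the false containment, so the logical skeleton does not stand.

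The other half of your sandwich is also unfixable. You need $Z\cap N(V_{u'}\cap S)\subseteq W\cap N(u')$, i.e.\ that $u$ (when $u\in N(u')$) has no neighbour in $V_{u'}\cap S$, and you defer this (``rule out, or show $u$ redundant''). Neither option works: nothing in the hypotheses prevents $us\in E_G$ with $s\in V_{u'}\cap S$, and $u$ is not redundant, since placing $u$ in the environment adds genuine constraints --- each $S$-triangle $\{a,b,u\}$ with $a,b\in V_{u'}$ and $a\in S$ or $b\in S$ forbids $\{a,b\}$ on one side of your claimed identity but not on the other, and one can build instances satisfying every hypothesis in which the two sides have different weights. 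The resolution is that the paper's own proof (and its application inside Theorem~\ref{theo:leafage}) establishes the identity with $(\{u\}\cup W)\cap N(u')$ on the left-hand side; the ``$W\cap N(u')$'' in the statement should be read as that set. So the correct target is the single equality $\optdir{V_{u'}}{Z}=\optdir{V_{u'}}{F_{\leq2}(Z)}$, proved by triangle replacement, not by Observation~\ref{obs:AXY}~(i).
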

%\end{restatable}
\begin{proof}
%If $u\notin\optdir{V_{u}}{W}$, then $\displaystyle\optdir{V_{u}}{W}=\bigcup_{u'\in\vartriangleleft u}{\optdir{V_{u'}}{W\cap N(u')}}$ by Lemma~\ref{lem:together}~(i). Assume that $u\in\optdir{V_{u}}{W}$. We distinguish the following three cases.

Assume that some vertex of $\{u\}\cup W$ is in $S$ and $|W|\geq2$. Then there are $w_{1},w_{2}\in W$ such that $\{u,w_{1},w_{2}\}\cap S\neq\emptyset$. Since $\{u\}\cup W$ induces a clique, we have that $\{u,w_{1},w_{2}\}$ induces an $S$-triangle, contradicting that $u$ belongs to $\optdir{V_{u}}{W}$.
Thus $W=\{w\}$ because $W \neq \emptyset$.
%Assume that some vertex of $\{u\}\cup W$ is in $S$ and $W=\{w\}$.
Then $\optdir{V_{u}}{\{w\}}=\{u\}\cup\optdir{V_{u}\setminus\{u\}}{\{u,w\}}$ by definition.
%Since $\vartriangleleft uw = \max_{G}(V_{u}\setminus(N[u]\cap N(w)))$, any vertex $u' \in \vartriangleleft uw$ is adjacent to at most one of $u$ and $w$.
Observe that for any $u'\in V_{u}\cap N(u)\cap N(w)$, the vertex set $\{u',u,w\}$ induces an $S$-triangle, since $u$ and $w$ are adjacent.
Thus, no vertex of $V_{u}\cap N(u)\cap N(w)$ is in $\optdir{V_{u}}{\{w\}}$.
%% Proposition~
%
%According to Lemma~\ref{lem:niceofundirected}, the collection $\mathcal{V}_{\vartriangleleft uw}$ is a nice partition of $V_{u}\setminus(N[u]\cap N(w))$.
%Thus, by Observation~\ref{obs:AXY}~(\ref{obs:XY:nice}) we get the desired formula.
%%:
%%$\optdir{V_{u}}{\{w\}}=\{u\}\cup\bigcup_{u'\in<\vartriangleleft uw}{\optdir{V_{u'}}{\{u,w\}\cap N(u')}}$.

Assume that no vertex of $\{u\}\cup W$ is in $S$.
Consider a vertex $u'\in\vartriangleleft u$. Observe that for any two vertices $a\in V_{u'}$ and $b\in V_{G}\setminus V_{u'}$ to be adjacent, since $r(a)\leq r(u')<r(b)$ already holds, $l<r(a)$ must hold for some $l\in L(T_{b})$.
Let $W'=(\{u\}\cup W)\cap N(u')$ and $F=F_{\leq2}(W')$. We will show that $F$ is a representation of $W'$ on $V_{u'}$.
\begin{itemize}
\item Assume there are two vertices $u_{1}'',u_{2}''\in V_{u'}$ and a vertex $w'\in W'$ such that $\{u_{1}'',u_{2}'',w'\}$ induces an $S$-triangle. Then $u_{1}'',u_{2}''$ are adjacent, so without loss of generality we may assume that $r(u_{1}'')<r(u_{2}'')$. Let $l$ be a node of $L(T_{w'})$ such that $l<r(u_{1}'')$. There is a vertex $w''\in F$ such that $l'\leq l$ for some $l'\in L(T_{w''})$. This implies that the set $\{u_{1}'',u_{2}'',w''\}$ also induces an $S$-triangle.
\item Assume there is a vertex $u''\in V_{u'}$ and two vertices $w_{1}',w_{2}'\in W'$ such that $\{u'',w_{1}',w_{2}'\}$ induces an $S$-triangle. Let $l_{1}$ and $l_{2}$ be nodes of $L(T_{w_{1}'})$ and $L(T_{w_{2}'})$ respectively such that $l_{1},l_{2}<r(u'')$.
Then, there are two distinct vertices $w_{1}'',w_{2}''\in F$ such that $l_{1}'\leq l_{1}$ and $l_{2}'\leq l_{2}$ for some $l_{1}'\in L(T_{w_{1}''})$ and some $l_{2}'\in L(T_{w_{2}''})$. This implies that the set $\{u'',w_{1}'',w_{2}''\}$ also induces an $S$-triangle.
%\todo[color=blue!40,inline]{Showing the not difficult claim involves a number of cases. Should we need to show it, it should precede this proof.}
\end{itemize}
We conclude that $\optdir{V_{u'}}{W'}=\optdir{V_{u'}}{F}$.
%Now observe that the collection $\mathcal{V}_{\vartriangleleft u}$ is a nice partition of $V_{u}\setminus\{u\}$ by Lemma~\ref{lem:niceofundirected}.
%Thus, Observation~\ref{obs:AXY}~(\ref{obs:XY:nice}) and Lemma~\ref{prop:neighb} imply the corresponding formula.
\end{proof}

We next show that Lemma~\ref{lem:together}~(ii) and Lemma~\ref{lem:Aleaf} are enough to develop a dynamic programming scheme.
As the size of the representatives is bounded with respect to $\ell$ by Lemma~\ref{lem:LL-1}, we are able to store a bounded number of partial subsolutions.
In particular we show that we only need to compute $\optdir{X}{Y}$ such that $|X| = O(n)$ and $|Y| \leq 2\ell+1$.

%\begin{restatable}[*]{theorem}{theoleafage}\label{theo:leafage}
\begin{theorem}\label{theo:leafage}
There is an algorithm that, given a connected chordal graph $G$ with leafage $\ell \geq 2$ and an expanded tree model of $G$, solves \textsc{Subset Feedback Vertex Set} in $O(n^{2\ell+1})$ time.
%\textsc{Subset Feedback Vertex Set} can be solved on chordal graphs with leafage $\ell\geq2$ in $O(n^{2\ell+1})$ time.
\end{theorem}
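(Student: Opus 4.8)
The plan is to fill a bottom-up dynamic-programming table that stores $\optdir{V_u}{Y}$ for every vertex $u \in V_G$ and every \emph{relevant} set $Y$, processing the vertices along a linear extension of $\leq$ so that each value depends only on values $\optdir{V_{u'}}{Y'}$ with $u' < u$. First I would pin down the relevant sets: for a fixed $u$ I only ever need $\optdir{V_u}{Y}$ for $Y \subseteq N(u)\setminus V_u$ with $G[Y]\in\mathcal{F}_S$ and $|Y|\leq 2\ell$, together with $Y=\emptyset$. Two facts make this legitimate. By Observation~\ref{obs:ordering} any two vertices of $N(u)\setminus V_u$ are comparable and adjacent, so $\{u\}\cup Y$ is automatically a clique, which is exactly the hypothesis required to invoke Lemma~\ref{lem:Aleaf}. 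And the size bound comes from the representatives: $\min_T L(U)$ is an antichain of $T$, hence has at most $\ell$ nodes, while $L^{-1}$ maps each node to at most one vertex of an expanded model, so $|F_1(U)|,|F_2(U)|\leq\ell$ and thus $|F_{\leq 2}(U)|\leq 2\ell$, with $F_{\leq 2}(U)\subseteq U$ by Lemma~\ref{lem:LL-1}. Consequently there are only $O(n^{2\ell})$ relevant sets $Y$ per vertex, hence $O(n^{2\ell+1})$ table entries.

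At an entry $\optdir{V_u}{Y}$ I would branch on whether $u$ lies in the optimum and keep the heavier candidate, combining Lemma~\ref{lem:together} and Lemma~\ref{lem:Aleaf}. If $u$ is excluded, then $\mathcal{V}_{\vartriangleleft u}$ is a nice partition of $V_u\setminus\{u\}$ (Lemma~\ref{lem:niceofundirected}), so Lemma~\ref{lem:together}(i) with Lemma~\ref{prop:neighb} gives $\bigcup_{u'\in\vartriangleleft u}\optdir{V_{u'}}{Y\cap N(u')}$, where each sub-instance carries $Y\cap N(u')$ of size at most $|Y|\leq 2\ell$. If $u$ is included, this is feasible only when $G[\{u\}\cup Y]\in\mathcal{F}_S$, and here Lemma~\ref{lem:Aleaf} does the essential work. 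When $(\{u\}\cup Y)\cap S=\emptyset$ I replace each grown set $(\{u\}\cup Y)\cap N(u')$ by its representative $F_{\leq 2}((\{u\}\cup Y)\cap N(u'))$, of size at most $2\ell$, without changing $\optdir{V_{u'}}{\cdot}$. When $(\{u\}\cup Y)\cap S\neq\emptyset$, Lemma~\ref{lem:Aleaf} forces $Y=\{w\}$ to be a singleton and excludes $N[u]\cap N(w)$ from the optimum, so I instead use the nice partition $\mathcal{V}_{\vartriangleleft uw}$ of $V_u\setminus(N[u]\cap N(w))$ (second statement of Lemma~\ref{lem:niceofundirected}), whose sub-instances carry $\{u,w\}\cap N(u'')$ of size at most two. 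In every branch the sub-instances are strictly smaller in $\leq$ and keep $|Y|\leq 2\ell$, so the recursion is well-founded and never leaves the table. The base case $\vartriangleleft u=\emptyset$, i.e.\ $V_u=\{u\}$, is immediate: $\optdir{\{u\}}{Y}$ equals $\{u\}$ if $G[\{u\}\cup Y]\in\mathcal{F}_S$ and $\emptyset$ otherwise.

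To recover the global optimum $\optdir{V_G}{\emptyset}$ I would assemble the top level from the maximal vertices $u_1,\dots,u_k$ of $(V_G,\leq)$. Their roots are pairwise incomparable, so the ancestor chains in $T$ force the sets $V_{u_1},\dots,V_{u_k}$ to be pairwise disjoint and, by the comparability argument of Lemma~\ref{prop:disc}, pairwise non-adjacent; hence $\{V_{u_i}\}$ is a nice partition of $V_G$ and Observation~\ref{obs:AXY}(ii) yields $\optdir{V_G}{\emptyset}=\bigcup_i \optdir{V_{u_i}}{\emptyset}$, each summand being a stored entry. For the running time, the table has $O(n^{2\ell+1})$ entries; filling one iterates over $\vartriangleleft u$ (or $\vartriangleleft uw$) and, per child, forms one neighborhood intersection, one representative $F_{\leq 2}$, and one lookup. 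Since $\sum_u|\vartriangleleft u|=O(n)$ (it counts the edges of the maximal-predecessor forest), summing over all entries gives the claimed $O(n^{2\ell+1})$ bound.

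The step I expect to be the main obstacle is exactly the size control in the ``include $u$'' branch: passed down naively, $(\{u\}\cup Y)\cap N(u')$ would make $|Y|$ grow by one at each level and blow the table up to $n^{\Omega(\mathrm{depth})}$. The representatives $F_{\leq 2}$ and Lemma~\ref{lem:Aleaf} are what collapse this set back to size $\leq 2\ell$ while provably preserving $\optdir{V_{u'}}{\cdot}$, since an $S$-triangle meeting $V_{u'}$ only ``sees'' the lowest leaves of the subtrees in $Y$, which the representative retains. Getting the $S$ case analysis right --- singleton forcing versus representative replacement --- and verifying that the clique hypothesis of Lemma~\ref{lem:Aleaf} always holds (via Observation~\ref{obs:ordering}) is where the care lies; the counting and the top-level assembly are then routine.
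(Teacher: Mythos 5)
Your proposal is correct and is essentially the paper's own proof: the same DP table of entries $\optdir{V_u}{W}$ over cliques $W$ adjacent to $u$, bounded to size $2\ell$ by the representatives $F_{\leq2}$, with the same exclude/include branching through Lemma~\ref{lem:together}, Lemma~\ref{lem:Aleaf} and the nice partitions $\mathcal{V}_{\vartriangleleft u}$, $\mathcal{V}_{\vartriangleleft uw}$, and the same $O(n^{2\ell+1})$ counting. One corner of your case split should be made explicit: when $Y=\emptyset$ and $u\in S$, Lemma~\ref{lem:Aleaf} does not apply (it assumes $W\neq\emptyset$) and there is no $w$ with which to form $\vartriangleleft uw$, yet including $u$ can still be feasible; that entry is exactly what Lemma~\ref{lem:together}~(ii) computes, so invoking it there (as the paper does) closes the point without any new idea.
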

%\end{theorem}
%\end{restatable}
\begin{proof}
Let  $\{T, \{T_v\}_{v\in V_G}\}$ be an expanded tree model of $G$ and let $r$ be the root of $T$.
Our task is to solve SFVS by computing $\optdir{V_{r}}{\emptyset}$.
To do so, we construct a dynamic programming algorithm that visits the nodes on $T$ in a bottom-up fashion, starting from the leaves and moving towards the root $r$.
At each node $u$ of $T$, we store the values corresponding to
$\optdir{V_{u}}{\emptyset}$ and $\optdir{V_{u}}{W}$
for every $W\subseteq V_{G}\setminus V_{u}$ such that
$W\neq\emptyset$, $G[W]\in\mathcal{F}_{S}$, and $G[\{u\}\cup W]$ is a clique.
In order to compute $\optdir{V_{u}}{\emptyset}$, we apply Lemma~\ref{lem:together}~(ii) by collecting all corresponding values on the necessary descendants of $u$.
For computing $\optdir{V_{u}}{W}$, we apply Lemma~\ref{lem:Aleaf} by looking at the values stored on the necessary descendants of $u$.
%Let $u\in V_{G}$ and $W\subseteq V_{G}\setminus V_{u}$ such that $W\neq\emptyset$, $G[\{u\}\cup W]$ is a clique, and $G[W]\in\mathcal{F}_{S}$, and let $W_S = (\{u\}\cup W)\cap S$.
In particular, we deduce the following formulas, where $W_S = (\{u\}\cup W)\cap S$:
%%Then, $\optdir{V_{u}}{W}=$
\begin{itemize}
\item If $W_S\neq\emptyset$ and $|W|\geq2$ then $\optdir{V_{u}}{W}=\displaystyle\bigcup_{u'\in\vartriangleleft u}{\optdir{V_{u'}}{W\cap N(u')}}$. \\
Lemma~\ref{lem:Aleaf} implies that $u\notin\optdir{V_{u}}{W}$. Thus by Lemma~\ref{lem:together}~(i) we get the claimed formula.
\item If $W_S\neq\emptyset$ and $W=\{w\}$ then $\optdir{V_{u}}{W}=\displaystyle\max_{\text{weight}}\left\{\bigcup_{u'\in\vartriangleleft u}{\optdir{V_{u'}}{\{w\}\cap N(u')}},\{u\}\cup\bigcup_{u'\in\vartriangleleft uw}{\optdir{V_{u'}}{\{u,w\}\cap N(u')}}\right\}$.
    According to Lemma~\ref{lem:niceofundirected}, the collection $\mathcal{V}_{\vartriangleleft uw}$ is a nice partition of $V_{u}\setminus(N[u]\cap N(w))$.
Thus, by Observation~\ref{obs:AXY} we get the desired formula.
\item If $W_S=\emptyset$ then
$\optdir{V_{u}}{W}=\displaystyle\max_{\text{weight}}\left\{\bigcup_{u'\in\vartriangleleft u}{\optdir{V_{u'}}{W\cap N(u')}},\quad\{u\}\cup\bigcup_{u'\in\vartriangleleft u}{\optdir{V_{u'}}{F_{\leq2}((\{u\}\cup W)\cap N(u'))}}\right\}$. \\
According to Lemma~\ref{lem:niceofundirected}, the collection $\mathcal{V}_{\vartriangleleft u}$ is a nice partition of $V_{u}\setminus\{u\}$. Thus, Observation~\ref{obs:AXY} and Lemma~\ref{prop:neighb} imply the corresponding formula.
\end{itemize}
Notice that we compute $\optdir{V_{u}}{W}$ for $n^{O(\ell)}$ sets $W_1, \ldots, W_t \subseteq V_{G}\setminus V_{u}$
such that $W$ is represented by a set $W_i$ (i.e., there exists $W_i$ such that $\optdir{V_{u}}{W} = \optdir{V_{u}}{W_i}$).
At the root $r$ of $T$, we only compute $\optdir{V_{r}}{\emptyset}$ by applying Lemma~\ref{lem:together}~(ii).

Regarding the correctness of the algorithm, we show that the recursive formulas given in Lemma~\ref{lem:together}~(ii) and Lemma~\ref{lem:Aleaf} require only sets that are also computed via these formulas.
The formula given in Lemma~\ref{lem:together}~(ii) requires sets $\optdir{V_{u'}}{W'}$ where $u'<u$ and either $W'=\emptyset$ or $W'=\{w'\}$ such that $u'<w'$ and $u'w'\in E_{G}$. In the second case, it is not difficult to see that $u'$ and $W'$ satisfy the hypothesis of Lemma~\ref{lem:Aleaf} as they induce a graph in $\mathcal{F}_{S}$.
Notice that an induced subgraph of a clique is also a clique and an induced subgraph of a graph in $\mathcal{F}_{S}$ is also a graph in $\mathcal{F}_{S}$.
The formulas given in Lemma~\ref{lem:Aleaf} require sets $\optdir{V_{u'}}{W'}$ of the following three cases:
\begin{itemize}
\item Sets such that $u'\in\vartriangleleft u$ and $W'=W\cap N(u')\subseteq W$. These sets are clearly computed via the formulas of Lemma~\ref{lem:together}~(ii) or \ref{lem:Aleaf} according to whether $W$ is empty or not.
\item Sets such that $u'\in\vartriangleleft uw$ and $W'=\{u,w\}\cap N(u')$. Since $u'$ is only adjacent to at most one of $u$ and $w$, we have either $W'=\emptyset$ or $W'=\{w'\}$ such that $u'<w'$ and $u'w'\in E_{G}$.
\item Sets such that $u'\in\vartriangleleft u$ and $W'=F_{\leq2}((\{u\}\cup W)\cap N(u'))\subseteq(\{u\}\cup W)\cap N(u')\subseteq\{u\}\cup W$ where $(\{u\}\cup W)\cap S=\emptyset$. Since $G[\{u\}\cup W]\in\mathcal{F}_{S}$ and is a clique, we obtain that $G[W']\in\mathcal{F}_{S}$ and $G[\{u'\}\cup W']$ is a clique.
\end{itemize}
We conclude that in all cases the sets required by a formula of Lemma~\ref{lem:Aleaf} are computed via a formula given in Lemma~\ref{lem:together}~(ii) or Lemma~\ref{lem:Aleaf}.

We now analyze the running time of our algorithm.
We begin by determining for every pair of nodes $x,y$ of $T$ whether $x<y$, $y<x$ or they are incomparable.
Since for any one pair this can be done in $O(n)$ time, we complete this task in $O(n^{3})$ time.
Notice that, since the input graph has leafage $\ell$, any subset of $V_{T}$ of pairwise incomparable nodes is of size at most $\ell$.
This fact implies that $|F_{\leq2}(U)|\leq 2\ell$ for any $U\subseteq V_{G}$.
Due to the recursion, we only need to compute $F_{\leq2}(U)$ for sets $U$ such that $|U|\leq 2\ell+1$.
Computing any such set requires at most $(2\ell+1)^{2}$ comparisons and consequently constant time, so the total preprocessing time is $O(n^{2\ell+1})$.
Now consider a set $\optdir{X}{Y}$.
The parts of any partition of $X$ that we use in our formulas are rooted in pairwise incomparable nodes.
This means that any set $\optdir{X}{Y}$ is computed in $O(\ell)$ time.
Thus we conclude that total running time of our algorithm is $O(n^{2\ell+1})$.
%We conclude that the total running time of our algorithm is $O(n^{2\ell+1})$.
\end{proof}

%\medskip
%% perhaps explain the disconnected case.
If we let the leafage of a chordal graph to be the maximum over all of its connected components then we reach to the following result.

%\begin{restatable}[*]{corollary}{corleafage}\label{cor:leafage}
\begin{corollary}\label{cor:leafage}
\textsc{Subset Feedback Vertex Set} can be solved in time $n^{O(\ell)}$ for chordal graphs with leafage at most $\ell$.
\end{corollary}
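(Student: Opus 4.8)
The plan is to reduce the general statement to the connected, non-trivial case already settled by Theorem~\ref{theo:leafage}, absorbing the preprocessing needed to obtain an expanded tree model into the running time. First I would dispose of disconnectedness. Since every cycle of $G$ lies entirely within a single connected component, every $S$-cycle does too; hence $G[U]\in\mathcal{F}_S$ if and only if $U$ restricted to each component induces an $S$-forest, and a maximum-weight $S$-forest of $G$ is exactly the union of maximum-weight $S$-forests of its components. It therefore suffices to solve SFVS on each connected component $G_i$ separately and return the union. Writing $n_i=|V(G_i)|$ we have $\sum_i n_i=n$, and each $G_i$ is a connected chordal graph whose leafage is at most $\ell$ by hypothesis, so I may focus on a single connected chordal graph.

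Next I would produce the representation required by the theorem. Using the result of Habib and Stacho~\cite{HabibS09} recalled in the preliminaries, for a connected chordal graph one computes in $O(n^3)$ time a tree model whose host tree attains the leafage and satisfies $|V(T)|\le n$. Feeding this model into Lemma~\ref{lem:expanded} yields, in $O(n^2)$ further time, an expanded tree model with the same number of leaves; thus the hypotheses of Theorem~\ref{theo:leafage} can be met algorithmically rather than assumed. I would then dispatch on the leafage of the component. If the component is an interval graph (equivalently, leafage at most $2$, which also covers the degenerate host trees with $|V(T)|\le 2$), SFVS is solved in $O(nm)$ time by the interval-graph algorithm~\cite{PapT19}. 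Otherwise the leafage is at least $3$, so the component is non-trivial, and I would run Theorem~\ref{theo:leafage} on the constructed expanded tree model, solving SFVS in $O(n_i^{2\ell+1})$ time.

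Finally I would assemble the total running time. Per component the cost is dominated by the $O(n_i^{2\ell+1})$ term together with the polynomial preprocessing $O(n_i^3)$, and summing over components I use the elementary superadditivity bound $\sum_i n_i^{2\ell+1}\le\bigl(\sum_i n_i\bigr)^{2\ell+1}=n^{2\ell+1}$, valid because $2\ell+1$ is a positive integer so the multinomial expansion of the right-hand side contains $\sum_i n_i^{2\ell+1}$ plus only nonnegative cross terms. Hence the overall running time is $n^{O(\ell)}$, establishing the corollary. I do not anticipate a genuine obstacle: the proof is essentially bookkeeping, and the only two points that merit care are verifying that the componentwise decomposition is exact for $S$-forests, which is immediate since $S$-cycles cannot cross components, and confirming the summed-running-time bound above, which is where a careless estimate could spuriously inflate the exponent.
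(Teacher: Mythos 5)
Your proposal is correct and follows essentially the same route as the paper's proof: decompose into connected components, compute each component's tree model via Habib--Stacho, handle the small-leafage/interval case with the algorithm of \cite{PapT19}, and otherwise apply Lemma~\ref{lem:expanded} followed by Theorem~\ref{theo:leafage}. The only cosmetic difference is that you justify the componentwise decomposition directly (no $S$-cycle crosses components) and spell out the superadditivity bound on running times, whereas the paper invokes Observation~\ref{obs:AXY} via nice partitions and leaves the time bookkeeping implicit.
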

%\end{restatable}
\begin{proof}
For every connected component $C$ of a chordal graph $G$, we compute its leafage and the corresponding tree model $\mathcal{T}(C)$ by using the $O(n^3)$-time algorithm of Habib and Stacho~\cite{HabibS09}.
If the leafage of $C$ is less than two, then $C$ is an interval graph and we can compute $\optdir{\emptyset}{V(C)}$ in $n^{O(1)}$ time by running the algorithm for SFVS on interval graphs given in \cite{PapT19}.
Otherwise, we compute the expanded tree model $\mathcal{T'}(C)$ from $\mathcal{T}(C)$ by Lemma~\ref{lem:expanded} in $O(n^2)$ time.
Applying Theorem~\ref{theo:leafage} on $\mathcal{T'}(C)$ shows that $\optdir{V(C)}{\emptyset}$ can be computed in $n^{O(\ell)}$.
Since the connected components of $G$ form a nice partition of $V(G)$, Observation~\ref{obs:AXY} implies that $\optdir{V(G)}{\emptyset}$ is the union of all $\optdir{V(C)}{\emptyset}$ for every connected component $C$ of $G$.
Therefore all steps can carried out in $n^{O(\ell)}$ time.
\end{proof}

We next prove that we can hardly avoid the dependence of the exponent in the stated running time, since we show that \textsc{Subset Feedback Vertex Set} is W[1]-hard parameterized by the leafage of a chordal graph.
Our reduction is inspired by the W[1]-hardness of \textsc{Feedback Vertex Set} parameterized by the mim-width given by Jaffke et al.~in \cite{JaffkeKT20}.

\begin{theorem}\label{theo:whard}
\textsc{Subset Feedback Vertex Set} on chordal graphs is W[1]-hard when parameterized by its leafage.
\end{theorem}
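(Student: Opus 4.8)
The plan is to give a polynomial-time parameterized reduction from \textsc{Multicolored Clique}---decide whether a graph $H$ whose vertex set is partitioned into color classes $V_1,\dots,V_k$ contains a clique with exactly one vertex from each class---which is \W[1]-hard parameterized by $k$. From such an instance I would construct, in polynomial time, a vertex-weighted chordal graph $G$ together with a set $S\subseteq V_G$ and a target weight $W$, such that $G$ admits an $S$-forest of weight at least $W$ (equivalently, a subset feedback vertex set of complementary weight) if and only if $H$ has a multicolored clique. The decisive point is that the host tree of $G$ will have only $O(k)$ leaves, so $\ell(G)$ is bounded by a function of $k$; together with the polynomial running time this is exactly an \FPT-reduction witnessing \W[1]-hardness parameterized by leafage, in the spirit of the \textsc{Feedback Vertex Set}/mim-width reduction of Jaffke et al.~\cite{JaffkeKT20}.

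For the host tree I would take a \emph{spine} (a central path) with $k$ attached legs $P_1,\dots,P_k$, one per color class, so that $|L(T)|\le k+2$ and hence $\ell(G)=O(k)$, while the number of nodes along the spine and legs stays polynomial. Each vertex of $V_i$ is represented by a subtree living on leg $P_i$ (an interval within the leg), so that \emph{within} a single class the model is interval-like and the linear order along $P_i$ can address individual vertices. A \emph{selection gadget} on each leg, built from $S$-labeled vertices of carefully chosen weight, will force every maximum-weight $S$-forest to retain the gadget of exactly one vertex of $V_i$: retaining two mutually conflicting representatives creates an unavoidable $S$-triangle, whereas retaining none forfeits too much weight, so the optimum must commit to a single representative $a(i)$ per class.

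The heart of the reduction is a \emph{verification gadget} for every ordered pair of classes and every \emph{non-edge} $v_{i,a}v_{j,b}\notin E_H$. Here I would exploit the spine: a subtree reaching both $P_i$ and $P_j$ must contain the portion of the spine between their attachment points, and by overlapping such spine intervals one can make an $S$-triangle appear \emph{precisely} when both $a(i)=a$ and $a(j)=b$ are the selected representatives of a forbidden pair. Placing a vertex of $S$ in each such gadget forces any optimal $S$-forest to lose weight whenever a non-adjacent pair is chosen; with the weights scaled so that breaking even one forbidden triangle costs strictly more than any achievable surplus, an $S$-forest of weight exactly $W$ exists iff no selected pair is a non-edge, i.e.\ iff the $a(i)$ form a clique in $H$. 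Crucially, \emph{all} of these gadgets live on the fixed spine and the $k$ existing legs and therefore introduce \emph{no} new leaves, so the leafage stays $O(k)$ even though the graph encodes $\Theta(k^2)$ pairwise constraints and carries polynomially many gadget vertices.

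The main obstacle is precisely this tension: routing all $\Theta(k^2)$ adjacency checks through a host tree with only $O(k)$ leaves while guaranteeing that the $S$-triangles of $G$ correspond exactly to selection conflicts and constraint violations, and to nothing else. Because any two subtrees meeting the spine in overlapping intervals become adjacent, the delicate part is laying out the per-pair and per-vertex intervals in disjoint ``slots'' along the spine and along each leg so that spurious $S$-triangles never arise between unrelated gadgets; this is the same use of linear orders that drives mim-width hardness, transported to the leaf-bounded tree model. Once the layout is clean, correctness is routine in both directions: from a multicolored clique one keeps the chosen representatives together with every untouched gadget vertex to obtain an $S$-forest of weight $W$, and conversely the weight thresholds force any $S$-forest of weight $W$ to select exactly one representative per class with no forbidden pair, which decodes to a multicolored clique. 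Since $G$ is defined through an explicit tree model it is chordal by construction, and the reduction clearly runs in polynomial time with $\ell(G)=O(k)$, completing the argument.
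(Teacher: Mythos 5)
Your high-level framing (FPT-reduction from \textsc{Multicolored Clique}, explicit tree model with leaf count bounded by a function of $k$, weights to control the optimum) matches the paper, but the core gadget you rely on does not work, and it is exactly the step you defer with ``once the layout is clean, correctness is routine.'' You want, for every \emph{non-edge} $v_{i,a}v_{j,b}$, an $S$-triangle that appears \emph{precisely when both} $a(i)=a$ and $a(j)=b$ are selected. In a tree model, a triangle needs three pairwise intersecting subtrees. The selection information for class $i$ lives on leg $P_i$ and for class $j$ on leg $P_j$, and subtrees confined to different legs never intersect; so any triangle witnessing the conjunction must contain at least two subtrees that span the spine between the legs. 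If the per-vertex representatives themselves span the spine, then \emph{all} representatives of classes $i$ and $j$ pairwise intersect along the spine, and any gadget vertex placed there is adjacent to all of them, creating spurious $S$-triangles for every pair, not just the forbidden one. If instead only pair-specific gadget vertices span the spine while representatives stay local, then the only realizable triangles have the form $\{g,g',\text{local representative of }a\}$ or $\{g,g',\text{local representative of }b\}$, whose semantics is a \emph{disjunction}: a cost is incurred whenever $a$ \emph{or} $b$ is selected. For non-edge gadgets this disjunction optimizes in the wrong direction --- after regularizing non-degrees, minimizing the number of non-edges incident to the selection is the same as \emph{maximizing} the number of non-edges inside it --- so your weight threshold would certify the opposite of a clique.

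The paper's proof resolves this by inverting the roles: it introduces a vertex $e_{ij}^{ab}$ of weight $p/2$ for every \emph{edge} of $G$ (these all contain the root, forming a clique), places a huge-weight $S$-vertex $s_{ij}$ on a dedicated leaf $y_{ij}$ for each class pair so that at most one edge-vertex per pair survives, and enforces consistency of the $\binom{k}{2}$ surviving edge-vertices with a single choice $a_i$ per class by a counting argument: each class gets \emph{two} legs of length $p$, each edge-vertex $e_{ij}^{ab}$ covers the complementary interval pair $x_i^{a-p},\dots,x_i^{a}$, and unit-weight twin $S$-vertices sitting on every leg node force a deletion cost of exactly $2p$ per class if and only if all surviving edge-vertices at class $i$ agree on the same endpoint (Claims 5.1--5.3 in the paper's proof). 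Note also that this uses $2k+\binom{k}{2}=O(k^2)$ leaves rather than your claimed $O(k)$, which is still perfectly fine for W[1]-hardness. With disjunction-type penalties being the only ones directly expressible by $S$-triangles across legs, some such edge-selection-plus-counting mechanism appears unavoidable, and your proposal does not contain it.
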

\begin{proof}
We provide a reduction from the \textsc{Multicolored Clique} problem. Given a graph $G=(V,E)$ and a partition $\{V_{i}\}_{i\in[k]}$ of $V$ into $k$ parts, the \textsc{Multicolored Clique} (MCC) problem asks whether $G$ has a clique that contains exactly one vertex of $V_{i}$ for every $i\in[k]$. It is known that MCC is W[1]-hard when parameterized by $k$ \cite{FellowsHRV09,Pietrzak03}.

\begin{figure}\label{fig:subtree}
\begin{center}
\begin{tikzpicture}
\tikzstyle{b}=[circle,inner sep=2pt,draw,fill]
\tikzstyle{p}=[decorate,decoration={snake,amplitude=.4mm,segment length=2mm,post length=0mm}]
%nodes
\draw(0,0)node[b](r){}node[anchor=south]{$r$}++(0,-1)node[b](y){}node[anchor=north]{$y_{ij}$};
\draw(-1,0)node[b](xi1){}node[anchor=south]{$x_{i}^{1}$}++(-1,0)node[b](xi2){}node[anchor=south]{$x_{i}^{2}$}++(-1,0)node[b](xi3){}node[anchor=south]{$x_{i}^{p-1}$}++(-1,0)node[b](xi4){}node[anchor=south]{$x_{i}^{p}$};
\draw(-1,-1)node[b](xim1){}node[anchor=north]{$x_{i}^{-1}$}++(-1,0)node[b](xim2){}node[anchor=north]{$x_{i}^{-2}$}++(-1,0)node[b](xim3){}node[anchor=north]{$x_{i}^{1-p}$}++(-1,0)node[b](xim4){}node[anchor=north]{$x_{i}^{-p}$};
\draw(1,0)node[b](xj1){}node[anchor=south]{$x_{j}^{1}$}++(1,0)node[b](xj2){}node[anchor=south]{$x_{j}^{2}$}++(1,0)node[b](xj3){}node[anchor=south]{$x_{j}^{p-1}$}++(1,0)node[b](xj4){}node[anchor=south]{$x_{j}^{p}$};
\draw(1,-1)node[b](xjm1){}node[anchor=north]{$x_{j}^{-1}$}++(1,0)node[b](xjm2){}node[anchor=north]{$x_{j}^{-2}$}++(1,0)node[b](xjm3){}node[anchor=north]{$x_{j}^{1-p}$}++(1,0)node[b](xjm4){}node[anchor=north]{$x_{j}^{-p}$};
%edges
\draw[-](r)--(y);
\draw[-](r)--(xi1)--(xi2)
             (xi3)--(xi4);
\draw[-,p](xi2)--(xi3);
\draw[-](r)--(xim1)--(xim2)
             (xim3)--(xim4);
\draw[-,p](xim2)--(xim3);
\draw[-](r)--(xj1)--(xj2)
             (xj3)--(xj4);
\draw[-,p](xj2)--(xj3);
\draw[-](r)--(xjm1)--(xjm2)
             (xjm3)--(xjm4);
\draw[-,p](xjm2)--(xjm3);
\end{tikzpicture}
\end{center}
\caption{The subtree $T({\{x_{i}^{+},x_{i}^{-},y_{ij},x_{j}^{+},x_{j}^{-}\}})$ of $T$ for some $i,j\in[k]$ such that $i<j$.}
\end{figure}
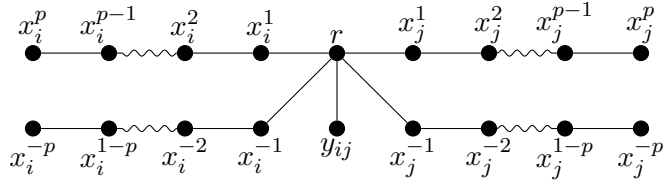

Let $(G=(V,E),\{V_{i}\}_{i\in[k]})$ be an instance of MCC. We assume that $k\geq10$ and without loss of generality that there exists $p\in\mathbb{N}$ such that $V_{i}=\{v_{i}^{j}\}_{j\in[p]}$ for every $i\in[k]$. We consider the $\frac{k}{2}(k+3)$-star $T$ with root $r$ and leaves $x_{i}^{+},x_{i}^{-}$ for every $i\in[k]$ and $y_{ij}$ for every $i,j\in[k]$ such that $i<j$. We modify the star $T$ as follows: for every $i\in[k]$, we transform the edge $\mypath{r,x_{i}^{+}}$ through edge subdivisions into $\mypath{r=x_{i}^{0},x_{i}^{1},\ldots,x_{i}^{p}=x_{i}^{+}}$ and in a similar way we replace the edge $\mypath{r,x_{i}^{-}}$ by $\mypath{r=x_{i}^{0},x_{i}^{-1},\ldots,x_{i}^{-p}=x_{i}^{-}}$.
Given a set $X$ of vertices of $T$, we write $T(X)$ to denote the minimal subtree of $T$ containing all vertices of $X$. A certain subtree of $T$ is depicted in Figure \ref{fig:subtree}. We define the following subtrees of $T$:
\begin{itemize}
\item For every $i,j\in[k]$ such that $i<j$ and for every $a,b\in[p]$ such that $v_{i}^{a}v_{j}^{b}\in E$, we define $e_{ij}^{ab}=T(\{x_{i}^{a},x_{i}^{a-p},y_{ij},x_{j}^{b},x_{j}^{b-p}\})$. We denote by $R$ the set of all these subtrees. For all $i,j\in[k]$ such that $i<j$, we denote by $R_{ij}$ the set of all subtrees in $R$ with subscript $ij$. For all $i\in[k]$:
\begin{itemize}
\item We denote the set $\{e_{ij}^{ab}\in R\}\cup\{e_{ji}^{ba}\in R\}$ by $R_{i}$.
\item For all $a_{i}\in[p]$, we denote the set $\{e_{ij}^{a_{i}b}\in R\}\cup\{e_{ji}^{ba_{i}}\in R\}$ by $R_{i}^{a_{i}}$.
\end{itemize}
\item For every $i\in[k]$ and $a\in[p]$, we define $s_{i}^{a,1}=s_{i}^{a,2}=T[\{x_{i}^{a}\}]$ and $s_{i}^{-a,1}=s_{i}^{-a,2}=T[\{x_{i}^{-a}\}]$. We denote by $S_{V}$ the set of all these subtrees. For all $i\in[k]$:
\begin{itemize}
\item We denote by $S_{i}$ the set of all subtrees in $S_{V}$ with subscript $i$.
\item For all $a_{i}\in[p]$, we denote the set $\{s_{i}^{a,c}\in S_{i}:a_{i}-p\leq a\leq a_{i}\}$ by $S_{i}^{a_{i}}$.
\end{itemize}
\item For every $i,j\in[k]$ such that $i<j$, we define $s_{ij}=T[\{y_{ij}\}]$. We denote by $S_{E}$ the set of all these subtrees.
\end{itemize}
We further denote by $S$ the set $S_{V}\cup S_{E}$ and by $\mathcal{T}$ the collection $R\cup S$. We construct a graph $G'$ that is the intersection graph of $(T,\mathcal{T})$. Notice that $G'$ is a chordal graph of leafage at most $\frac{k}{2}(k+3)$. We identify the vertices of $G'$ with their corresponding subtrees in $\mathcal{T}$.
By the construction of $(T,\mathcal{T})$, reganding adjacencies between vertices of $G'$ we observe the following:
\begin{itemize}
\item $R$ induces a clique, because all its elements contain the node $r$.
\item For every $i\in[k]$ and $a\in[p]$, the vertices $s_{i}^{a,1}$ and $s_{i}^{-a,1}$ are adjacent to $s_{i}^{a,2}$ and $s_{i}^{-a,2}$ respectively.
\item For every $i\in[k]$ and $a_{i}\in[p]$, we have $N(e)\cap S_{i}=S_{i}^{a_{i}}$ for all $e\in R_{i}^{a_{i}}$.
\item For every $i,j\in[k]$ such that $i<j$, we have $N(s_{ij})=R_{ij}$.
\end{itemize}
We set the weight of all vertices of $R$, $S_{V}$ and $S_{E}$ to be $\frac{p}{2}$, $1$ and $\frac{p}{2}m$ respectively. We will show that $(G,\{V_{i}\}_{i\in[k]})$ is a \textsc{Yes}-instance of MCC if and only if there exists a solution to SFVS on $(G',S)$ of weight $\frac{p}{2}(m-\frac{k}{2}(k-9))$.

For the forward direction, let $\{v_{1}^{a_{1}},\ldots,v_{k}^{a_{k}}\}$ be a solution to MCC on $(G,\{V_{i}\}_{i\in[k]})$. We define $R_{C}=\{e_{ij}^{a_{i}a_{j}}:i,j\in[k],i<j\}$. Observe that $R_{C}$ contains exactly one element of $R_{ij}$ for each $i,j\in[k]$ such that $i<j$. We further define the set $U=(R\setminus R_{C})\cup(\cup\{S_{i}^{a_{i}}\}_{i\in[k]})$. Now observe that in $G-U$ each of the remaining vertices of $S$ has exactly one neighbour. Thus $U$ a solution to SFVS on $(G',S)$ of weight $\frac{p}{2}(m-\frac{k}{2}(k-1))+2pk=\frac{p}{2}(m-\frac{k}{2}(k-9))$.

For the reverse direction, let $U$ be a solution to SFVS on $(G',S)$ of weight $\frac{p}{2}(m-\frac{k}{2}(k-9))$. Notice that no element of $S_{E}$ can be in $U$. Consequently, for every $i,j\in[k]$ such that $i<j$, $|R_{ij}\setminus U|\leq1$, since any two elements of $R_{ij}$ along with $s_{ij}$ form an $S$-triangle of $G'$. Any one of the remaining $S$-triangles of $G'$ is formed by
\begin{itemize}
\item either an element of $R_{i}^{a_{i}}$ and both $s_{i}^{a,1}$ and $s_{i}^{a,2}$
\item or an element of $R_{i}^{a_{i}}$, an element of $R_{i}^{a_{i}'}$ and either $s_{i}^{a,1}$ or $s_{i}^{a,2}$
\end{itemize}
for some $i\in[k]$ and for some $a,a_{i},a_{i}'\in[p]$ such that $\max\{a_{i},a_{i}'\}-p\leq a\leq\min\{a_{i},a_{i}'\}$. Let $i\in[k]$.
\begin{claim}
If $|R_{i}\setminus U|\geq 1$, then $|S_{i}\cap U|\geq p$.
\end{claim}
\begin{claimproof}
Assume that $R_{i}\setminus U\supseteq\{e\}$ and $e\in R_{i}^{a_{i}}$ for some $a_{i}\in[p]$. Then for every integer $a\neq0$ such that $a_{i}-p\leq a\leq a_{i}$ at least one of $s_{i}^{a,1},s_{i}^{a,2}$ must be in $U$, so $|S_{i}\cap U|\geq p$.
\end{claimproof}
\begin{claim}
If $|R_{i}\setminus U|\geq2$, then $|S_{i}\cap U|\geq2p$.
\end{claim}
\begin{claimproof}
Assume that $R_{i}\setminus U\supseteq\{e,e'\}$ and $e\in R_{i}^{a_{i}}$ and $e'\in R_{i}^{a_{i}'}$ for some $a_{i},a_{i}'\in[p]$ such that $a_{i}\leq a_{i}'$. Then
\begin{itemize}
\item for every integer $a\neq0$ such that $a_{i}'-p\leq a\leq a_{i}$ both $s_{i}^{a,1}$ and $s_{i}^{a,2}$ must be in $U$ and
\item for every integer $a\neq0$ such that $a_{i}-p\leq a<a_{i}'-p$ or $a_{i}<a\leq a_{i}'$ at least one of $s_{i}^{a,1},s_{i}^{a,2}$ must be in $U$,
\end{itemize}
so $|S_{i}\cap U|\geq 2(a_{i}+(p-a_{i}'))+1((a_{i}'-a_{i})+(a_{i}'-a_{i}))=2p$.
\end{claimproof}
\begin{claim}
If $|R_{i}\setminus U|\geq3$, then $|S_{i}\cap U|=2p$ only if $R_{i}\setminus U\subseteq R_{i}^{a_{i}}$ for some $a_{i}\in[p]$.
\end{claim}
\begin{claimproof}
Assume that $R_{i}\setminus U\supseteq\{e,e',e''\}$ and $e\in R_{i}^{a_{i}}$, $e'\in R_{i}^{a_{i}'}$ and $e''\in R_{i}^{a_{i}''}$ for some $a_{i},a_{i}',a_{i}''\in[p]$ such that $a_{i}\leq a_{i}'\leq a_{i}''$. Then
\begin{itemize}
\item for every integer $a\neq0$ such that $a_{i}'-p\leq a\leq a_{i}'$ both $s_{i}^{a,1}$ and $s_{i}^{a,2}$ must be in $U$ and
\item for every integer $a\neq0$ such that $a_{i}-p\leq a<a_{i}'-p$ or $a_{i}'<a\leq a_{i}''$ at least one of $s_{i}^{a,1},s_{i}^{a,2}$ must be in $U$,
\end{itemize}
so $|S_{i}\cap U|\geq 2p+1((a_{i}'-a_{i})+(a_{i}''-a_{i}'))=2p+(a_{i}''-a_{i})$. We conclude that for $|R_{i}\setminus U|$ to be $2p$, all elements of $R_{i}\setminus U$ must be elements of the same $R_{i}^{a_{i}}$ for some $a_{i}\in[p]$.
\end{claimproof}

Assume that $|\{i\in[k]:|R_{i}\setminus U|=1\}|=k'$ and $|\{i\in[k]:|R_{i}\setminus U|\geq2\}|=k''$. Then $|R\setminus U|\leq k'+\frac{k''}{2}(k''-1)$ and $|S_{V}\cap U|\geq p(k'+2k'')$, so the weight of $U$ must be at least
$$\frac{p}{2}\left(m-k'-\frac{k''}{2}(k''-1)\right)+p(k'+2k'')=\frac{p}{2}\left(m+k'-\frac{k''}{2}(k''-9)\right)=B(k',k'').$$
Clearly, $k',k''\in\{0,\ldots,k\}$. Regarding the values of $B$, we obsreve the following:
\begin{itemize}
\item $B(0,k'')\leq B(k',k'')$ for all $k',k''\in\{0,\ldots,k\}$,
\item $B(0,9)\leq B(0,k'')$ for all $k''\in\{0,\ldots,9\}$ and
\item $B(0,k'')<B(0,k''-1)$ for all $k''\in\{10,\ldots,k\}$.
\end{itemize}
We conclude that a weight of $\frac{p}{2}(m-\frac{k}{2}(k-9))=B(0,k)$ is within bounds only for $k'=0$ and $k''=k$ and can be achieved only when $|R\setminus U|=\frac{k}{2}(k-1)$ and $|S_{i}\cap U|=2p$ for all $i\in[k]$. For every $i\in[k]$, let $a_{i}\in[p]$ be such that $R_{i}\setminus U\subseteq R_{i}^{a_{i}}$. Then the set $\{v_{1}^{a_{1}},\ldots,v_{k}^{a_{k}}\}$ is a solution to MCC on $(G,\{V_{i}\}_{i\in[k]})$.
\end{proof}

\section{SFVS on rooted path graphs}\label{sec:dipath}
%\todo[inline]{try to add the old formulas within the proofs}
Here we show how to extend our previous approach for SFVS on rooted path graphs.
Rooted path graphs are exactly the intersection graphs of rooted paths on a rooted tree.
Notice that rooted path graphs have unbounded leafage.
Our main goal is to derive a recursive formulation for $\optdir{X}{Y}$, similar to Lemma~\ref{lem:Aleaf}.
%In particular, we show that it is sufficient to consider sets $X$ and $Y$ such that $|X|=O(n+m)$ and $|Y|\leq 1$.
In particular, we show that it is sufficient to consider sets $Y$ containing at most one vertex.

For any vertex $u$ of $G$, we denote the leaf of its corresponding rooted path in $T$ by $l(u)$.
We need to define further special vertices and subsets.
Let $u,v\in V_{G}$ such that $u<v$. The (unique) maximal predecessor $u'$ of $v$ such that $l(u')<r(u)\leq r(u')$ is denoted by $u\vartriangleleft v$.
Moreover, for every $V_{1},V_{2},V_{3}\subseteq V_{G}$, we define the following sets:
\begin{itemize}
\item[$-$] \hspace*{-0.2cm} $V_{\langle V_{1}|V_{2}|V_{3}\rangle}=\{u\in V_{G}:r(v_{1})<l(u)<r(v_{2})<r(u)\leq r(v_{3})\text{ for some }v_{i}\in V_{i},i\in\{1,2,3\}\}$
\item[$-$] \hspace*{-0.2cm} $V_{\langle|V_{2}|V_{3}\rangle}=\{u\in V_{G}:l(u)<r(v_{2})<r(u)\leq r(v_{3})\text{ for some }v_{i}\in V_{i},i\in\{2,3\}\}$
\item[$-$] \hspace*{-0.2cm} $V_{\langle V_{1}||V_{3}\rangle}=\{u\in V_{G}:r(v_{1})<l(u)<r(u)\leq r(v_{3})\text{ for some }v_{i}\in V_{i},i\in\{1,3\}\}$
\end{itemize}
The vertical bars indicate the placements of $l(u)$ and $r(u)$ with respect to $V_{1},V_{2},V_{3}$.

%\begin{restatable}[*]{lemma}{lemniceofdirected}\label{lem:niceofdirected}
\begin{lemma}\label{lem:niceofdirected}
Let $u,w\in V_{G}\setminus S$ such that $u<w$ and $uw\in E_{G}$. Then, the collection
\begin{displaymath}
\displaystyle\{V_{\langle\vartriangleleft uw || \vartriangleleft u\rangle}\setminus S\}\cup\{V_{u'}\cup(V_{\langle|\{u'\}|\{u'\vartriangleleft u\}\rangle}\setminus S)\}_{u'\in\vartriangleleft uw}
%\ $\displaystyle\{V_{\langle\vartriangleleft uw || \vartriangleleft u\rangle}\setminus S\}\cup\{V_{u'}\cup(V_{\langle|\{u'\}|\{u'\vartriangleleft u\}\rangle}\setminus S)\}_{u'\in\vartriangleleft uw}$ \
\end{displaymath}
is a nice partition of $X=V_{u}\setminus(\{u\}\cup(N(u)\cap N(w)\cap S))$ with respect to any $Y\subseteq V_{G}\setminus X$ such that $Y\cap S=\emptyset$.
\end{lemma}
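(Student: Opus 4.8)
The plan is to verify the two requirements in the definition of a nice partition separately: first that the stated collection is genuinely a partition of $X$, and then that every $S$-triangle of $G[X\cup Y]$ has its trace on $X$ inside a single class. Write $A:=V_{\langle\vartriangleleft uw||\vartriangleleft u\rangle}\setminus S$ for the single straddle class and $B_{u'}:=V_{u'}\cup(V_{\langle|\{u'\}|\{u'\vartriangleleft u\}\rangle}\setminus S)$ for the classes indexed by $u'\in\vartriangleleft uw$. Throughout I would rely on the adjacency characterization for rooted paths: for $a\leq b$ one has $ab\in E_{G}$ if and only if $l(b)\leq_{T} r(a)$, since $T_{a}$ and $T_{b}$ are rooted paths and the topmost node $r(a)$ of the lower one must lie on $T_{b}$. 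In these terms, as $u<w$ and $uw\in E_{G}$ we have $l(w)\leq_{T} r(u)$, and a vertex $x<u$ is a common neighbour of $u$ and $w$ exactly when $\max_{T}\{l(u),l(w)\}\leq_{T} r(x)$.

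For the partition part I would start from Lemma~\ref{prop:disc}, which already gives that $\mathcal{V}_{\vartriangleleft uw}$ is a partition of $V_{u}\setminus(N[u]\cap N(w))$ into pairwise disconnected sets $\{V_{u'}\}_{u'\in\vartriangleleft uw}$. Because $uw\in E_{G}$ forces $u\in N(w)$, we have $N[u]\cap N(w)=\{u\}\cup(N(u)\cap N(w))$, so the vertices of $X$ not captured by any $V_{u'}$ are precisely the non-$S$ common neighbours of $u$ and $w$ lying below $u$. The remaining work is to show that each such extra vertex $x$ falls into exactly one straddle set: into $A$ when $l(x)$ lies strictly above every $r(v_{1})$ with $v_{1}\in\vartriangleleft uw$, and into the straddle part of a unique $B_{u'}$ otherwise. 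Here the uniqueness of the vertex $u'\vartriangleleft u$ noted right after its definition is what pins down the single class, while the defining inequalities on the positions of $l(x)$ and $r(x)$ guarantee disjointness from the $V_{u'}$'s and from one another, and a short case analysis on the placement of $l(x)$ yields exhaustiveness.

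For the nice property I would take an $S$-triangle $S_{t}$ of $G[X\cup Y]$ and first invoke Observation~\ref{obs:ordering}~(\ref{obs:ord:comp}): its three vertices are pairwise adjacent, hence pairwise comparable, so they form a chain $a<b<c$ whose roots are nested along a root-to-leaf path of $T$. Since $Y\cap S=\emptyset$, the vertex of $S_{t}$ lying in $S$ must belong to $X$, and as $u,w\notin S$ it is distinct from $u$ and $w$. I would then show $V(S_{t})\cap X$ cannot meet two distinct classes, the main leverage being the umbrella property, Observation~\ref{obs:ordering}~(\ref{obs:ord:umbrella}): if $a<b<c$ and $ac\in E_{G}$ then $bc\in E_{G}$, which tightly constrains how the leaves and roots of the three vertices interleave with the separating nodes $r(u')$ and $r(u'\vartriangleleft u)$ defining the classes. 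Running through the few admissible placements of $l(\cdot)$ and $r(\cdot)$ of the chain relative to these separators, and using that the $S$-common neighbours of $u$ and $w$ have already been deleted from $X$, forces $V(S_{t})\cap X$ into one class.

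The main obstacle is precisely this last step. Unlike the undirected case of Lemma~\ref{lem:niceofundirected}, the classes here are \emph{not} pairwise disconnected: a straddle set can send edges both into the $V_{u'}$'s and into $A$, so one cannot simply quote Lemma~\ref{prop:disc} to rule out a cross-class $S$-triangle. The real content is to prove that any edge crossing two classes can never be completed into an $S$-triangle inside $G[X\cup Y]$ once $Y$ is $S$-free and $u,w\notin S$; this is where the interval-like geometry of rooted paths—one leaf and one root per vertex, giving a clean linear ordering of the relevant nodes—becomes indispensable, and where the bookkeeping over which triangle vertex may lie in $Y$ rather than in $X$ must be handled with care.
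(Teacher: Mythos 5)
Your treatment of the partition part is sound and essentially the same as the paper's: the paper runs exactly the trichotomy you describe (for $u''\in X$, either $r(u'')\leq r(u')$ for some $u'\in\vartriangleleft uw$, or $l(u'')<r(u')<r(u'')$ for some such $u'$, or $r(u')<l(u'')$ for some such $u'$), with mutual exclusivity coming from the pairwise incomparability of the roots of the vertices in $\vartriangleleft uw$. One slip: membership in $V_{\langle\vartriangleleft uw||\vartriangleleft u\rangle}\setminus S$ requires $l(x)$ to lie strictly above \emph{some} root $r(u')$ with $u'\in\vartriangleleft uw$, not above \emph{every} such root; with ``every'' your three cases would not be exhaustive.

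The genuine gap is in the niceness part, and you point at it yourself: you correctly observe that, unlike in Lemma~\ref{lem:niceofundirected}, the classes here are not pairwise disconnected, so Lemma~\ref{prop:disc} cannot simply be quoted --- and then you defer the whole argument to an unexecuted case analysis (``running through the few admissible placements \ldots forces $V(S_{t})\cap X$ into one class''). That case analysis is precisely where the content of the lemma lies, and the umbrella property alone does not obviously close it, because without more information you have no distinguished separator in $T$ against which to place the chain $a<b<c$. The paper's resolution is short and rests on an idea absent from your proposal: every class other than the sets $V_{u'}$ carries the modifier $\setminus S$, and $Y\cap S=\emptyset$, so the $S$-vertex of any $S$-triangle $S_{t}$ of $G[X\cup Y]$ must lie in $V_{u'}$ for some $u'\in\vartriangleleft uw$. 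Every other vertex of $S_{t}$ is adjacent to that $S$-vertex, hence by Lemma~\ref{prop:neighb} it is adjacent to $u'$ itself; being adjacent to $u'$ but not in $V_{u'}$ forces $l(\cdot)<r(u')<r(\cdot)$, and by the trichotomy of the partition part this places every vertex of $V(S_{t})\cap X$ outside $V_{u'}$ into the straddle part of the very same class $V_{u'}\cup(V_{\langle|\{u'\}|\{u'\vartriangleleft u\}\rangle}\setminus S)$. In other words, the anchoring of the $S$-vertex in a unique $V_{u'}$ supplies the fixed separator $r(u')$ that your sketch lacks; without it, the proposal as written does not yet prove the statement.
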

%\end{restatable}
\begin{proof}
We first show that this collection is indeed a partition of $X$. Recall that the vertices of $\vartriangleleft uw$ induce an independent set by Lemma~\ref{prop:disc}.
Consider a vertex $u''\in X$. Then exactly one of the following statements holds, implying the claimed partition:
\begin{displaymath}
\left\{\begin{array}{l}
\exists u'\in\vartriangleleft uw:r(u'')\leq r(u')\\
\exists u'\in\vartriangleleft uw:l(u'')<r(u')<r(u'')\\
\exists u'\in\vartriangleleft uw:r(u')<l(u'')
\end{array}\right.\Rightarrow\left\{\begin{array}{l}
\exists u'\in\vartriangleleft uw:u''\in V_{u'}\\
\exists u'\in\vartriangleleft uw:u''\in V_{(\langle|\{u'\}|\{u'\vartriangleleft u\}\rangle}\setminus S\\
u''\in V_{\langle\vartriangleleft uw||\vartriangleleft u\rangle}\setminus S
\end{array}\right.
\end{displaymath}

Now let $Y\subseteq V_{G}\setminus X$ such that $Y\cap S=\emptyset$ and consider an $S$-triangle $S_{t}$ of $G[X\cup Y]$. Then there is some vertex $u'\in\vartriangleleft uw$ such that $V(S_{t})\cap V_{u'}\cap S\neq\emptyset$. Since $S_{t}$ is a triangle, every vertex in $V(S_{t})\setminus V_{u'}$ must be adjacent to every vertex in $V(S_{t})\cap V_{u'}$. By Lemma~\ref{prop:neighb}, a vertex $u''\in V_{G}\setminus V_{u'}$ that is adjacent to a vertex of $V_{u'}$ must be adjacent to $u'$, so $l(u'')<r(u')<r(u'')$ must hold.
We conclude that $V(S_{t})\subseteq V_{u'}\cup(V_{\langle|\{u'\}|\{u'\vartriangleleft u\}\rangle}\setminus S)$.
\end{proof}

For every appropriate $u,v$, we will denote the set $V_{u}\cup(V_{\langle|\{u\}|\{v\}\rangle}\setminus S)$ by $V_{u,v}$.
Observe that the set $V_{u,u}$ is simply $V_{u}$.
First we consider the set $\optdir{V_{u}}{\{w\}}$ for which $u<w$ and $uw\in E_{G}$.
%We distinguish two cases depending on whether $u$ belongs to $\optdir{V_{u}}{\{w\}}$ or not.
\begin{itemize}
\item If $u\notin\optdir{V_{u}}{\{w\}}$ then $\optdir{V_{u}}{\{w\}}=\bigcup_{u'\in\vartriangleleft u}{\optdir{V_{u'}}{\{w\}\cap N(u')}}$ by Lemma~\ref{lem:together}~(i).
\end{itemize}
Also, recall that $\optdir{V_{u}}{\emptyset}$ is described by the formula given in Lemma~\ref{lem:together}~(ii). 
We derive the following result that handles the cases in which $u\in\optdir{V_{u}}{\{w\}}$.
%\todo[inline,color=blue!40]{Here (or, more appropriately, at the beginning of this subsection) I would like something in the terms of `We keep the lemma regarding $\optdir{V_{u}}{\emptyset}$ as is. However, in directed path graphs, for the remaining cases, we can do better'.}

%\begin{restatable}[*]{lemma}{lemAdiruw}\label{lem:Adiruw}
\begin{lemma}\label{lem:Adiruw}
Let $u,w\in V_{G}$ such that $u<w$ and $uw\in E_{G}$, and let $u\in\optdir{V_{u}}{\{w\}}$. %Then,
%$\optdir{V_{u}}{\{w\}}=$
%Assume that $u\in\optdir{V_{u}}{\{w\}}$. %Then, the following hold
\begin{itemize}
\item If $u\in S$ or $w\in S$ then $\displaystyle\optdir{V_{u}}{\{w\}}=\{u\}\cup\bigcup_{u'\in\vartriangleleft uw}{\optdir{V_{u'}}{\{u,w\}\cap N(u')}}$.
\item If $u,w\notin S$ then $\displaystyle\optdir{V_{u}}{\{w\}}=\{u\}\cup(V_{\langle\vartriangleleft uw||\vartriangleleft u\rangle}\setminus S)\cup\bigcup_{u'\in\vartriangleleft uw}{\optdir{V_{u',u'\vartriangleleft u}}{\{u,w\}\cap N(u')}}$.
\end{itemize}
\end{lemma}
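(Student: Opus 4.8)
The plan is to split into the two cases and, in each, leverage the hypothesis $u\in\optdir{V_u}{\{w\}}$ to rewrite $\optdir{V_u}{\{w\}}=\{u\}\cup\optdir{V_u\setminus\{u\}}{\{u,w\}}$ by the definition of $\optdir{\cdot}{\cdot}$, and then decompose the remaining optimization via a nice partition established in the preparatory lemmas. The first case is the easier one: when $u\in S$ or $w\in S$, I would observe that any vertex $u'$ adjacent to both $u$ and $w$ would form an $S$-triangle $\{u',u,w\}$, so no such $u'$ can survive in the solution. This means the relevant ground set is exactly $V_u\setminus(N[u]\cap N(w))$, and by Lemma~\ref{lem:niceofundirected} the collection $\mathcal V_{\vartriangleleft uw}$ is a nice partition of it. Applying Observation~\ref{obs:AXY}(\ref{obs:XY:nice}) together with Lemma~\ref{prop:neighb} to localize each $Y$-part to $\{u,w\}\cap N(u')$ then yields the first displayed formula. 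I would need to be slightly careful that $\vartriangleleft uw=\max_G(V_u\setminus(N[u]\cap N(w)))$ under the convention $u<w$, which is recorded just before Lemma~\ref{prop:disc}.

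The second case, $u,w\notin S$, is where the real work lies, and here I would invoke the new nice partition constructed in Lemma~\ref{lem:niceofdirected}. That lemma gives a nice partition of $X=V_u\setminus(\{u\}\cup(N(u)\cap N(w)\cap S))$ into the ``spanning'' block $V_{\langle\vartriangleleft uw||\vartriangleleft u\rangle}\setminus S$ together with the blocks $V_{u',u'\vartriangleleft u}=V_{u'}\cup(V_{\langle|\{u'\}|\{u'\vartriangleleft u\}\rangle}\setminus S)$ for $u'\in\vartriangleleft uw$. The key preliminary point is that, since $u,w\notin S$, the forbidden set need only remove $S$-vertices adjacent to both $u$ and $w$ (a vertex outside $S$ adjacent to both cannot create an $S$-triangle with $u,w$), so the ground set after deleting $u$ is exactly this $X$; I would verify this matches $V_u\setminus\{u\}$ minus precisely the unavoidable $S$-obstructions. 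Then I apply Observation~\ref{obs:AXY}(\ref{obs:XY:nice}) to split $\optdir{X}{\{u,w\}}$ across the blocks of this nice partition.

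The main obstacle — and the step I would spend the most care on — is handling the spanning block $V_{\langle\vartriangleleft uw||\vartriangleleft u\rangle}\setminus S$ correctly, and showing it contributes its \emph{entire} set (that is why it appears as $\{u\}\cup(V_{\langle\vartriangleleft uw||\vartriangleleft u\rangle}\setminus S)\cup\cdots$ rather than inside a recursive $\optdir{\cdot}{\cdot}$). I would argue that every vertex in this block lies in no $S$-triangle of $G[X\cup\{u,w\}]$: such a vertex is non-$S$ by construction, its root sits above all of $\vartriangleleft uw$ yet its leaf lies below $r(u'\vartriangleleft u)$ for the relevant branch, so by the interval-like structure of rooted paths and Observation~\ref{obs:ordering}(\ref{obs:ord:umbrella}) it cannot be the apex of a triangle whose $S$-vertex lies in a single $V_{u'}$-block — the proof of Lemma~\ref{lem:niceofdirected} already confines each $S$-triangle to one block $V_{u'}\cup(V_{\langle|\{u'\}|\{u'\vartriangleleft u\}\rangle}\setminus S)$, so the spanning block is $S$-triangle-free and may be taken whole. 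For the remaining blocks I would finish by applying Lemma~\ref{prop:neighb} to replace the constraint set $\{u,w\}$ by its localization $\{u,w\}\cap N(u')$, noting that on the block $V_{u',u'\vartriangleleft u}$ the optimum is precisely $\optdir{V_{u',u'\vartriangleleft u}}{\{u,w\}\cap N(u')}$, which assembles the second displayed formula.
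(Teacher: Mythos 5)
Your proposal is correct and follows essentially the same route as the paper: rewrite $\optdir{V_{u}}{\{w\}}=\{u\}\cup\optdir{V_{u}\setminus\{u\}}{\{u,w\}}$, use the triangle observation $\{u',u,w\}$ to shrink the ground set (to $V_{u}\setminus(N[u]\cap N(w))$ in the first case, to $V_u\setminus(\{u\}\cup(N(u)\cap N(w)\cap S))$ in the second), and then decompose via the nice partitions of Lemma~\ref{lem:niceofundirected} and Lemma~\ref{lem:niceofdirected} together with Observation~\ref{obs:AXY} and Lemma~\ref{prop:neighb}. Your justification that the block $V_{\langle\vartriangleleft uw||\vartriangleleft u\rangle}\setminus S$ is taken whole (via the confinement of every $S$-triangle to a single $V_{u',u'\vartriangleleft u}$ block) is valid, though the paper gets this more directly from the fact that this block and $Y=\{u,w\}$ contain no $S$-vertices at all, so $\optdir{V_{\langle\vartriangleleft uw||\vartriangleleft u\rangle}\setminus S}{\{u,w\}}$ trivially equals the whole block.
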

%\end{restatable}
\begin{proof}
%We distinguish two cases depending on whether $u$ belongs to $\optdir{V_{u}}{\{w\}}$ or not.
%If $u\notin\optdir{V_{u}}{\{w\}}$ then $\optdir{V_{u}}{\{w\}}=\bigcup_{u'\in\vartriangleleft u}{\optdir{V_{u'}}{\{w\}\cap N(u')}}$ by Lemma~\ref{lem:together}~(i).
%
%Assume that $u\in\optdir{V_{u}}{\{w\}}$.
Observe that $\optdir{V_{u}}{\{w\}}=\{u\}\cup\optdir{V_{u}\setminus\{u\}}{\{u,w\}}$ by definition.
Since $\vartriangleleft uw = \max_{G}(V_{u}\setminus(N[u]\cap N(w)))$, any vertex $u' \in \vartriangleleft uw$ is adjacent to at most one of $u$ and $w$.
Regarding triangles of $G[V_{u}\cup\{w\}]$, we observe the following property:
\begin{description}
\item[(P1)] By the hypothesis, the vertices $u$ and $w$ are adjacent. Thus, for any $u'\in V_{u}\cap N(u)\cap N(w)$, the vertex set $\{u',u,w\}$ induces a triangle.
\end{description}

If $u\in S$ or $w\in S$, then no vertex of $V_{u}\cap N(u)\cap N(w)$ is in $\optdir{V_{u}}{\{w\}}$ because of (P1).
%% Proposition~
According to Lemma~\ref{lem:niceofundirected}, the collection $\mathcal{V}_{\vartriangleleft uw}$ is a nice partition of $V_{u}\setminus(N[u]\cap N(w))$.
Thus, by Observation~\ref{obs:AXY} we get the desired formula:
$\optdir{V_{u}}{\{w\}}=\{u\}\cup\bigcup_{u'\in<\vartriangleleft uw}{\optdir{V_{u'}}{\{u,w\}\cap N(u')}}$.

If $u,w\notin S$, then no vertex of $V_{u}\cap N(u)\cap N(w)\cap S$ is in $\optdir{V_{u}}{\{w\}}$ because of (P1).
Thus, by definition, we have $\optdir{V_{u}\setminus\{u\}}{\{u,w\}}=\optdir{(V_{u}\setminus\{u\})\setminus(N(u)\cap N(w)\cap S)}{\{u,w\}}$.
Let $X=(V_{u}\setminus\{u\})\setminus(N(u)\cap N(w)\cap S)$.
By Lemma~\ref{lem:niceofdirected}, the collection $\{V_{\langle\vartriangleleft uw||\vartriangleleft u\rangle}\setminus S\}\cup\{V_{u',u'\vartriangleleft u}\}_{u'\in\vartriangleleft uw}$ is a nice partition of $X$ with respect to $\{u,w\}$.
Thus, Observation~\ref{obs:AXY} and Lemma~\ref{prop:neighb} imply
\begin{displaymath}
\begin{array}{lcl}
\optdir{X}{\{u,w\}}&=&\optdir{V_{\langle\vartriangleleft uw||\vartriangleleft u\rangle}\setminus S}{\{u,w\}}\cup\bigcup_{u'\in\vartriangleleft uw}{\optdir{V_{u',u'\vartriangleleft u}}{\{u,w\}}}\\[6pt]
&=& (V_{\langle\vartriangleleft uw||\vartriangleleft u\rangle}\setminus S)\cup\bigcup_{u'\in\vartriangleleft uw}{\optdir{V_{u',u'\vartriangleleft u}}{\{u,w\}\cap N(u')}}.
\end{array}
\end{displaymath}
\end{proof}
%
%\begin{corollary}\label{cor:rootone}
%Let $u,w\in V_{G}$ such that $u<w$ and $uw\in E_{G}$. 
%\begin{itemize}
%\item If $u\in S$ or $w\in S$ then \\
%\hspace*{0.01in} $\optdir{V_{u}}{\{w\}}=\displaystyle\max_{weight}\left\{\bigcup_{u'\in\vartriangleleft u}{\optdir{V_{u'}}{\{w\}\cap N(u')}},\quad\{u\}\cup\bigcup_{u'\in\vartriangleleft uw}{\optdir{V_{u'}}{\{u,w\}\cap N(u')}}\right\}$.
%\item If $u,w\notin S$ then \\
%\hspace*{0.01in} $\optdir{V_{u}}{\{w\}}=\displaystyle\max_{weight}\left\{\bigcup_{u'\in\vartriangleleft u}{\optdir{V_{u'}}{\{w\}\cap N(u')}},\quad\{u\}\cup(V_{\langle\vartriangleleft uw||\vartriangleleft u\rangle}\setminus S)\cup\bigcup_{u'\in\vartriangleleft uw}{\optdir{V_{u',u'\vartriangleleft u}}{\{u,w\}\cap N(u')}}\right\}$.
%\end{itemize}
%\end{corollary}

We next deal with the sets $\optdir{V_{u,v}}{Y}$ for which $u<v$, $|Y|\leq 1$ and no vertex of $\{v\}\cup Y$ belongs to $S$.
Observe that $V_{u,v}$ is not necessarily described by a set $V_{w}$ for some $w\in V_G$.
Thus we need appropriate formulas that handle such sets.
%\todo[inline,color=red!40]{The lemma for $A_{V_{u,v}}^{\emptyset}$ was (and should be) here.}
For doing so, notice that
\begin{itemize}
\item $V_{u,v}\setminus\{v\}=V_{u,u\vartriangleleft v}$, since $V_{u,v}=V_{u}\cup(V_{\langle|\{u\}|\{v\}\rangle}\setminus S)$ and $u\leq u{\vartriangleleft}v< v$.
\end{itemize}
This means that if $v\notin\optdir{V_{u,v}}{Y}$, we have $\optdir{V_{u,v}}{Y}=\optdir{V_{u,u\vartriangleleft v}}{Y}$.
%
%\begin{corollary}\label{cor:roottwo}
%Let $u\in V_{G}$ and $v\in V_{G}\setminus S$ such that $u<v$ and $uv\in E_{G}$. Then,
%\begin{displaymath}
%\optdir{V_{u,v}}{\emptyset}=\max_{weight}\left\{\optdir{V_{u,u\vartriangleleft v}}{\emptyset},\quad\{v\}\cup\optdir{V_{u,u\vartriangleleft v}}{\{v\}}\right\}.
%\end{displaymath}
%\end{corollary}

With the next result we consider the corresponding case in which $v\in\optdir{V_{u,v}}{\{w\}}$.
Notice that given a partition $\mathcal{P}$ of a set $X$ and a set $X'\subseteq X$, the collection $\mathcal{P}'=\{P\cap X'\}_{P\in\mathcal{P}}$ is a partition of $X'$.
Furthermore, observe that if $\mathcal{P}$ is a nice partition of $X$ with respect to a set $Y\subseteq V_{G}\setminus X$ such that $Y\cap S=\emptyset$, then $\mathcal{P}'$ is a nice partition of $X'$ with respect to $Y$.

%\begin{restatable}[*]{lemma}{lemAdirutovandw}\label{lem:Adirutovandw}
\begin{lemma}\label{lem:Adirutovandw}
Let $u\in V_{G}$ and $v,w\in V_{G}\setminus S$ such that $u<v<w$ and $\{u,v,w\}$ induce a clique and let $v\in\optdir{V_{u,v}}{\{w\}}$.
Then, $\displaystyle\optdir{V_{u,v}}{\{w\}}=\{v\}\cup(V_{\langle\vartriangleleft vw|\{u\}|\{u\vartriangleleft v\}\rangle}\setminus S)\cup\bigcup_{u'\in V_{u}\cap\vartriangleleft vw}{\optdir{V_{u',u'\vartriangleleft v}}{\{v,w\}\cap N(u')}}$.
\end{lemma}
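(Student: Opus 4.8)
The plan is to follow the template of Lemma~\ref{lem:Adiruw}: peel off the vertex $v$ that is known to lie in the optimum, discard the vertices forced out by the clique edge $vw$, and then split what remains along a nice partition. Since $v\in\optdir{V_{u,v}}{\{w\}}$, the definition of $\optdir{\cdot}{\cdot}$ gives $\optdir{V_{u,v}}{\{w\}}=\{v\}\cup\optdir{V_{u,v}\setminus\{v\}}{\{v,w\}}$, and by the identity $V_{u,v}\setminus\{v\}=V_{u,u\vartriangleleft v}$ recorded just before the statement this equals $\{v\}\cup\optdir{V_{u,u\vartriangleleft v}}{\{v,w\}}$. As $\{u,v,w\}$ is a clique we have $vw\in E_{G}$, and since $v,w\notin S$ any vertex $z\in N(v)\cap N(w)\cap S$ would make $\{z,v,w\}$ an $S$-triangle; hence no such $z$ can belong to the optimum and I may restrict the ground set to $X=V_{u,u\vartriangleleft v}\setminus(N(v)\cap N(w)\cap S)$, so that $\optdir{V_{u,u\vartriangleleft v}}{\{v,w\}}=\optdir{X}{\{v,w\}}$.

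The core of the argument is to show that the collection
\begin{displaymath}
\{V_{\langle\vartriangleleft vw|\{u\}|\{u\vartriangleleft v\}\rangle}\setminus S\}\cup\{V_{u',u'\vartriangleleft v}\}_{u'\in V_{u}\cap\vartriangleleft vw}
\end{displaymath}
is a nice partition of $X$ with respect to $\{v,w\}$, after which the formula follows mechanically. This is the analogue of Lemma~\ref{lem:niceofdirected}, now pivoted at the edge $vw$ instead of $uw$ and anchored at the floor $u$ carried by the $V_{u,\cdot}$ notation; correspondingly the classes are indexed by $V_{u}\cap\vartriangleleft vw$, the maximal vertices below $u$ that fail to be adjacent to both $v$ and $w$, and the middle bar $|\{u\}|$ of the leftover class records that floor. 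To show the collection is a partition, I would classify each $u''\in X$ by the positions of $r(u'')$ and $l(u'')$ relative to the roots of the pivots in $V_{u}\cap\vartriangleleft vw$ and to $r(u)$ and $r(u\vartriangleleft v)$: $u''$ goes to $V_{u'}$ when it lies below a pivot $u'$, to the middle part $V_{\langle|\{u'\}|\{u'\vartriangleleft v\}\rangle}\setminus S$ of $V_{u',u'\vartriangleleft v}$ when it straddles $u'$, and to the leftover class when it lies above every pivot. Pairwise disjointness and disconnectedness then follow from Lemma~\ref{prop:disc}, exactly as for Lemma~\ref{lem:niceofdirected}.

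Niceness is obtained verbatim from the proof of Lemma~\ref{lem:niceofdirected}: given an $S$-triangle $S_{t}$ of $G[X\cup\{v,w\}]$, since $v,w\notin S$ and the leftover class is $S$-free, some vertex of $S_{t}$ lies in $V_{u'}\cap S$ for a pivot $u'\in V_{u}\cap\vartriangleleft vw$, and by Lemma~\ref{prop:neighb} every other vertex of $S_{t}$ is adjacent to $u'$ and so lands in the same class $V_{u',u'\vartriangleleft v}$. With the nice partition in hand, Observation~\ref{obs:AXY}~(ii) splits $\optdir{X}{\{v,w\}}$ over the classes; on each $V_{u',u'\vartriangleleft v}$ its $S$-vertices lie in $V_{u'}$, so Observation~\ref{obs:AXY}~(i) together with Lemma~\ref{prop:neighb} replaces $\{v,w\}$ by $\{v,w\}\cap N(u')$. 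The leftover class $V_{\langle\vartriangleleft vw|\{u\}|\{u\vartriangleleft v\}\rangle}\setminus S$ contains no vertex of $S$, hence no $S$-triangle, so it is kept in its entirety; reinstating the peeled-off $\{v\}$ yields the claimed formula.

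The step I expect to be the main obstacle is proving that the collection actually covers all of $X$. Because the host tree branches, the roots and leaves being compared need not be $\leq_{T}$-comparable, so the naive trichotomy (below, straddling, above a pivot) must be justified using that every $u''\in X$ satisfies $r(u'')\leq_{T}r(u\vartriangleleft v)$ and sits on or below the appropriate root-to-$v$ paths. The most delicate vertices are $u$ itself and vertices lying between two consecutive pivots: here the maximality defining $\vartriangleleft vw$ must be combined with the umbrella property of Observation~\ref{obs:ordering}~(\ref{obs:ord:umbrella}), which guarantees that no non-common-neighbour of $v$ and $w$ lies strictly between $u$ and $v$, to pin down that such vertices are straddled by a pivot $u'\in V_{u}\cap\vartriangleleft vw$. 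Finally one checks that the definition of $u'\vartriangleleft v$ forces $r(u'')\leq r(u'\vartriangleleft v)$ for every straddling $u''$, so that the per-class cap $r(u'\vartriangleleft v)$ is consistent with the global cap $r(u\vartriangleleft v)$; this reconciliation, and nothing else, is where the geometry of rooted paths is genuinely used.
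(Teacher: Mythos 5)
Your proposal is correct and reaches the statement along essentially the same skeleton as the paper: peel off $v$, rewrite $V_{u,v}\setminus\{v\}=V_{u,u\vartriangleleft v}$, discard $N(v)\cap N(w)\cap S$ via the triangle property (the paper's property (P2)), split along the claimed collection using Observation~\ref{obs:AXY}, keep the $S$-free leftover class whole, and shrink $\{v,w\}$ to $\{v,w\}\cap N(u')$ on each pivot class. The one genuine divergence is how the nice partition is established, and it is exactly at the step you flag as your ``main obstacle.'' You propose to re-prove, from scratch, that the collection is a nice partition of $X'=V_{u,u\vartriangleleft v}\setminus(N(v)\cap N(w)\cap S)$, redoing the coverage trichotomy of Lemma~\ref{lem:niceofdirected} for this anchored ground set. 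The paper instead applies Lemma~\ref{lem:niceofdirected} as a black box to the pair $(v,w)$ --- its hypotheses hold since $v<w$, $vw\in E_{G}$ and $v,w\notin S$ --- obtaining a nice partition of the \emph{larger} set $X=(V_{v}\setminus\{v\})\setminus(N(v)\cap N(w)\cap S)$, and then invokes the observation recorded just before the lemma's statement: restricting a nice partition of $X$ to a subset $X'\subseteq X$ yields a nice partition of $X'$. This makes coverage of $X'$ automatic, so your delicate trichotomy argument is never needed; what remains is only to identify the restricted classes with those in the statement (classes of pivots $u'\in\vartriangleleft vw\setminus V_{u}$ intersect $X'$ trivially, since any vertex in such an intersection would force $u$ and $u'$ to be comparable, which the umbrella property of Observation~\ref{obs:ordering} excludes; classes of pivots in $V_{u}\cap\vartriangleleft vw$ survive whole; and the leftover class restricts to $V_{\langle\vartriangleleft vw|\{u\}|\{u\vartriangleleft v\}\rangle}\setminus S$) --- this is precisely the cap-reconciliation you describe in your final paragraph. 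So your route is viable but does more work than necessary; the apply-and-restrict trick is what lets the paper's proof stay short, at the price of leaving that class identification implicit.
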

%\end{restatable}
\begin{proof}
%We distinguish two cases depending on whether $v$ is in $\optdir{V_{u,v}}{\{w\}}$ or not. Let $v\notin\optdir{V_{u,v}}{\{w\}}$. Then $\optdir{V_{u}}{\{w\}}=\optdir{V_{u}\setminus\{v\}}{\{w\}}=\optdir{V_{u,u\vartriangleleft v}}{\{w\}}$. Now let $v\in\optdir{V_{u,v}}{\{w\}}$.
By definition, we have $\optdir{V_{u,v}}{\{w\}}=\{v\}\cup\optdir{V_{u,v}\setminus\{v\}}{\{v,w\}}=\{v\}\cup\optdir{V_{u,u\vartriangleleft v}}{\{v,w\}}$. Regarding triangles of $G[V_{u,v}\cup\{w\}]$, we observe the following property:
\begin{description}
\item[(P2)] By the hypothesis, the vertices $v$ and $w$ are adjacent. Thus, for any $u'\in V_{u,v}\cap N(v)\cap N(w)$, the vertex set $\{u',v,w\}$ induces a triangle.
\end{description}
Since $v,w\notin S$, no vertex of $V_{u,v}\cap N(v)\cap N(w)\cap S$ is in $\optdir{V_{u,v}}{\{w\}}$ because of (P2).
Thus $\optdir{V_{u,u\vartriangleleft v}}{\{v,w\}}=\optdir{V_{u,u\vartriangleleft v}\setminus(N(v)\cap N(w)\cap S)}{\{v,w\}}$.
Let $X=(V_{v}\setminus\{v\})\setminus(N(v)\cap N(w)\cap S)$ and $X'=V_{u,u\vartriangleleft v}\setminus(N(v)\cap N(w)\cap S)$.
Now, notice that $X'\subseteq X$.
Applying Lemma~\ref{lem:niceofdirected} on $X$ and $Y=\{v,w\}$ shows that the collection $\{V_{\langle\vartriangleleft vw|\{u\}|\{u\vartriangleleft v\}\rangle}\setminus S\}\cup\{V_{u',u'\vartriangleleft v}\}_{u'\in V_{u}\cap\vartriangleleft vw}$ is a nice partition of $X'$ with respect to $Y$.
Hence, Observation~\ref{obs:AXY} and Lemma~\ref{prop:disc} imply
\begin{displaymath}
\begin{array}{lcl}
\optdir{X'}{\{v,w\}}&=&\optdir{V_{\langle\vartriangleleft vw|\{u\}|\{u\vartriangleleft v\}\rangle}\setminus S}{\{v,w\}}\cup\bigcup_{u'\in V_{u}\cap\vartriangleleft vw}{\optdir{V_{u',u'\vartriangleleft v}}{\{v,w\}}}\\[6pt]
                    &=&(V_{\langle\vartriangleleft vw|\{u\}|\{u\vartriangleleft v\}\rangle}\setminus S)\cup\bigcup_{u'\in V_{u}\cap\vartriangleleft vw}{\optdir{V_{u',u'\vartriangleleft v}}{\{v,w\}\cap N(u')}}.
\end{array}
\end{displaymath}
\end{proof}
%
%\begin{corollary}\label{cor:rootthree}
%Let $u\in V_{G}$ and $v,w\in V_{G}\setminus S$ such that $u<v<w$ and $\{u,v,w\}$ induce a clique. Then, $\displaystyle\optdir{V_{u,v}}{\{w\}}=\displaystyle\max_{weight}\left\{\optdir{V_{u,u\vartriangleleft v}}{\{w\}},\quad\{v\}\cup(V_{\langle\vartriangleleft vw|\{u\}|\{u\vartriangleleft v\}\rangle}\setminus S)\cup\bigcup_{u'\in V_{u}\cap\vartriangleleft vw}{\optdir{V_{u',u'\vartriangleleft v}}{\{v,w\}\cap N(u')}}\right\}$.
%\end{corollary}

Now we are in position to state our claimed result, which is obtained in a similar fashion with the algorithm given in Theorem~\ref{theo:leafage}.

%\begin{restatable}[*]{theorem}{theodirectedP}\label{theo:directedP}
\begin{theorem}\label{theo:directedP}
\textsc{Subset Feedback Vertex Set} can be solved on rooted path graphs in $O(n^{2}m)$ time.
\end{theorem}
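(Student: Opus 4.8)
The plan is to mirror the dynamic programming scheme of Theorem~\ref{theo:leafage}, but using the refined structural lemmas tailored to rooted path graphs. As before, we process the nodes of the rooted host tree $T$ in a bottom-up fashion and, at each vertex $u$ of $G$, we wish to compute $\optdir{V_{u}}{\emptyset}$ together with $\optdir{V_{u}}{\{w\}}$ for every neighbor $w$ with $u<w$. The crucial observation — which is exactly what the rooted path structure buys us over general bounded leafage — is that it suffices to consider sets $Y$ of size at most one. Indeed, if $G[\{u\}\cup W]$ is a clique with $|W|\geq 2$ and some vertex of $\{u\}\cup W$ lies in $S$, then $u$ cannot belong to the optimal solution, so by Lemma~\ref{lem:together}~(i) the value reduces to a union over $\vartriangleleft u$ of strictly smaller instances; and when $\{u\}\cup W$ is $S$-free, the representative machinery can be bypassed because in a rooted path graph the relevant neighborhood information is captured by a single leaf position $l(\cdot)$.

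The key computational step is assembling, at each $u$, the recursive formulas. For $\optdir{V_{u}}{\emptyset}$ we invoke Lemma~\ref{lem:together}~(ii) directly. For $\optdir{V_{u}}{\{w\}}$ we branch on whether $u$ belongs to the solution: if not, Lemma~\ref{lem:together}~(i) applies; if yes, we split according to whether $\{u,w\}\cap S=\emptyset$ and apply the two cases of Lemma~\ref{lem:Adiruw}. The subtlety is that the $S$-free case of Lemma~\ref{lem:Adiruw} produces terms of the form $\optdir{V_{u',u'\vartriangleleft u}}{\{u,w\}\cap N(u')}$, i.e.\ values indexed by the composite sets $V_{u',v}$ rather than by plain $V_{w'}$. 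Hence the table must also store $\optdir{V_{u,v}}{Y}$ for appropriate pairs $u<v$ with $v\notin S$ and $|Y|\leq 1$, $Y\cap S=\emptyset$. These composite values are resolved recursively: when $v\notin\optdir{V_{u,v}}{Y}$ we use $V_{u,v}\setminus\{v\}=V_{u,u\vartriangleleft v}$ to pass to a smaller instance, and when $v\in\optdir{V_{u,v}}{\{w\}}$ we apply Lemma~\ref{lem:Adirutovandw}. One must verify that every instance requested on the right-hand sides is itself of one of the stored forms (either $\optdir{V_{u'}}{Y'}$ or $\optdir{V_{u',v'}}{Y'}$ with $|Y'|\leq 1$ and the $S$-freeness hypotheses preserved), exactly as in the closure argument of Theorem~\ref{theo:leafage}; this is where the niceness of the partitions from Lemma~\ref{lem:niceofundirected} and Lemma~\ref{lem:niceofdirected} combined with Observation~\ref{obs:AXY} guarantees correctness.

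For the running time, the plan is to count the stored table entries and the cost per entry. Since it suffices to consider $|Y|\leq 1$, there are $O(n)$ choices of the second argument (the empty set or a single vertex $w$) and $O(n)$ relevant values of the first argument of the form $V_u$; the composite sets $V_{u,v}$ add a further factor of $O(n)$ for the second index $v$, giving $O(n^2)$ stored instances overall, each paired with $O(n)$ choices of $Y$. Evaluating a single formula amounts to a union over the independent parts rooted at $\vartriangleleft u$ or $\vartriangleleft uw$, and computing the special vertices $u\vartriangleleft v$ and the sets $V_{\langle\cdots\rangle}$ from the precomputed leaf/root positions $l(\cdot),r(\cdot)$ can be done within the edge-traversal cost. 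A careful amortization of the per-edge work across the recursion yields the claimed $O(n^{2}m)$ bound, comparable to the $O(nm)$ interval-graph algorithm.

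The main obstacle I expect is the bookkeeping around the composite sets $V_{u,v}$: one must establish that the recursion stays \emph{closed} — that is, that unfolding Lemma~\ref{lem:Adiruw} and Lemma~\ref{lem:Adirutovandw} never generates an instance outside the polynomially-bounded family of stored entries, and that the $S$-freeness and clique hypotheses required by each lemma are indeed inherited at every recursive call (using that induced subgraphs of cliques are cliques and induced subgraphs of $S$-forests are $S$-forests). Getting the running-time accounting to land at $O(n^2m)$ rather than a larger polynomial also requires care in how the sets $V_{\langle V_1|V_2|V_3\rangle}$ are computed on the fly from the leaf positions, rather than enumerated naively.
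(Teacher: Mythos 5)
Your overall route is the same as the paper's: a bottom-up dynamic program over the (expanded) rooted tree model, storing $\optdir{V_{u}}{\emptyset}$, $\optdir{V_{u}}{\{w\}}$, $\optdir{V_{u,v}}{\emptyset}$, $\optdir{V_{u,v}}{\{w\}}$, with Lemma~\ref{lem:together}, Lemma~\ref{lem:Adiruw} and Lemma~\ref{lem:Adirutovandw} supplying the recurrences, and the closure of the recursion (second arguments staying of size at most one, clique/$S$-freeness hypotheses being inherited) argued exactly as you outline. That part matches the paper and is fine; note only that the paper also spells out the preliminaries you leave implicit: building the tree model in $O(n+m)$ time, falling back to the interval-graph algorithm when $L(T)<2$, and invoking Lemma~\ref{lem:expanded}.

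The genuine gap is in your running-time accounting. You count the composite first arguments $V_{u,v}$ over \emph{all} pairs $u<v$, getting $O(n^{2})$ of them, hence $O(n^{3})$ table entries; at $O(n)$ work per entry this is $O(n^{4})$, which exceeds the claimed $O(n^{2}m)$ whenever $m=o(n^{2})$, and your appeal to ``a careful amortization of the per-edge work'' is not substantiated by anything in the argument. The missing observation, which is what the paper's bound actually rests on, is that the recursion only ever generates composite sets $V_{a,b}$ with $ab\in E_{G}$: every second index produced has the form $a\vartriangleleft z$ for some $z>a$, and by definition the path $T_{a\vartriangleleft z}$ contains the node $r(a)$, so $a\vartriangleleft z$ is adjacent to $a$. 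Hence there are only $m$ composite first arguments (plus $n$ plain ones $V_{u}$), the table has $O(n(n+m))$ entries, each computed in $O(n)$ time from the precalculated sets $\vartriangleleft v$, $\vartriangleleft vw$, $V_{u}\cap\vartriangleleft vw$ and vertices $u\vartriangleleft v$ (precomputable in $O(n^{2}+nm)$ time), for a total of $O(n^{2}(n+m))$. One also needs the final normalization that the graph may be assumed connected with at least one cycle (acyclic components are trivial), so that $n\leq m$ and the bound becomes $O(n^{2}m)$. Without the edge-indexed count of the $V_{u,v}$, your proof establishes polynomiality but not the stated theorem.
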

%\end{restatable}
\begin{proof}
We first describe the algorithm. Given a rooted path graph $G$, we construct its tree model $\{T, \{T_v\}_{v\in V_G}\}$ in $O(n+m)$ time \cite{Dietz84,Gavril75}.
If $G$ is an interval graph then SFVS can be solved by the algorithm described in \cite{PapT19} that runs in $O(nm)$ time.
We assume henceforth that $G$ is not an interval graph, so that $L(T) \geq 2$.
We apply Lemma~\ref{lem:expanded} and obtain an expanded tree model $\{T', \{T'_v\}_{v\in V_G}\}$ in $O(n^2)$ time.
As any host tree $T$ of $G$ has at most $n$ nodes \cite{ChaplickS14,HabibS09}, the expanded host tree $T'$ has $O(n)$ nodes by the third property of Lemma~\ref{lem:expanded}.
Moreover, observe that all subtrees $T'_v$ are rooted paths by the second property of Lemma~\ref{lem:expanded}.
Then we solve SFVS by computing $\optdir{V_{r}}{\emptyset}$ for the root $r$ of $T'$.

For this purpose, we construct a dynamic programming algorithm for computing $\optdir{V_r}{\emptyset}$.
The algorithm works on $T'$ in a bottom-up fashion, starting from the leaves and moving towards the root $r$.
As $T'$ is the host tree of an expanded tree model, there is a mapping between the vertices of $G$ and their corresponding root nodes in $T'$.
We start with defining the tables of data that the algorithm stores for each node $u \neq r$ of $T'$.
The constructed tables correspond to the sets $\optdir{V_{u}}{\emptyset}$, $\optdir{V_{u}}{\{w\}}$, $\optdir{V_{u,v}}{\emptyset}$, $\optdir{V_{u,v}}{\{w\}}$. 
In particular, we get the following formulas for the described sets.  
\begin{itemize}
\item Let $u,w\in V_{G}$ such that $u<w$ and $uw\in E_{G}$. Lemma~\ref{lem:together}~(i) and Lemma~\ref{lem:Adiruw} imply the following:
\begin{itemize}
\item If $u\in S$ or $w\in S$ then \\
$\optdir{V_{u}}{\{w\}}=\displaystyle\max_{weight}\left\{\bigcup_{u'\in\vartriangleleft u}{\optdir{V_{u'}}{\{w\}\cap N(u')}},\quad\{u\}\cup\bigcup_{u'\in\vartriangleleft uw}{\optdir{V_{u'}}{\{u,w\}\cap N(u')}}\right\}$.
\item If $u,w\notin S$ then \\
$\optdir{V_{u}}{\{w\}}=\displaystyle\max_{weight}\left\{\bigcup_{u'\in\vartriangleleft u}{\optdir{V_{u'}}{\{w\}\cap N(u')}},\quad\{u\}\cup(V_{\langle\vartriangleleft uw||\vartriangleleft u\rangle}\setminus S)\cup\bigcup_{u'\in\vartriangleleft uw}{\optdir{V_{u',u'\vartriangleleft u}}{\{u,w\}\cap N(u')}}\right\}$.
\end{itemize}

\item Let $u\in V_{G}$ and $v\in V_{G}\setminus S$ such that $u<v$ and $uv\in E_{G}$. Lemma~\ref{lem:together}~(i) and the description of $V_{u,v}\setminus\{v\}$ imply the following: \\
\hspace*{0.28in} $\displaystyle\optdir{V_{u,v}}{\emptyset}=\max_{weight}\left\{\optdir{V_{u,u\vartriangleleft v}}{\emptyset},\quad\{v\}\cup\optdir{V_{u,u\vartriangleleft v}}{\{v\}}\right\}.$

\item Let $u\in V_{G}$ and $v,w\in V_{G}\setminus S$ such that $u<v<w$ and $\{u,v,w\}$ induce a clique. Lemma~\ref{lem:together}~(i) and Lemma~\ref{lem:Adirutovandw} imply the following: \\
\hspace*{0.26in} $\displaystyle\optdir{V_{u,v}}{\{w\}}=\displaystyle\max_{weight}\left\{\optdir{V_{u,u\vartriangleleft v}}{\{w\}},\quad\{v\}\cup(V_{\langle\vartriangleleft vw|\{u\}|\{u\vartriangleleft v\}\rangle}\setminus S)\cup\bigcup_{u'\in V_{u}\cap\vartriangleleft vw}{\optdir{V_{u',u'\vartriangleleft v}}{\{v,w\}\cap N(u')}}\right\}$.
\end{itemize}

Let $v_{1},\ldots,v_{k}$ be the neighbors of $u$ on the path from $u$ towards the root $r$ of $T$ such that $v_{i}<v_{i+1}$, $1 \leq i<k$.
We compute $\optdir{V_{u}}{\emptyset}$ and $\optdir{V_{u}}{\{v_{i}\}}$, for each $1 \leq i \leq k$,
according to Lemma~\ref{lem:together}~(ii) and Lemma~\ref{lem:Adiruw}, respectively, by collecting the data stored on descendants of $u$.
%%% now it is a bit confusing...
%%% [S:] Is it?
For every $1\leq i<j\leq k$, with $v_{i},v_{j}\notin S$ and $v_{i}v_{j}\in E_{G}$, we compute $\optdir{V_{u,v_{i}}}{\{v_{j}\}}$ according to Lemma~\ref{lem:Adirutovandw}.
Observe that $\optdir{V_{u,v_{i}}}{\emptyset}$ and $\optdir{V_{u,v_{i}}}{\{v_{j}\}}$ are computed by table entries that correspond to values of $\optdir{V_{u',v'}}{\{w'\}}$ with $u'\leq u$, $v'<v_{i}$, and $v'<w'\leq v_{j}$.
When reaching the root $r$ of $T'$, it is enough to compute $\optdir{V_{r}}{\emptyset}$ by Lemma~\ref{lem:together}~(ii).

%The correctness of our algorithm follows from the lemmata of this section.
To evaluate the running time of the algorithm, we assume that the input graph is a connected rooted path graph having at least one cycle, so that $n \leq m$.
For this, observe that we can simply run our algorithm on each connected component and any tree has a trivial solution as it does not contain any $S$-cycle.
Now let us first determine the number of table entries required by our dynamic programming algorithm.
Consider the entries corresponding to $\optdir{X}{Y}$.
The sets $X$ are either $V_{u}$ or $V_{u,v}$ for some $u,v\in V_{G}$ such that $u<v$ and $uv\in E_{G}$ of which there are in total $n$ and $m$, respectively.
The sets $Y$ are either $\emptyset$ or $\{w\}$ for some $w\in V_{G}$ of which there are in total $n+1$.
We conclude that our table entries are $O(n(n+m))$.
Calculating a single entry requires to collect values of $O(n)$ entries. Those entries are determined via the vertex sets $\vartriangleleft v$, $\vartriangleleft vw$ and $V_{u}\cap\vartriangleleft vw$ and the vertices $u\vartriangleleft v$, which are precalculated.
Observe that these objects are also $O(n(n+m))$ in total.
To calculate $\vartriangleleft v$ and $u\vartriangleleft v$ we need only to traverse the host tree once for every $v\in V_{G}$.
As there are $O(n)$ nodes in $T'$, such a computation takes $O(n^2)$ time in total.
Similarly, to calculate $\vartriangleleft vw$ and $V_{u}\cap\vartriangleleft vw$ we need only to traverse the host tree once for every $v,w\in V_{G}$ such that $v<w$ and $vw\in E_{G}$.
Thus the total preprocessing time can be accomplished in $O(n^2+nm)$ time.
Therefore, the total running time of our algorithm is $O(n^{2}m)$.
\end{proof}

\section{Vertex leafage to cope with SFVS}\label{sec:UndirectedP}
%%% needs few explanation....
Due to Theorem~\ref{theo:leafage} and Corollary~\ref{cor:leafage}, it is interesting to ask whether our results can be further extended on larger classes of chordal graphs.
Here we consider graphs of bounded vertex leafage as a natural candidate towards such an approach.
However we show that \textsc{Subset Feedback Vertex Set} is NP-complete on undirected path graphs which are exactly the graphs of vertex leafage at most two.
%Undirected path graphs form a proper subclass of chordal graphs and they are exactly the intersection graphs of undirected paths in an undirected tree.
%For an undirected path graph $G$, the undirected tree of its intersection graph is called the \emph{tree model} for $G$.
In particular, we provide a polynomial reduction from the NP-complete \mcut problem.
Given a graph $G$, the \mcut problem is concerned with finding a partition
of $V(G)$ into two sets $A$ and $\overline{A}$ such that the number of edges with one
endpoint in $A$ and the other one in $\overline{A}$ is maximum among all partitions.
For two disjoint sets of vertices $X$ and $Y$, we denote by $E(X,Y)$ the set $\{\{x,y\}\mid x \in X, y \in Y\}$.
In such terminology, \mcut aims on finding a set $A \subseteq V(G)$ such that $|E(A, V(G) \setminus A) \cap E(G)|$ is maximum.
The \emph{cut-set} of a set of vertices $A$ is the set of edges of $G$ with exactly one endpoint in $A$, which is $E(A, V(G) \setminus A) \cap E(G)$.
% Then, the value |(S, S)| is the cut size of S.
% A maximum (cardinality) cut on G is a cut with the maximum size among all cuts on G. We denote the size of a
% maximum cut of G by mcs(G). Finally, the MaxCut problem is the problem of
% determining the size of the maximum cut
The \mcut problem is known to be NP-hard for general graphs \cite{Karp72} and remains NP-hard even when the input
graph is restricted to be a split or 3-colorable or undirected graph \cite{BodlaenderJ00}.
%Interestingly, \mcut remains NP-hard even on undirected graphs \cite{BodlaenderJ00}.
We mention that our reduction is based on \mcut on general graphs.

Towards the claimed reduction, for any graph $G$ on $n$ vertices and $m$ edges,
we will associate a graph $H_G$ on $12n^2 + 4n + 2m$ vertices.
First we describe the vertex set of $H_G$.
For every vertex $v \in V(G)$ we have the following sets of vertices:
\begin{itemize}
\item $X(v) = \{x_v^1, \ldots, x_v^{2n}\}$ and $\overline{X}(v) = \{\overline{x}_v^1, \ldots, \overline{x}_v^{2n}\}$,
\item $Y(v) = \{y_v^1, \ldots, y_v^{2n+1}\}$ and $\overline{Y}(v) = \{\overline{y}_v^1, \ldots, \overline{y}_v^{2n+1}\}$,
\item $Z(v) = \{z_v^1, \overline{z}_v^1, \ldots, z_v^{2n+1},\overline{z}_v^{2n+1}\}$, and
\item $E(v) = \{(v,x) \mid \{v,x\} \in E(G)\}$.
\end{itemize}
Observe that for every edge $\{u,v\} \in E(G)$ there are two vertices in $H_G$ that correspond to the ordered pairs $(u,v)$ and $(v,u)$.
We denote by $\overline{E}(v)$ the set of vertices $(x,v)$ of $H_G$ such that $\{x,v\}\in E(G)$.
The edge set of $H_G$ contains precisely the following:
\begin{itemize}
\item all edges required for the set $\bigcup_{v\in V(G)} (Y(v) \cup \overline{Y}(v) \cup E(v))$ to form a clique and
\item for every vertex $v \in V(G)$,
\begin{itemize}
\item {all elements of the sets} $E(X(v),Y(v))$, \ $E(\overline{X}(v),\overline{Y}(v))$, \ $E(X(v), E(v))$, \ and \ $E(\overline{X}(v), \overline{E}(v))$;
\item $\{x_v^i, x_v^{n+i}\}, \{\overline{x}_v^i, \overline{x}_v^{n+i}\}$ for each $i \in [n]$;
\item $\{y_v^j, z_v^{j}\}, \{y_v^j, \overline{z}_v^{j}\}, \{\overline{y}_v^j, z_v^{j}\}, \{\overline{y}_v^j, \overline{z}_v^{j}\}$ for each $j \in [2n+1]$.
\end{itemize}
\end{itemize}
%\todo[inline,color=red!40]{The above words were (and should be) there. We describe all bisets of the vertex set of $H_{G}$ that are contained in the edge set of $H_{G}$.}
This completes the construction of $H_G$.
Observe that the vertices of each pair $x_v^i, x_v^{n+i}$ and $\overline{x}_v^i, \overline{x}_v^{n+i}$ are true twins, whereas $z_v^{i}, \overline{z}_v^{i}$ are false twins.
An example of $H_G$ is given in Figure~\ref{fig:graphcut}.

\begin{figure}[!t]
\centering
\includegraphics[scale=0.92]{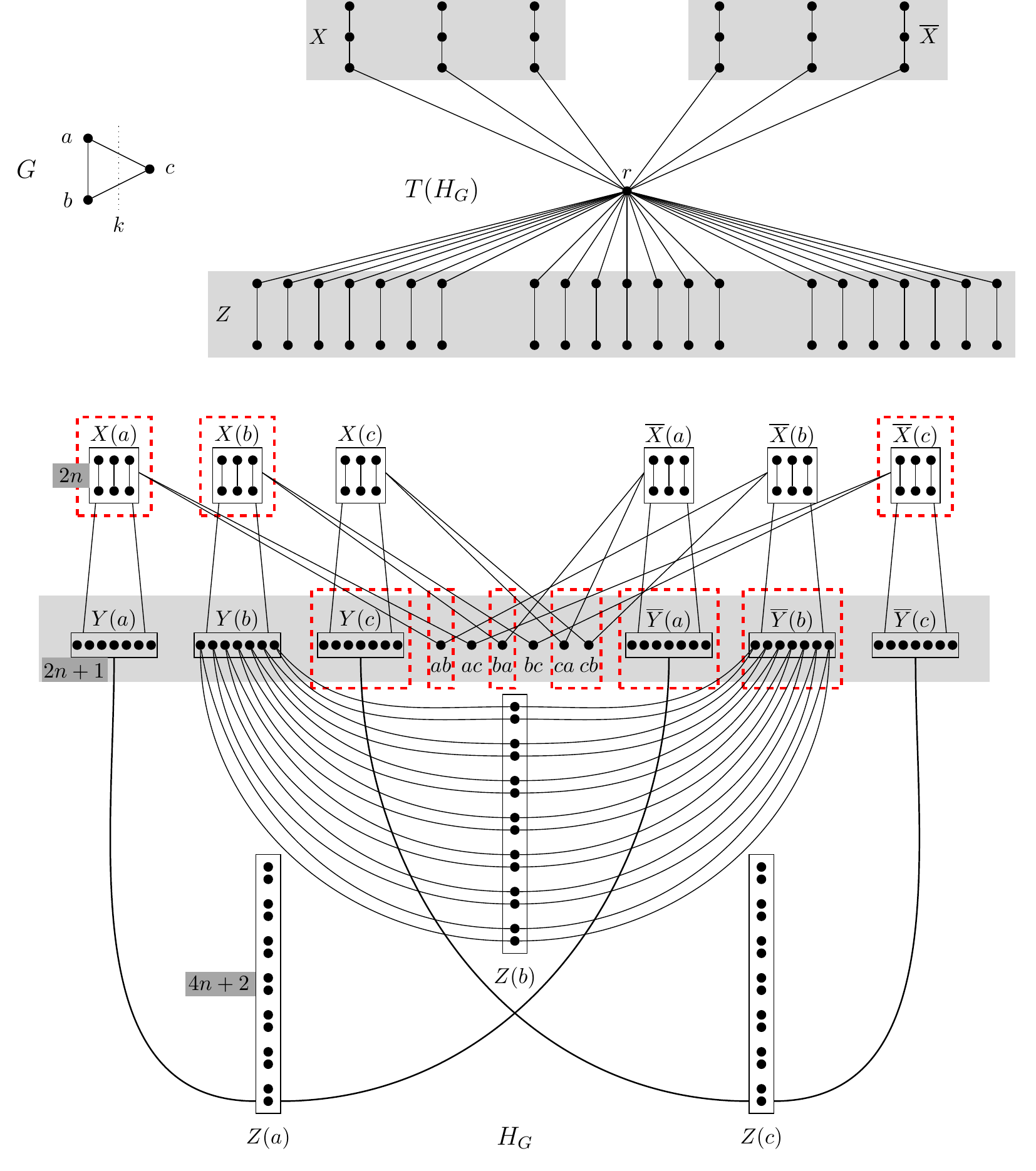}
\caption{Illustrating the undirected path graph $H_G$. On top left we show a graph $G$ on three vertices and on the bottom part we illustrate the corresponding graph $H_G$.
A tree model $T(H_G)$ for $H_G$, is given on the top right part.
The vertices of $H_G$ that lie on the grey area form a clique.}
%To keep the figure clean, all edges among the pairs $(v,X)$, $(w,\{z_1,z_2\})$, $(v',Y)$, and $(w', \cup \{z_1,z_2\})$ are assumed to be in $E(G) \setminus E(H)$.}
\label{fig:graphcut}
\end{figure}

%\footnote{We illustrate the construction of $H_G$ in Figure~\ref{fig:graphcut} given in Appendix~\ref{sec:example}.}.
%An example of $H_G$ is given in Appendix~\ref{sec:example}. in...Figure~\ref{fig:graphcut}.

%\begin{figure}[!t]
%\centering
%\includegraphics[scale=0.74]{figMC-all.pdf}
%\caption{Illustrating the undirected path graph $H_G$. On top left we show a graph $G$ on three vertices and on the bottom part we illustrate the corresponding graph $H_G$.
%A tree model $T(H_G)$ for $H_G$, is given on the top right part.
%%The vertices of $H_G$ that lie on the grey area form a clique.}
%%To keep the figure clean, all edges among the pairs $(v,X)$, $(w,\{z_1,z_2\})$, $(v',Y)$, and $(w', \cup \{z_1,z_2\})$ are assumed to be in $E(G) \setminus E(H)$.}
%\label{fig:graphcut}
%\end{figure}

%\begin{restatable}{lemma}{lemhgundirected}\label{lem:hg_undirected}
\begin{lemma}\label{lem:hg_undirected}
For any graph $G$, $H_G$ is an undirected path graph.
\end{lemma}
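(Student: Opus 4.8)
The plan is to exhibit an explicit host tree $T$ together with a family of paths in $T$ whose intersection graph is $H_G$, and then to check, edge by edge, that two paths meet exactly when the corresponding vertices are adjacent in $H_G$. Since the whole set $\bigcup_{v\in V(G)}(Y(v)\cup\overline{Y}(v)\cup E(v))$ must induce a clique, the Helly property of paths in a tree forces all of their paths to share a common node, so I would grow $T$ from a single central node $c$ through which every clique vertex passes, while arranging that no other vertex touches $c$. For each $v\in V(G)$ I attach to $c$ three arms: a path $A_v=\mypath{c,a_v^1,\ldots,a_v^n}$ to host $X(v)$, a symmetric path $\overline{A}_v=\mypath{c,\overline{a}_v^1,\ldots,\overline{a}_v^n}$ to host $\overline{X}(v)$, and a spine $S_v=\mypath{c,s_v^1,\ldots,s_v^{2n+1}}$ that carries, at each node $s_v^j$, a private pendant path $\mypath{s_v^j,\zeta_v^j,\overline{\zeta}_v^j}$ dedicated to index $j$.

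Next I would place the paths. Each true-twin pair $x_v^i,x_v^{n+i}$ is realized by the single node $a_v^i$ (and likewise $\overline{x}_v^i,\overline{x}_v^{n+i}$ by $\overline{a}_v^i$), which immediately produces the required matching inside $X(v)$ and $\overline{X}(v)$. I set $z_v^j=\{\zeta_v^j\}$ and $\overline{z}_v^j=\{\overline{\zeta}_v^j\}$. I let $y_v^j$ be the unique $c$-crossing path of $T$ from $a_v^n$ to $\overline{\zeta}_v^j$ — it sweeps all of $A_v$, reaches $c$, climbs the spine to $s_v^j$, and turns into the $j$-th pendant — and symmetrically let $\overline{y}_v^j$ run from $\overline{a}_v^n$ to $\overline{\zeta}_v^j$. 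Finally each $(v,x)\in E(v)$, which is simultaneously an element of $\overline{E}(x)$, is realized by the path from $a_v^n$ to $\overline{a}_x^n$, so that it sweeps $A_v$, passes through $c$, and sweeps $\overline{A}_x$.

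It then remains to verify the two directions of the adjacency equivalence. All required edges are witnessed locally: on $A_v$ every $x_v^i$ meets every $y_v^j$ and every $(v,x)$; on $\overline{A}_x$ the vertex $(v,x)$ meets $\overline{X}(x)$; the set $\bigcup_v(Y(v)\cup\overline{Y}(v)\cup E(v))$ meets pairwise at $c$; and on the $j$-th pendant both $y_v^j$ and $\overline{y}_v^j$ meet $z_v^j$ and $\overline{z}_v^j$. For the absence of spurious edges, I would use that the arms of distinct vertices, and the two oppositely-barred arms of the same vertex, are internally disjoint and meet only at $c$, while the $x$-paths and $z$-paths avoid $c$ altogether; hence each of them interacts only with the paths sweeping its own arm.

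The delicate part will be the $z$-gadget. Each $z_v^j$ and $\overline{z}_v^j$ must be adjacent to exactly $y_v^j$ and $\overline{y}_v^j$, which forces the overlap $P(y_v^j)\cap P(\overline{y}_v^j)$ to consist of two nodes that no other index touches. The pendant $\mypath{s_v^j,\zeta_v^j,\overline{\zeta}_v^j}$ is engineered precisely for this: a spine-climbing path $y_v^{j'}$ with $j'\neq j$ visits the spine node $s_v^j$ but, being a path, never turns into its pendant, and so misses $\zeta_v^j$ and $\overline{\zeta}_v^j$. Checking this, together with the fact that each $y_v^j$ must spend its two endpoints on two separate tasks — one in $A_v$ to cover $X(v)$ and one in the pendant to reach the $z$-vertices — while never entering $\overline{A}_v$, is the crux of the argument; the remaining adjacency and non-adjacency checks are routine bookkeeping against the edge set prescribed in the construction of $H_G$.
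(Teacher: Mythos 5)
Your construction is correct and follows essentially the same approach as the paper's proof: a central node through which exactly the clique $\bigcup_v(Y(v)\cup\overline{Y}(v)\cup E(v))$ passes, per-vertex arms realizing $X(v)$ and $\overline{X}(v)$ as single-node paths (one node per true-twin pair), two-node pendant gadgets realizing the false twins $z_v^j,\overline{z}_v^j$, and $y$-paths and edge-paths that sweep one arm, cross the center, and terminate in a pendant or in the opposite arm. The only difference is cosmetic: the paper attaches each pendant $\langle r,z_1^{(v,j)},z_2^{(v,j)}\rangle$ directly to the root instead of hanging it off a per-vertex spine, so your spine nodes and the climb along them are superfluous, though harmless.
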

%\end{restatable}
\begin{proof}
In order to show that $H_G$ is an undirected path graph, we construct a tree model $T(H_G)$ for $H_G$ such that the vertices of $H_G$ correspond to particular undirected paths of $T(H_G)$.
To distinguish the vertex sets between $G$ and $T(H_G)$, we refer to the vertices of $T(H_G)$ as nodes.
%Let $X$ and $\overline{X}$ be the disjoint union of $n$ paths such that each path consists of $n$ vertices $x_1, \ldots, x_n$.
%Thus $X$ (and $\overline{X}$) contains exactly $n^2$ vertices.
%Moreover, let $Z$ be the disjoint union of $2n^2+n$ paths such that each path consists of two vertices $z_1,z_2$.
%% another try:
In order to construct $T(H_{G})$, starting from a particular node $r$, we create the following paths:
\begin{itemize}
\item for each $v \in V(G)$, $P_{X}(v) = \mypath{r, x_1^{(v)}, \ldots, x_n^{(v)}}$ and $P_{\overline{X}}(v) = \mypath{r, \overline{x}_1^{(v)}, \ldots, \overline{x}_n^{(v)}}$;
\item for each $v \in V(G)$ and $j \in [2n+1]$, $P_{Z}(v,j) = \mypath{r, z_1^{(v,j)}, z_2^{(v,j)}}$.
\end{itemize}
The tree model $T(H_G)$ is %constructed by attaching the above described paths on node $r$ and is obtained from
the union of the paths $P_{X}(v)$, $P_{\overline{X}}(v)$, and $P_{Z}(v,j)$ for all $v\in V(G)$ and for all $j\in[2n+1]$.

Next, we describe the undirected paths on $T(H_G)$ that correspond to the vertices of $H_G$.
\begin{itemize}
\item For every pair of vertices $x_v^{i}, x_v^{n+i} \in X(v)$ (resp., $\overline{x}_v^{i}, \overline{x}_v^{n+i} \in \overline{X}(v)$) with $i \in [n]$, we create two single-vertex paths $\mypath{x_i^{(v)}}$ (resp., $\mypath{\overline{x}_i^{(v)}}$).
\item For every pair of vertices $z_v^j,\overline{z}_v^j \in Z(v)$, with $j\in [2n+1]$, we create two single-vertex paths $\mypath{z_1^{(v,j)}}$ and $\mypath{z_2^{(v,j)}}$.
\item For every vertex $y_v^{j} \in Y(v)$ (resp., $\overline{y}_v^{j} \in \overline{Y}(v))$ with $j \in [2n+1]$,
we create a path $\mypath{x_n^{(v)}, \ldots, x_1^{(v)}, r, z_1^{(v,j)}, z_2^{(v,j)}}$
(resp., $\mypath{\overline{x}_n^{(v)}, \ldots, \overline{x}_1^{(v)}, r, z_1^{(v,j)}, z_2^{(v,j)}}$).
\item For every vertex $(u,v)$ of $H_G$, we create a path $\mypath{x_n^{(u)}, \ldots, x_1^{(u)}, r, \overline{x}_1^{(v)}, \ldots, \overline{x}_n^{(v)}}$.
\end{itemize}
Now it is not difficult to see that the intersection graph of the above constructed undirected paths of $T(H_G)$ is isomorphic to $H_G$.
More precisely, all paths containing node $r$ correspond to the vertices of the clique $\bigcup_{v\in V(G)} Y(v) \cup \overline{Y}(v) \cup E(v)$.
Moreover all subpaths of $P_{X}(v)$ and $P_{\overline{X}}(v)$ correspond to the vertices of $X(v)$ and $\overline{X}(v)$, respectively,
while the subpaths of $P_{Z}(v,j)$ correspond to the vertices of $Z(v)$.
Therefore, $H_G$ is an undirected path graph.
\end{proof}

Let us now show that there is a subset feedback vertex set in $H_G$ that is related to a cut-set in $G$.
We let $X=\bigcup X(v)$, $\overline{X} = \bigcup \overline{X}(v)$, and $Z = \bigcup Z(v)$.
%For doing so, we let $X=\bigcup X(v)$, $\overline{X} = \bigcup \overline{X}(v)$, and $Z = \bigcup Z(v)$.
%% Moreover, ...
%\begin{restatable}[*]{lemma}{lemundirectedforward}\label{lem:undirectedforward}
\begin{lemma}\label{lem:undirectedforward}
Let $G$ be a graph with $A \subseteq V(G)$ and
let $H_G$ be the undirected path graph of $G$. %and let $S = X \cup \overline{X} \cup Z$.
For the set of vertices $S = X \cup \overline{X} \cup Z$ of $H_G$,
there is a subset feedback vertex set $U$ of $(H_G,S)$ such that $|U| = 4n^2+n+2m-k$, where $k$ is the size of the cut-set of $A$ in $G$.
\end{lemma}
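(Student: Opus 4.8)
The plan is to exhibit the subset feedback vertex set $U$ explicitly from the partition $A$, verify that $H_G - U$ contains no $S$-triangle (which, since $H_G$ is chordal by Lemma~\ref{lem:hg_undirected}, is equivalent to $H_G - U$ being an $S$-forest), and finally count $|U|$.

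First I would define $U$. For every $v \in A$ I place $Y(v) \cup \overline{X}(v)$ into $U$, and for every $v \notin A$ I place $\overline{Y}(v) \cup X(v)$ into $U$; no vertex of $Z$ is placed in $U$. For the edge-vertices I adopt the rule that an edge-vertex $(v,w)$ \emph{survives} precisely when $v \notin A$ and $w \in A$, and I place all remaining edge-vertices into $U$. For a cut edge $\{v,w\}$ of $A$ exactly one of the two orientations $(v,w),(w,v)$ meets this condition, while for a non-cut edge neither does; hence exactly $k$ edge-vertices survive and $2m-k$ are removed.

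Next I would verify that $H_G - U$ has no $S$-triangle, and the clean way to do so is to show the stronger statement that \emph{every surviving $S$-vertex has at most one surviving neighbour}, which immediately forbids it from lying in any surviving triangle. The recalled adjacencies give three cases. A surviving vertex $x_v^i \in X(v)$ forces $v \in A$; then $Y(v) \subseteq U$, and since keeping $(v,x) \in E(v)$ would require $v \notin A$ we also have $E(v) \subseteq U$, so the only surviving neighbour of $x_v^i$ is its true twin. Symmetrically, a surviving $\overline{x}_v^i \in \overline{X}(v)$ forces $v \notin A$, whence $\overline{Y}(v) \subseteq U$ and $\overline{E}(v) \subseteq U$ (keeping $(x,v) \in \overline{E}(v)$ would require $v \in A$), leaving only its twin. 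Finally, for $z_v^j$ or $\overline{z}_v^j$, exactly one of $Y(v),\overline{Y}(v)$ is removed, so exactly one of its two neighbours $y_v^j,\overline{y}_v^j$ survives. In all three cases the surviving $S$-vertex has at most one surviving neighbour, so no triangle of $H_G - U$ contains an $S$-vertex, and $H_G - U$ is an $S$-forest.

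Finally I would count: each vertex $v$ contributes $|Y(v)| + |\overline{X}(v)| = (2n+1) + 2n = 4n+1$ to $U$ (or the symmetric $|\overline{Y}(v)| + |X(v)| = 4n+1$), giving $n(4n+1) = 4n^2 + n$, to which the removed edge-vertices add $2m-k$; thus $|U| = 4n^2 + n + 2m - k$, as claimed. I expect the verification step to be the main obstacle, and within it the handling of the shared edge-vertices: each $(v,w)$ lies simultaneously in $E(v)$ (adjacent to $X(v)$) and in $\overline{E}(w)$ (adjacent to $\overline{X}(w)$), so one must reason about two vertex gadgets at once and check that the survival condition $v \notin A \wedge w \in A$ is exactly what both removes all clique-neighbours of the surviving $X$- and $\overline{X}$-vertices and drives the edge-vertex count to $k$. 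This coupling is precisely where the cut-set of $A$ is encoded into the instance.
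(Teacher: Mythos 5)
Your proposal is correct and follows essentially the same approach as the paper: an explicit construction of $U$ from the partition (yours is the paper's construction with the roles of $A$ and $V(G)\setminus A$ swapped, which is immaterial since both sides have the same cut-set), followed by the same verification that every surviving $S$-vertex has at most one surviving neighbour, and the same count $n(4n+1)+2m-k$. No gaps.
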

%\end{restatable}
\begin{proof}
We describe the claimed set $U$. For simplicity, we let $\overline{A} = V(G) \setminus A$ and
$\overrightarrow{E} = \{(u,v) \mid u\in A, v\in \overline{A}, \{u,v\}\in E(G)\}$. Clearly, $k=|\overrightarrow{E}|$.
\begin{itemize}
\item For every vertex $v \in A$, let $U(v) = X(v) \cup \overline{Y}(v)$.
\item For every vertex $v \in \overline{A}$, let $\overline{U}(v) = \overline{X}(v) \cup Y(v)$.
\item For every vertex $v \in V(G)$, let $E_U(v) = E(v) \setminus \overrightarrow{E}$.
\end{itemize}
Now $U$ contains all the above sets of vertices, that is, $U = \bigcup_{v\in V(G)} U(v) \cup \overline{U}(v) \cup E_U(v)$.
To show that $U$ is indeed a subset feedback vertex set, we claim that the graph $H_G - U$ does not contain any $S$-cycle.
Assume for contradiction that there is an $S$-cycle $C$ in $H_G - U$. Then $C$ contains at least one vertex from $S$.
We consider the following cases:
\begin{itemize}
\item Let $z \in C \cap Z$. Any vertex $z \in Z$ has exactly two neighbors $y,y'$ in $H_G$.
Thus $C$ must pass through both $y$ and $y'$. By construction, we know that $y \in Y(v)$ and $y' \in \overline{Y}(v)$ for a vertex $v \in V(G)$.
Hence we reach a contradiction to the fact that $y,y'$ are both vertices of $H_G - U$, since either $Y(v) \subseteq U$ and $\overline{Y}(v)\subseteq V(H_{G})\setminus U$ or vice versa.
\item Let $x \in C \cap X$. Observe that there exists $v\in \overline{A}$ such that $x \in X(v)$, since $X(v') \subseteq U$ for all $v' \in A$.
Also notice that all the vertices of $X(v)$ have exactly one neighbor in $X(v)$.
Thus there is a vertex $w \in (H_G - X(v))\cap C$ that is a neighbor of $x$ in $C$.
By construction, all vertices of $X(v)$ are adjacent to every vertex of $Y(v)$ and to every vertex of the form $(v,a)$ for which $\{v,a\} \in E(G)$.
Since $v \in \overline{A}$, we know that $Y(v) \subseteq U$ and $(v,a) \notin \overrightarrow{E}$ so that $(v,a) \in U$.
Hence, in both cases, we reach a contradiction to the fact that $w \in C$.
\item Let $x \in C \cap \overline{X}$.
%Similar to the previous case, there is a vertex $w \in (H_G - \overline{X}(v))\cap C$ that is a neighbor of $x$ in $C$, for a vertex $v \in A$.
%With similar arguments, we know that $w \notin \overline{Y}(v)$, which means the $w$ is a vertex of the form $(a,v)$.
%Since $v \in A$, we have $(a,v) \notin \overrightarrow{E}$, reaching
Arguments that are completely symmetrical to the ones employed in the previous case yield a contradiction to the fact that $w \in C$.
\end{itemize}
Therefore, there is no cycle that passes through a vertex of $S$, so that $U$ is indeed a subset feedback vertex set.
Regarding the size of $U$, observe that $|U(v)|=|\overline{U}(v)|=4n+1$ and $|\overrightarrow{E}| =k$.
Thus the described set $U$ fulfils the claimed properties.
\end{proof}

Now we are ready to show our main result of this section.
One direction follows from the previous lemma. The reverse direction is achieved through a series of claims.
The key part is to force at least $2m-k$ particular $S$-cycles of $H_G$ to appear within the corresponding $k$ cut-edges in $G$.

%\begin{restatable}[*]{theorem}{theonpundirected}\label{theo:npundirected}
\begin{theorem}\label{theo:npundirected}
Unweighted \textsc{Subset Feedback Vertex Set} is NP-complete on undirected path graphs.
\end{theorem}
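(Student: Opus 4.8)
The plan is to establish NP-completeness by combining membership in NP with the hardness reduction already prepared in the excerpt. Membership in NP is immediate: given a candidate set $U$, one can verify in polynomial time that $H_G - U$ contains no $S$-cycle (e.g.\ by checking, for each $s \in S \setminus U$, that $s$ lies on no cycle in $H_G - U$, which is a reachability computation). The substance of the proof is the correctness of the polynomial reduction from \mcut to unweighted SFVS on the undirected path graph $H_G$. By Lemma~\ref{lem:hg_undirected}, $H_G$ is indeed an undirected path graph and is constructible in polynomial time, so it remains only to prove the equivalence of instances.

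I would state the target equivalence precisely: $G$ has a cut-set of size at least $k$ if and only if $(H_G, S)$ has a subset feedback vertex set of size at most $4n^2 + n + 2m - k$. The forward direction is already supplied by Lemma~\ref{lem:undirectedforward}, which exhibits, from any $A \subseteq V(G)$ with cut-set of size $k$, a subset feedback vertex set of size exactly $4n^2 + n + 2m - k$. So the forward direction is handed to us, and the entire remaining work is the \emph{reverse} direction: from a subset feedback vertex set $U$ of size at most $4n^2 + n + 2m - k$, recover a cut $A$ in $G$ whose cut-set has size at least $k$.

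For the reverse direction I would argue through a sequence of structural claims about any optimal (minimum) $U$, exploiting the twin gadgets baked into the construction. The key observations are: (1) because each $z_v^j$ (and $\overline z_v^j$) has exactly the two neighbors $y_v^j \in Y(v)$ and $\overline y_v^j \in \overline Y(v)$, together with the $S$-triangles these form, $U$ must for each $v$ ``commit'' to removing essentially all of $Y(v)$ or all of $\overline Y(v)$ — this is what encodes the side of the cut that $v$ lands on, yielding a set $A = \{v : U \text{ removes } \overline Y(v)\}$. (2) The true-twin pairs $x_v^i, x_v^{n+i}$ (and the barred analogues) force $U$ to delete a full block $X(v)$ or $\overline X(v)$ on the $S$-side it has committed to, pinning down the $4n^2 + n$ part of the budget. (3) The remaining slack in the budget is controlled by how many edge-vertices $(u,v) \in E(v)$ survive; an edge-vertex can survive only when it does not create an $S$-triangle with the $X$/$\overline X$ blocks, which happens exactly when $\{u,v\}$ is cut by $A$. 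Counting shows that at most $k$ edge-vertices survive within budget, i.e.\ the cut-set of $A$ has size at least $k$.

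The main obstacle I anticipate is item (3) and the accompanying tight counting: one must show that \emph{every} subset feedback vertex set within the stated budget, not merely the canonical one from Lemma~\ref{lem:undirectedforward}, is forced into this ``all of $Y(v)$ or all of $\overline Y(v)$'' dichotomy, ruling out mixed or partial deletions that might otherwise cheat the budget by saving on some $Y(v)$ vertices while paying elsewhere. Handling these must be done by carefully identifying the relevant $S$-triangles — those through the $z$-gadgets and those through the true-twin $X$-blocks — and showing that any deviation from the canonical structure strictly wastes budget, so that meeting the bound $4n^2 + n + 2m - k$ pins the solution down to exactly the form needed to read off a cut of size at least $k$. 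This is precisely the ``series of claims'' forcing $2m - k$ of the edge-gadget cycles to be destroyed, as flagged in the paragraph preceding the theorem, and it is where the delicate twin-counting will be concentrated.
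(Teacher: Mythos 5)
Your proposal follows essentially the same route as the paper: the same reduction and equivalence, the forward direction delegated to Lemma~\ref{lem:undirectedforward}, and a reverse direction that normalizes an arbitrary solution within budget into the canonical form via exchange arguments (the paper's ``tier-$i$ sfvs'' claims: $Z$ left untouched, all-or-nothing $X(v)/\overline{X}(v)$ blocks, the $X(v)\cup\overline{Y}(v)$ versus $\overline{X}(v)\cup Y(v)$ dichotomy defining the cut, then edge-vertex counting), which matches your observations (1)--(3). One wording slip in your step (3): it should read ``at least $k$ edge-vertices survive'' --- the budget allows at most $2m-k$ deletions among the $2m$ edge-vertices, and since each survivor corresponds to a distinct cut edge this gives the cut of size at least $k$, exactly the conclusion you draw.
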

%\end{restatable}
\begin{proof}
We provide a polynomial reduction from the NP-complete \mcut problem.
%% Note that we deal with the decision versions: the decision of \mcut problem takes as input a graph $G$ and an integer $k$, and asks whether there is a cut-set of size at least $k$.
Given a graph $G$ on $n$ vertices and $m$ edges for the \mcut problem, we construct the graph $H_G$.
Observe that the size of $H_G$ is polynomial and the construction of $H_G$ can be done in polynomial time.
By Lemma~\ref{lem:hg_undirected}, $H_G$ is an undirected path graph.
According to the terminology explained earlier for the vertices of $H_G$, we let  $S = X \cup \overline{X} \cup Z$.
We claim that $G$ admits a cut-set of size at least $k$ if and only if $(H_G,S)$ admits a subset feedback vertex set of size at most $4n^2+n+2m-k$.

Lemma~\ref{lem:undirectedforward} shows the forward direction.
The reverse direction is achieved through a series of claims.
In what follows, we are given a subset feedback vertex set $U$ of $(H_G,S)$ with $|U| \leq 4n^2+n+2m-k$.
%Let $y_v^j \in Y(v)$ for a vertex $v \in V(G)$ and $j \in [2n+1]$.
Based on the structure of $H_G$, it is not difficult to see that any $S$-triangle has one of the following forms for some $v\in V(G)$.
%% perhaps all S-cycles (?) %% here....
\begin{itemize}
\item $\mypath{x_v^i,x_v^{n+i},w}$ for some $i \in [n]$ and $w \in Y(v) \cup E(v)$;
\item $\mypath{x,w,w'}$ for some $x \in X(v)$ and $w,w' \in Y(v) \cup E(v)$;
\item $\mypath{\overline{x}_v^i,\overline{x}_v^{n+i},\overline{w}}$ for some $i \in [n]$ and $\overline{w} \in \overline{Y}(v) \cup \overline{E}(v)$;
\item $\mypath{\overline{x},\overline{w},\overline{w}'}$ for some $\overline{x} \in \overline{X}(v)$ and $\overline{w},\overline{w}' \in \overline{Y}(v) \cup \overline{E}(v)$;
\item $\mypath{y_v^j,\overline{y}_v^j,\tilde{z}}$ for some $j\in[2n+1]$ and $\tilde{z} \in \{z_v^j, \overline{z}_v^j\}$
\end{itemize}

\noindent We next define the following sets of vertices.

\begin{itemize}
%\begin{multicols}{2}
\item $\displaystyle\overleftrightarrow{E}=\bigcup_{v\in V(G)}(E(v)\cup\overline{E}(v))$,
\item $\displaystyle\overrightarrow{E}(A,B)=(\bigcup_{a\in A}E(a))\cap(\bigcup_{b\in B}\overline{E}(b))$ for every $A,B\subseteq V(G)$,
%\end{multicols}
%\begin{multicols}{2}
\item $A_{U}=\{v\in V(G)\mid Y(v)\subseteq V(H_{G})\setminus U\}$ for every $U\subseteq V(H_{G})$, and

\item $\overline{A}_{U}=\{v\in V(G)\mid\overline{Y}(v)\subseteq V(H_{G})\setminus U\}$ for every $U\subseteq V(H_{G})$.
%\end{multicols}
\end{itemize}
%For every $U\subseteq V(H_{G})$, we define $A_{U}=\{v\in V(G)\mid Y(v)\subseteq V(H_{G})\setminus U\}$ and $\overline{A}_{U}=\{v\in V(G)\mid\overline{Y}(v)\subseteq V(H_{G})\setminus U\}$.
Our task is to show that there exists a subset feedback vertex set $U$ such that $|U|\leq4n^2+n+2m-k$ and $U(A_{U})=U$, that is, $U$ satisfies the following properties:
\renewcommand\labelenumi{(\theenumi)}\begin{enumerate}
\item $Z\subseteq V(H_{G})\setminus U$.
\item For all $v\in V(G)$, either $X(v)\subseteq U$ or $X(v)\subseteq V(H_{G})\setminus U$ \ and\\
\hspace*{1.05in} either $\overline{X}(v)\subseteq U$ or $\overline{X}(v)\subseteq V(H_{G})\setminus U$.
\item For all $v\in V(G)$, either $X(v)\cup\overline{Y}(v)\subseteq U$ and $\overline{X}(v)\cup Y(v)\subseteq V(H_{G})\setminus U$\\
\hspace*{1.25in} or $\overline{X}(v)\cup Y(v)\subseteq U$ and $X(v)\cup\overline{Y}(v)\subseteq V(H_{G})\setminus U$.
\item $\overleftrightarrow{E}\setminus\overrightarrow{E}(A_{U},\overline{A}_{U})\subseteq U$ and $\overrightarrow{E}(A_{U},\overline{A}_{U})\subseteq V(H_{G})\setminus U$.
\end{enumerate}
For $i=0, 1, \ldots, 4$, we say that a set $U\subseteq V(H_{G})$ is a \emph{tier-$i$ sfvs} if it is a subset feedback vertex set of $(H_{G},S)$ that satisfies the first $i$ properties above.
In particular, any subset feedback vertex set of $(H_{G},S)$ is a tier-$0$ sfvs.
Notice that a tier-$i$ sfvs is a tier-$j$ sfvs for all $j\in\{0,1,\ldots,i-1\}$.
In such terms, our goal is to show that there exists a tier-$4$ sfvs $U$ such that $|U|\leq4n^2+n+2m-k$. %To that end, we make the following four Claims:
%
%\begin{restatable}{claim}{claimtieri}\label{claim:tieri}
%For every tier-$0$ sfvs $U$, there is a tier-$1$ sfvs $U'$ such that $|U'|\leq|U|$.
%\end{restatable}
%\begin{restatable}{claim}{claimtierii}\label{claim:tierii}
%For every tier-$1$ sfvs $U$, there is a tier-$2$ sfvs $U'$ such that $|U'|\leq|U|$.
%\end{restatable}
%\begin{restatable}{claim}{claimtieriii}\label{claim:tieriii}
%For every tier-$2$ sfvs $U$, there is a tier-$3$ sfvs $U'$ such that $|U'|\leq|U|$.
%\end{restatable}
%\begin{restatable}{claim}{claimtieriv}\label{claim:tieriv}
%For every tier-$3$ sfvs $U$, there is a tier-$4$ sfvs $U'$ such that $|U'|\leq|U|$.
%\end{restatable}
%
%We subsequently prove these Claims.

\begin{claim}\label{claim:tieri}
For every tier-$0$ sfvs $U$, there is a tier-$1$ sfvs $U'$ such that $|U'|\leq|U|$.
\end{claim}
\begin{claimproof}
If $Z \cap U = \emptyset$, then $U$ is already a tier-$1$ sfvs.
Assume that $\{z_{v}^{j},\overline{z}_{v}^{j}\}\cap U\neq\emptyset$ for some $v\in V(G)$ and some $j \in [2n+1]$.
We construct the set $U'=(U\setminus\{z_{v}^{j},\overline{z}_{v}^{j}\})\cup\{\overline{y}_{v}^{j}\}$.
Notice that $|\{z_{v}^{j},\overline{z}_{v}^{j}\}\cap U|\geq1$ and $|\{\overline{y}_{v}^{j}\}\setminus U|\leq1$, so that $|U'|\leq|U|$.
Also notice that $U'$ is a tier-$0$ sfvs because both vertices $z_v^{j}$ and $\overline{z}_v^{j}$ have at most one neighbor in $H_G - U'$.
Iteratively applying the same argument for each $v\in V(G)$ and for each $j\in[2n+1]$ such that $\{z_{v}^{j},\overline{z}_{v}^{j}\}\cap U\neq\emptyset$, we obtain a tier-$0$ sfvs $U'$ such that $Z\subseteq V(H_{G})\setminus U'$.
\end{claimproof}

\begin{claim}\label{claim:ybound}
For every tier-$1$ sfvs $U$, $|(Y(v) \cup \overline{Y}(v))\cap U|\geq 2n+1$ holds for every $v \in V(G)$.
\end{claim}
\begin{claimproof}
Consider the sets $Y(v)$ and $\overline{Y}(v)$ of a vertex $v \in V(G)$.
By the fact that $U$ is a $Z$-solution, we have $Z(v)\subseteq V(H_{G})\setminus U$.
Since $Z(v) \subset S$, the cycles $\mypath{y_{v}^{j},\overline{y}_{v}^{j},\overline{z}_{v}^{j}}$, $j\in[2n+1]$ are $2n+1$ vertex-disjoint $S$-cycles.
Therefore, the number of vertices of $Y(v) \cup \overline{Y}(v)$ in $U$ must be at least $2n+1$.
\end{claimproof}

\begin{claim}\label{claim:tierii}
For every tier-$1$ sfvs $U$, there is a tier-$2$ sfvs $U'$ such that $|U'|\leq|U|$.
\end{claim}
\begin{claimproof}
We consider the set $X(v)$ for a vertex $v \in V(G)$.
Assume that there are vertices $x \in X(v) \cap U$ and $x' \in X(v) \cap (V(H_G) \setminus U)$.
%Observe that the vertices of $X(v)$ have neighbors outside $X(v)$ only to the sets $Y(v) \cup E(v)$.
Observe that $N(x')\setminus X(v)=Y(v)\cup E(v)$.
%We claim that there is at most one vertex from $Y(v) \cup E(v)$ in $V(H_G) \setminus U$.
We claim that there is at most one vertex in $(Y(v)\cup E(v))\cap(V(H_G)\setminus U)$.
%Consider any two neighbors $w,w'$ of $x'$ in $Y(v) \cup E(v)$.
Consider any two vertices $w,w'\in Y(v)\cup E(v)$.
%Since $x' \in S$ and the vertices of $Y(v) \cup E(v)$ induce a clique, we have that both $w,w'$ are not vertices of $V(H_G) \setminus U$.
Since $x'\in S$ and $Y(v)\cup E(v)$ induces a clique, we have that at most one of $w,w'$ is in $V(H_G) \setminus U$.
%This means that there are at least $|Y(v) \cup E(v)|-1$ vertices of $Y(v) \cup E(v)$ in $U$.
%We construct the set $U' = U \setminus X(v) \cup \{w\}$, where $w \in (Y(v) \cup E(v)) \setminus U$.
We construct the set $U'=(U\setminus X(v))\cup(Y(v)\cup E(v))$.
Notice that $|X(v)\cap U|\geq1$ and $|(Y(v)\cup E(v))\setminus U|\leq1$, so  that $|U'|\leq|U|$.
Thus $Y(v) \cup E(v) \subseteq U'$ and $X(v) \cap U' = \emptyset$.
Moreover the constructed set $U'$ is indeed a tier-$1$ sfvs.
This follows from the fact that the subgraph induced by the vertices of $X(v)$ is acyclic and $N(X(v)) \subseteq U'$.
With completely symmetric arguments, there is a tier-$1$ sfvs $U''$ such that either $\overline{X}(v) \subseteq U''$ or $\overline{X}(v) \cap U'' = \emptyset$ holds, and $|U''|\leq |U'|$.
Iteratively applying these arguments for each $v\in V(G)$, we obtain a tier-$2$ sfvs $U'$ such that $|U'|\leq|U|$.
\end{claimproof}

\begin{claim}\label{claim:tieriii}
For every tier-$2$ sfvs $U$, there is a tier-$3$ sfvs $U'$ such that $|U'|\leq|U|$.
\end{claim}
\begin{claimproof}
{Consider a vertex $v\in V(G)$. By the fact that $U$ is a tier-$2$ sfvs, exactly one of the following holds:
\renewcommand\labelenumi{(\theenumi)}\begin{enumerate}
%\begin{multicols}{2}
\item $X(v)\cup\overline{X}(v)\subseteq V(H_{G})\setminus U$
\item $X(v)\subseteq U$ and $\overline{X}(v)\subseteq V(H_{G})\setminus U$
\item $X(v)\subseteq V(H_{G})\setminus U$ and $\overline{X}(v)\subseteq U$
\item $X(v)\cup\overline{X}(v)\subseteq U$
%\end{multicols}
\end{enumerate}}

Assume that (1) holds. Then $Y(v)\cup\overline{Y}(v)\subseteq U$ must hold, so the set $U'=(U\setminus Y(v))\cup X(v)$ is also a tier-$2$ sfvs and additionally $X(v)\cup\overline{Y}(v)\subseteq U'$ and $\overline{X}(v)\cup Y\subseteq V(H_{G})\setminus U'$ hold. Since $|Y(v)|=|X(v)|+1$, it follows that $|U'|<|U|$.

Now assume that (2) holds. Then $\overline{Y}(v)\subseteq U$ must hold, so the set $U'=U\setminus Y(v)$ is also a tier-$2$ sfvs and additionally $X(v)\cup\overline{Y}(v)\subseteq U'$ and $\overline{X}(v)\cup Y\subseteq V(H_{G})\setminus U'$ hold. For the case where (3) holds, completely symmetrical arguments yield that $U'=U\setminus\overline{Y}(v)$ is also a tier-$2$ sfvs and additionally $\overline{X}(v)\cup Y\subseteq U'$ and $X(v)\cup\overline{Y}(v)\subseteq V(H_{G})\setminus U'$ hold.

Lastly assume that (4) holds. By Claim \ref{claim:ybound} we have $|(Y(v)\cup\overline{Y}(v))\setminus U|\leq 2n+1$. Without loss of generality, assume that $|\overline{Y}(v)\setminus U|\leq n$. Then the set $U'=(U\setminus\overline{X}(v))\cup(\overline{Y}(v)\cup\overline{E}(v))$ is also a tier-$2$ sfvs and additionally $X(v)\cup\overline{Y}(v)\subseteq U'$ and $\overline{X}(v)\cup Y\subseteq V(H_{G})\setminus U'$ hold. Since $|\overline{X}(v)|=2n>|(\overline{Y}(v)\cup\overline{E}(v))\setminus U|$, it follows that $|U'|<|U|$.

Iteratively applying the arguments stated in the appropriate case above for each $v\in V(G)$, we obtain a tier-$3$ sfvs $|U'|$ such that $|U'|\leq|U|$.
\end{claimproof}

\begin{claim}\label{claim:tieriv}
For every tier-$3$ sfvs $U$, there is a tier-$4$ sfvs $U'$ such that $|U'|\leq|U|$.
\end{claim}
\begin{claimproof}
Let $U$ be a tier-$3$ sfvs. Clearly $(A_{U},\overline{A}_{U})$ is a partition of $V(G)$. Consider a vertex $e\in\overleftrightarrow{E}\setminus\overrightarrow{E}(A_{U},\overline{A}_{U})$. Then $e\in E(v)\cap\overline{E}(v')$ for some $v,v'\in V(G)$ such that $v\notin A_{U}$ or $v'\notin\overline{A}_{U}$, which implies that $X(v)\subseteq V(H_{G})\setminus U$ or $\overline{X}(v')\subseteq V(H_{G})\setminus U$. Since $\mypath{x_{v}^{0},x_{v}^{n},e}$ and $\mypath{\overline{x}_{v}^{0},\overline{x}_{v}^{n},e}$ are $S$-cycles, it follows that $e$ must be in $U$. Now consider a vertex $e\in\overrightarrow{E}(A_{U},\overline{A}_{U})$. Then $e\in E(v)\cap\overline{E}(v')$ for some $v\in A_{U}$ and some $v'\in\overline{A}_{U}$, which implies that $N(e)\cap S=X(v)\cup\overline{X}(v')\subseteq U$. It follows that the set $U'=U\setminus\overrightarrow{E}(A_{U},\overline{A}_{U})$ is a tier-$4$ sfvs.
%Since $N(e)\cap S=X(v)\cup\overline{X}(v')$, it follows that the set $U'=U\setminus\{e\}$ is also a tier-$3$ sfvs.
\end{claimproof}

To conclude our proof, let $U$ be an tier-$4$ sfvs such that $|U|\leq4n^{2}+n+2m-k$ that exists by Claim~\ref{claim:tieriv}.
By definition, for any $v\in V(G)$ exactly one of the following holds:
\begin{itemize}
\item $(X(v)\cup\overline{Y}(v))\setminus U=\emptyset$ and $(\overline{X}(v)\cup Y(v))\cap U=\emptyset$
\item $(\overline{X}(v)\cup{Y}(v))\setminus U=\emptyset$ and $({X}(v)\cup \overline{Y}(v))\cap U=\emptyset$.
\end{itemize}
This yields $|U\cap(X\cup\overline{X}\cup Y\cup\overline{Y}\cup Z)|=4n^{2}+n$, so that $|U\cap\overleftrightarrow{E}|\leq2m-k$.
Therefore we deduce that $|\overrightarrow{E}(A_{U},\overline{A}_{U})|\geq k$, which means that $A_{U}$ provides a desired cut-set in $G$.
\end{proof}

\section{Concluding Remarks}
We provided a systematic and algorithmic study towards the classification of the complexity of \textsc{Subset Feedback Vertex Set} on subclasses of chordal graphs.
We considered the structural parameters of leafage and vertex leafage as natural tools to exploit insights of the corresponding tree representation.
Our proof techniques revealed a fast algorithm for the class of rooted path graphs.
Naturally, it is interesting to settle whether the unweighted \textsc{Subset Feedback Vertex Set} problem is \FPT{} when parameterized by the leafage of a chordal graph.
Towards this direction, it is likely that the unweighted and weighted variants of the problem behave computationally different, as occurs in other cases \cite{abs-2007-14514,PapadopoulosT20}.
We also believe that our NP-hardness proof on undirected path graphs carries along the class of directed path graphs which are the intersection graphs of directed paths taken from an oriented tree (i.e., the underlying undirected graph is a tree).
%% It is natural to settle whether unweighted \textsc{Subset Feedback Vertex Set} is \FPT{} parameterized by the leafage of a chordal graph.
Further, in order to have a more complete picture on the behavior of the problem on subclasses of chordal graphs, strongly chordal graphs seems a candidate family of chordal graphs as they are incomparable to (vertex) leafage.

Moreover it would be interesting to consider the close related problem \textsc{Subset Odd Cycle Transversal} in which the task is to hit all odd $S$-cycles.
Preliminary results indicate that the two problems align on particular hereditary classes of graphs \cite{BJPPwg20,abs-2007-14514}.
As a byproduct, it is notable that all of our results for \textsc{Subset Feedback Vertex Set} are still valid for \textsc{Subset Odd Cycle Transversal}, as any induced cycle is an odd induced cycle (triangle) in chordal graphs.
More generally, an interesting direction for further research along the leafage is to consider induced path problems
that admit complexity dichotomies on interval graphs and split graphs, respectively \cite{BB86,HeggernesHLS15,IoannidouMN11,NatarajanS96}.

\section*{Acknowledgement}
We would like to thank Steven Chaplick for helpful remarks on a preliminary version.

\bibliography{subsetFVS_classes}

\end{document}